\numberwithin{equation}{section}
\newtheorem{Theorem}{Theorem}[section]
\newtheorem{Corollary}[Theorem]{Corollary}
\newtheorem{Proposition}[Theorem]{Proposition}
 { \theoremstyle{definition}
\newtheorem{Definition}[Theorem]{Definition}}
\newcommand*\fillbox[1]{\fbox{\begin{minipage}{.975\linewidth} \addtolength{\jot}{0mm} \abovedisplayskip=.5em\belowdisplayskip=.0em {#1} \end{minipage}}}
\renewcommand*{\big}[1]{\scalebox{1.2}{\ensuremath#1}} % fix scaling that breqn broke
\renewcommand{\d}{\mathrm{d}}
\newcommand{\var}[3]{\frac{\delta_{#1} {#2}}{\delta {#3}}}
\newcommand{\der}[2]{\frac{\partial {#1}}{\partial {#2}}}
\newcommand{\disc}{\mathrm{disc}}
\newcommand{\miwa}{\mathrm{Miwa}}
\newcommand{\C}{\mathbb{C}}
\newcommand{\R}{\mathbb{R}}
\newcommand{\Z}{\mathbb{Z}}
\newcommand{\N}{\mathbb{N}}
\newcommand{\cF}{\mathcal{F}}
\newcommand{\cL}{\mathcal{L}}
\newcommand{\cO}{\mathcal{O}}
\newcommand{\ol}{\overline}
\newcommand{\bn}{\mathbf{n}}
\newcommand{\bt}{\mathbf{t}}
\newcommand{\fe}{\mathfrak{e}}
\newcommand{\fv}{\mathfrak{v}}
\DeclareMathOperator{\Li2}{Li_2}
\newcommand{\parallelo}{
	\begin{tikzpicture}
	\clip (-.2,-.12) rectangle (.2,.2);
	\begin{scope}[cm={.8,0,.8,1,(0,0)}]
	\node[transform shape] at (0,0) {$\blacksquare$};
	\end{scope}
	\end{tikzpicture}
}
\begin{document}
\allowdisplaybreaks

\newcommand{\arXivNumber}{1811.01855}

\renewcommand{\PaperNumber}{044}

\FirstPageHeading

\ShortArticleName{A Variational Perspective on Continuum Limits of ABS and Lattice GD Equations}

\ArticleName{A Variational Perspective on Continuum Limits\\ of ABS and Lattice GD Equations}

\Author{Mats VERMEEREN}

\AuthorNameForHeading{M.~Vermeeren}

\Address{Institut f\"ur Mathematik, MA 7-1, Technische Universit\"at Berlin,\\ Str.~des~17. Juni 136, 10623 Berlin, Germany}
\Email{\href{mailto:vermeeren@math.tu-berlin.de}{vermeeren@math.tu-berlin.de}}
\URLaddress{\url{http://page.math.tu-berlin.de/~vermeer/}}

\ArticleDates{Received November 20, 2018, in final form May 16, 2019; Published online June 03, 2019}

\Abstract{A pluri-Lagrangian structure is an attribute of integrability for lattice equations and for hierarchies of differential equations. It combines the notion of multi-dimensional consistency (in the discrete case) or commutativity of the flows (in the continuous case) with a variational principle. Recently we developed a continuum limit procedure for pluri-Lagrangian systems, which we now apply to most of the ABS list and some members of the lattice Gelfand--Dickey hierarchy. We obtain pluri-Lagrangian structures for many hierarchies of integrable PDEs for which such structures where previously unknown. This includes the Krichever--Novikov hierarchy, the double hierarchy of sine-Gordon and modified KdV equations, and a first example of a continuous multi-component pluri-Lagrangian system.}

\Keywords{continuum limits; pluri-Lagrangian systems; Lagrangian multiforms; multi\-di\-men\-sional consistency}

\Classification{37K10; 39A14}

\section{Introduction}

This paper continues the work started in~\cite{vermeeren2019continuum}, where we established a continuum limit procedure for lattice equations with a \emph{pluri-Lagrangian} (also called \emph{Lagrangian multiform}) structure. Here, we apply this procedure to many more examples. These continuum limits produce known hierarchies of integrable PDEs, but also a pluri-Lagrangian structure for these hierarchies, which in most cases was not previously known.

We start by giving a short introduction to discrete and continuous pluri-Lagrangian systems. Then, in Section~\ref{sec-clim}, we review the continuum limit procedure from~\cite{vermeeren2019continuum}. In Sections~\ref{sec-Q} and~\ref{sec-H} we present examples form the ABS list~\cite{adler2003classification}. In Section~\ref{sec-SG} we extend one of these examples, ABS equation H3, to produce the doubly infinite hierarchy containing the sine-Gordon and modified KdV equations. In Section~\ref{sec-even} we comment on a common feature of all of our continuum limits of ABS equations, namely that half of the continuous independent variables can be disregarded. In Section~\ref{sec-gd} we study two examples from the Gelfand--Dickey hierarchy.

The computations in this paper were performed in the SageMath software system~\cite{sagemath}. The code is available at~\url{https://github.com/mvermeeren/pluri-lagrangian-clim}.

\subsection{The discrete pluri-Lagrangian principle}

Most of the lattice equations we will consider are \emph{quad equations}, difference equations of the form
\begin{gather*} Q(U,U_1,U_2,U_{12},\alpha_1,\alpha_2) = 0, \end{gather*}
where subscripts of the field $U\colon \Z^2 \rightarrow \C$ denote lattice shifts,
\begin{gather*}
U \equiv U(m,n), \qquad U_1 \equiv U(m+1,n), \qquad U_2 \equiv U(m,n+1), \qquad U_{12} \equiv U(m+1,n+1),
\end{gather*}
and $\alpha_i \in \C$ are parameters associated to the lattice directions.

Even though the equations live in $\Z^2$, we require that we can consistently implement them on every square in a higher-dimensional lattice $\Z^N$. This property of \emph{multidimensional consistency} is an important attribute of integrability for lattice equations, see for example~\cite{adler2003classification,boll2014integrability}, or~\cite{hietarinta2016discrete}. A necessary and sufficient condition for multidimensional consistency is that the equation is consistent around the cube:

\begin{figure}[h!]\centering
\includegraphics[scale=0.75]{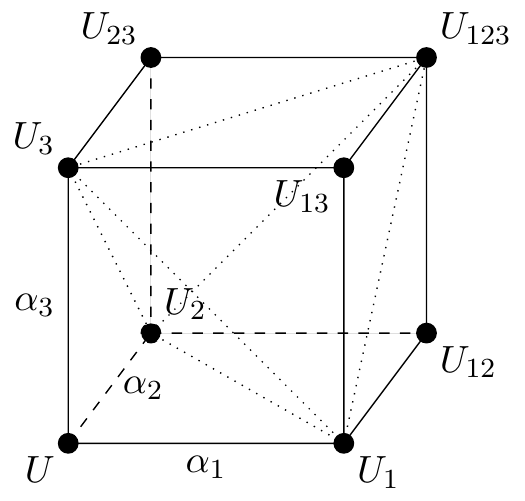}
\caption{A quad equation is consistent around the cube if $U_{123}$ can be uniquely determined from~$U$,~$U_1$, $U_2$ and $U_3$. If in addition $U_{123}$ is independent of~$U$, then the equation satisfies the tetrahedron property.}
\end{figure}

\begin{Definition}Given lattice parameters $\alpha_1$, $\alpha_2$, and $\alpha_3$ and field values $U$, $U_1$, $U_2$, and $U_3$, we can use the equations
\begin{gather*} Q(U,U_i,U_j,U_{ij},\lambda_i,\lambda_j) = 0, \qquad 1 \leq i < j \leq 3\end{gather*}
to determine $U_{12}$, $U_{13}$, and $U_{23}$. Then we can use each of the three equations
\begin{gather*} Q(U_i,U_{ij},U_{ik},U_{ijk},\lambda_j,\lambda_k) = 0, \qquad (i,j,k) \text{ even permutation of } (1,2,3) \end{gather*}
to determine $U_{123}$. If these three values agree for all initial conditions $U$, $U_1$, $U_2$, and $U_3$ and all parameters $\alpha_1$, $\alpha_2$, and $\alpha_3$, then the equation is \emph{consistent around the cube}.
\end{Definition}

Multidimensionally consistent equations on quadrilateral graphs, satisfying certain additional assumptions, were classified by Adler, Bobenko and Suris in~\cite{adler2003classification}. The list of equations they found is widely known as the ABS list. In the same paper Lagrangians were given for each of the equations in the context of the classical variational principle.

We now present the pluri-Lagrangian perspective on these equations, which first appeared in \cite{lobb2009lagrangian} and was explored further in \cite{bobenko2010lagrangian}. Consider the lattice $\Z^N$ with basis vectors $\fe_1,\dots,\fe_N$. To each lattice direction we associate a parameter $\lambda_i \in \C$. We denote an elementary square (a~\emph{quad}) in the lattice by
\begin{gather*} \square_{i,j}(\bn) = \big\{ \bn + \varepsilon_1 \fe_i + \varepsilon_2 \fe_j \,\big|\, \varepsilon_1,\varepsilon_2 \in \{0,1\} \big\} \subset \Z^N, \end{gather*}
where $\bn = (n_1, \dots, n_N)$. Quads are considered to be oriented; interchanging the indices $i$ and $j$ reverses the orientation. We will write $ U(\square_{i,j}(\bn))$ for the quadruple
\begin{gather*} U(\square_{i,j}(\bn)) = \big( U(\bn), U(\bn + \fe_{i}), U(\bn + \fe_{j}), U(\bn + \fe_{i} + \fe_{j}) \big) . \end{gather*}
Occasionally we will also consider the corresponding ``filled-in'' squares in $\R^N$,{\samepage
\begin{gather*} \blacksquare_{i,j}(\bn) = \big\{ \bn + \mu_1 \fe_{i} + \mu_2 \fe_{j} \,\big|\, \mu_1,\mu_2 \in [0,1] \big\} \subset \R^N , \end{gather*}
on which we consider the orientation defined by the basis $(\fe_{i}, \fe_{j})$ of the tangent space.}

The role of a Lagrange function is played by a discrete 2-form
\begin{gather*} L(U(\square_{i,j}(\bn) ), \lambda_{i}, \lambda_{j}), \end{gather*}
which is a function of the values of the field $U\colon \Z^N \rightarrow \C$ on a quad and of the corresponding lattice parameters, satisfying $ L ( U ( \square_{j,i}(\bn)), \lambda_{j}, \lambda_{i} ) = - L( U ( \square_{i,j}(\bn) ), \lambda_{i}, \lambda_{j})$.

\begin{figure}[t]\centering
\includegraphics[scale=0.8]{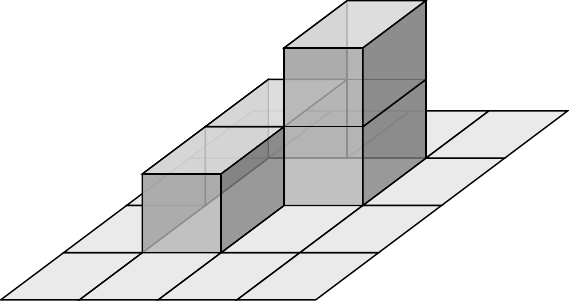}
\caption{Visualization of a discrete $2$-surface in $\mathbb{Z}^3$.}
\end{figure}

Consider a discrete surface $\Gamma = \{\square_\alpha\}$ in the lattice, i.e., a set of quads, such that the union of the corresponding filled-in squares $\bigcup_\alpha \blacksquare_\alpha$ is an oriented topological $2$-manifold (possibly with boundary). The action over $\Gamma$ is given by
\begin{gather*} S_\Gamma = \sum_{\square_{i,j}(\bn) \in \Gamma} L(U(\square_{i,j}(\bn)) ,\lambda_{i}, \lambda_{j}) . \end{gather*}

\begin{Definition}The field $U$ is a solution to the \emph{discrete pluri-Lagrangian problem} if it is a~critical point of $S_\Gamma$ (with respect to variations that are zero on the boundary of $\Gamma$) for all discrete surfaces $\Gamma$ simultaneously.
\end{Definition}

Euler--Lagrange equations in the discrete pluri-Lagrangian setting are obtained by taking point-wise variations of the field on the corners of an elementary cube. Since any other surface can be constructed out of such corners, these variations give us necessary and sufficient conditions.

It is useful to note that exact forms are null Lagrangians.

\begin{Proposition}	\label{prop-exact}	Let $\eta(U,U_i,\lambda_i)$ be a discrete 1-form. Then every field $U\colon \Z^N \rightarrow \C$ is critical for the discrete pluri-Lagrangian $2$-form $L = \Delta \eta$, defined by
	\begin{gather*} L(U,U_i,U_j,U_{ij},\lambda_i,\lambda_j) = \eta(U,U_i,\lambda_i) + \eta(U_i,U_{ij},\lambda_j) - \eta(U_j,U_{ij},\lambda_i) - \eta(U,U_j,\lambda_j) . \end{gather*}
\end{Proposition}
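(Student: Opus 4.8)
The plan is to exploit the fact that $L = \Delta\eta$ is a discrete exact form, so that $\Delta$ behaves as a discrete exterior derivative and a discrete Stokes theorem applies. First I would observe that the four terms in the definition of $L$ are exactly the values of $\eta$ on the four oriented edges of the quad $\square_{i,j}(\bn)$, traversed as a closed loop around its boundary: the edge from $\bn$ to $\bn+\fe_i$ contributes $\eta(U,U_i,\lambda_i)$, the edge from $\bn+\fe_i$ to $\bn+\fe_i+\fe_j$ contributes $\eta(U_i,U_{ij},\lambda_j)$, and the remaining two edges are traversed in the negative direction, contributing $-\eta(U_j,U_{ij},\lambda_i)$ (the edge in the $\fe_i$ direction based at $\bn+\fe_j$) and $-\eta(U,U_j,\lambda_j)$ (the edge in the $\fe_j$ direction based at $\bn$). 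In other words, $L$ is the signed sum of $\eta$ over the oriented boundary $\partial\square_{i,j}(\bn)$, with the loop $\bn \to \bn+\fe_i \to \bn+\fe_i+\fe_j \to \bn+\fe_j \to \bn$.

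Next I would sum over all quads of a discrete surface $\Gamma$. Because the filled-in squares of $\Gamma$ form an oriented $2$-manifold, any edge shared by two quads of $\Gamma$ inherits opposite orientations from the two quads, so the corresponding two copies of $\eta$ cancel. This is the discrete Stokes theorem: all interior edges cancel in pairs and one is left with
\[ S_\Gamma = \sum_{\square_{i,j}(\bn)\in\Gamma} L(U(\square_{i,j}(\bn)),\lambda_i,\lambda_j) = \sum_{e\in\partial\Gamma} \eta(e), \]
where $\eta(e)$ denotes $\eta$ evaluated on the oriented boundary edge $e$, so that only field values sitting on $\partial\Gamma$ enter the right-hand side.

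Then criticality is immediate. An admissible variation is required to vanish on $\partial\Gamma$, so it leaves every term of $\sum_{e\in\partial\Gamma}\eta(e)$ unchanged; hence $S_\Gamma$ is invariant under such variations and its first variation vanishes. Equivalently, in terms of the corner equations mentioned above, for any interior vertex $\bn$ the value $U(\bn)$ does not appear in $\sum_{e\in\partial\Gamma}\eta(e)$ at all, so $\partial S_\Gamma/\partial U(\bn) = 0$ identically. Since this holds for every surface $\Gamma$ simultaneously, $U$ is critical, which is the assertion of the proposition.

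The only real care needed is in the orientation bookkeeping of the second step: I must check that ``oriented $2$-manifold'' indeed forces two quads meeting along a common edge to induce opposite orientations on that edge, so that the cancellation is exact rather than merely up to sign. This is where the skew-symmetry $L(U(\square_{j,i}(\bn)),\lambda_j,\lambda_i) = -L(U(\square_{i,j}(\bn)),\lambda_i,\lambda_j)$ of the $2$-form, together with the compatible orientation of $\Gamma$, does the work; everything else is a telescoping identity requiring no computation.
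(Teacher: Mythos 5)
Your proposal is correct and is essentially the paper's own proof: the paper argues in one line that the action sum of $L = \Delta\eta$ depends only on the values of $U$ on the boundary of $\Gamma$, so any variation vanishing there leaves the action invariant. You have simply made explicit the telescoping cancellation of interior edges and the orientation bookkeeping that the paper leaves implicit, and your handling of both is sound.
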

\begin{proof}	The action sum of $L = \Delta \eta$ over a discrete surface $\Gamma$ depends only on the values of~$U$ on the boundary of $\Gamma$. Hence any variation that is zero on the boundary leaves the action invariant.
\end{proof}

For the details of the discrete pluri-Lagrangian theory we refer to the groundbreaking paper~\cite{lobb2009lagrangian}, which introduced the pluri-Lagrangian (or Lagrangian multiform) idea, and to the reviews~\cite{boll2014integrability} and~\cite[Chapter 12]{hietarinta2016discrete}, and the references therein.

\subsection{The continuous pluri-Lagrangian principle}

Continuous pluri-Lagrangian systems are defined analogous to their discrete counterparts. Let $\cL[u]$ be a $2$-form in $\R^N$, depending on a field $u\colon \R^N \rightarrow \C$ and any number of its derivatives.

\begin{Definition}The field $u$ solves the \emph{continuous pluri-Lagrangian problem} for~$\cL$ if for any embedded surface $\Gamma \subset \R^n$ and any variation $\delta u$ which vanishes near the boundary $\partial \Gamma$, there holds
\begin{gather*} \delta \int_\Gamma \cL = 0. \end{gather*}
\end{Definition}

The first question about continuous pluri-Lagrangian systems is to find a set of equations characterizing criticality in the pluri-Lagrangian sense. If $\cL = \sum\limits_{i < j} \cL_{ij}[u] \,\d t_i \wedge \d t_j$, then these equations are
\begin{alignat}{3}
&\label{2el1} \var{ij}{\cL_{ij}}{u_I} = 0 \qquad && \forall I \not\ni t_i,t_j ,& \\
&\label{2el2} \var{ij}{\cL_{ij}}{u_{It_j}} - \var{ik}{\cL_{ik}}{u_{It_k}} = 0 \qquad && \forall I \not\ni t_i ,& \\
&\label{2el3} \var{ij}{\cL_{ij}}{u_{It_it_j}} + \var{jk}{\cL_{jk}}{u_{It_jt_k}} + \var{ki}{\cL_{ki}}{u_{It_kt_i}}= 0\qquad && \forall I ,&
\end{alignat}
for $i$, $j$, and $k$ distinct, where
\begin{gather*} \var{ij}{}{u_I} = \sum_{\alpha,\beta = 0}^\infty (-1)^{\alpha+\beta} \frac{\d^\alpha}{\d t_i^\alpha} \frac{\d^\beta}{\d t_j^\beta} \der{}{u_{I t_i^\alpha t_j^\beta}}, \end{gather*}
and $u_I$ denotes a partial derivative corresponding to the multi-index $I$. We write that $I \ni t_k$ if the entry in $I$ corresponding to $t_k$ is nonzero, i.e., if at least one derivative is taken with respect to $t_k$, and $I \not\ni t_k$ otherwise. Compared to $u_I$, the field $u_{I t_i^\alpha t_j^\beta}$ has $\alpha$ additional derivatives with respect to $t_i$ and $\beta$ additional derivatives with respect to $t_j$.

We call equations \eqref{2el1}--\eqref{2el3} the \emph{multi-time Euler--Lagrange equations}. They were derived in~\cite{suris2016lagrangian}.

\section{Continuum limit procedure}\label{sec-clim}

In this section we review the essentials of the continuum limit procedure for multidimensionally consistent lattice equations and their pluri-Lagrangian structure, presented in~\cite{vermeeren2019continuum}. The continuum limit on the level of the equations has its roots in \cite{miwa1982hirota, wiersma1987lattice}. On the level of the Lagrangian structure, a significant precursor is \cite{yoo2011discrete}, in which the continuum limit of a Lagrangian 1-form system is presented.

\subsection{Miwa variables}\label{sec-miwa}

We construct a map from the lattice $\Z^N(n_1, \dots,n_N)$ to the continuous multi-time $\R^N(t_1,\dots,t_N)$ as follows. We associate a parameter $\lambda_i$ with each lattice direction and set
\begin{gather*} t_i = (-1)^{i+1} \left( n_1 \frac{\lambda_1^i}{i} + \dots + n_N \frac{\lambda_N^i}{i} \right). \end{gather*}
Note that a single step in the lattice (changing one $n_j$) affects all the times $t_i$, hence we are dealing with a very skew embedding of the lattice. We will also consider a slightly more general correspondence,
\begin{gather}\label{miwa}
t_i = (-1)^{i+1} \left(n_1 \frac{c \lambda_1^i}{i} + \dots + n_N \frac{c \lambda_N^i}{i} \right) + \tau_i,
\end{gather}
for constants $c, \tau_1, \dots, \tau_N$ describing a scaling and a shift of the lattice. The variables $n_j$ and~$\lambda_j$ are known in the literature as Miwa variables and have their origin in \cite{miwa1982hirota}. We will call equation~\eqref{miwa} the \emph{Miwa correspondence}.

We denote the shift of $U$ in the $i$-th lattice direction by $U_i$. If $U(\bn) = u(t_1,\dots,t_N)$, It is given by
\begin{gather*}
U_i = U(\bn + \fe_i) =u \left( t_1 + c \lambda_i, t_2 - \frac{c \lambda_i^2}{2}, \dots, t_n - (-1)^N \frac{c \lambda_i^N}{N} \right), \end{gather*}
which we can expand as a power series in $\lambda_i$. The difference equation thus turns into a power series in the lattice parameters. If all goes well, its coefficients will define differential equations that form an integrable hierarchy. Generically however, we will find only $t_1$-derivatives in the leading order, because only in the $t_1$-coordinate do the parameters $\lambda_i$ enter linearly. To get a hierarchy of PDEs, we need to have some leading order cancellation, such that the first nontrivial equation contains derivatives with respect to several times. Given an integrable difference equation, it is a nontrivial task to find an equivalent equation that yields the required leading order cancellation.

Note that such a procedure is strictly speaking not a continuum \emph{limit}: sending $\lambda_i \rightarrow 0$ would only leave the leading order term of the power series. A more precise formulation is that the continuous $u$ interpolates the discrete $U$ for sufficiently small values of $\lambda_i$, where $U$ is defined on a mesh that is embedded in $\R^N$ using the Miwa correspondence. Since $\lambda_i$ is still assumed to be small, it makes sense to think of the outcome as a limit, but it is important to keep in mind that higher order terms should not be disregarded.

\subsection{Continuum limits of Lagrangian forms}

We sketch the limit procedure for pluri-Lagrangian structures introduced in \cite{vermeeren2019continuum}. Consider $N$ pairwise distinct lattice parameters $\lambda_1, \dots, \lambda_N$ and denote by $\fe_1, \dots, \fe_N$ the unit vectors in the lattice $\mathbb{Z}^N$. The differential of the Miwa correspondence maps them to linearly independent vectors in $\R^N$:
\begin{gather*} \fe_i \mapsto \fv_i= \left( c \lambda_i, - \frac{c \lambda_i^2}{2}, \dots , (-1)^{N+1} \frac{c \lambda_i^N}{N} \right) \qquad \text{for } i = 1, \dots, N. \end{gather*}

We start from a discrete Lagrangian two-form $L_\disc$. A discrete field can be recovered from a~continuous field by evaluating it at $u(\bt), u(\bt + \fv_i), u(\bt + \fv_i + \fv_j), \cdots$. We define
\begin{gather*}
\cL_\disc([u], \lambda_1,\lambda_2)= L_\disc\bigg( u, \, u + \partial_1 u + \frac{1}{2} \partial_1^2 u + \cdots,\,
 u + \partial_2 u + \frac{1}{2} \partial_2^2 u + \cdots, \\
\hphantom{\cL_\disc([u], \lambda_1,\lambda_2)= L_\disc\bigg(}{} u + \partial_1 u + \partial_2 u + \frac{1}{2} \partial_1^2 u + \partial_1 \partial_2 u + \frac{1}{2} \partial_2^2 u + \cdots,\ \lambda_1,\lambda_2 \bigg) ,
\end{gather*}
where the differential operators correspond to the lattice directions under the Miwa correspondence,
\begin{gather*} \partial_k = \sum_{j=1}^N (-1)^{j+1} \frac{c \lambda_k^j}{j} \frac{\d}{\d t_j}. \end{gather*}
As the subscript indicates, $\cL_\disc$ is not yet a continuous Lagrangian, because the action is a sum over evaluations of $\cL_\disc$ on a corner of each quad of the surface. Using the Euler--Maclaurin formula we can turn the action sum into an integral of
\begin{gather*}
\cL_\miwa([u],\lambda_1,\lambda_2)
 = \sum_{i,j=0}^\infty \frac{B_{i} B_{j}}{i! j!} \partial_1^{i} \partial_2^{j} \cL_\disc([u],\lambda_1,\lambda_2) ,
\end{gather*}
which is a formal power series in $\lambda_1$ and $\lambda_2$. Then by construction we have the formal equality
\begin{gather*} L_\disc(U(\square_{12}(\bn)),\lambda_1,\lambda_2) = \int_{\parallelo_{12}} \cL_\miwa([u],\lambda_1,\lambda_2) \,\eta_1 \wedge \eta_2 , \end{gather*}
where $(\eta_1, \dots, \eta_N)$ are the 1-forms dual to the Miwa shifts $(\fv_1,\dots,\fv_N)$ and $\parallelo_{12}$ is the embedding under the Miwa correspondence of the filled-in square. Since we want such a relation on arbitrarily oriented quads, we consider the continuous 2-form
\begin{gather*} \cL = \sum_{1 \leq i < j \leq N} \cL_\miwa([u],\lambda_i,\lambda_j) \,\eta_i \wedge \eta_j .\end{gather*}
This is the continuum limit of the discrete pluri-Lagrangian 2-form, but it can be written in a~more convenient form in terms of the coefficients of the power series $\cL_\miwa$.

\begin{Theorem}[\cite{vermeeren2019continuum}]\label{thm-Lmiwa}Let $L_\disc$ be a~discrete Lagrangian $2$-form, such that every term in the corresponding power series $\cL_\miwa$ is of strictly positive degree in both $\lambda_i$, i.e., such that $\cL_\miwa$ is of the form
\begin{gather*}%\label{Lmiwa-expansion}
\cL_\miwa([u],\lambda_1,\lambda_2) = \sum_{i,j=1}^\infty (-1)^{i+j} c^2 \frac{\lambda_1^i}{i} \frac{\lambda_2^j}{j} \cL_{i,j}[u].
\end{gather*}
Then the differential $2$-form
\begin{gather*} \cL = \sum_{1 \leq i < j \leq N} \cL_{i,j}[u] \, \d t_i \wedge \d t_j \end{gather*}
is a pluri-Lagrangian structure for the continuum limit hierarchy, restricted to $\R^N$.
\end{Theorem}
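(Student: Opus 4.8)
The plan is to deduce that $u$ solves the continuous pluri-Lagrangian problem from the discrete pluri-Lagrangian property of the Miwa-interpolated field $U$, using the action equality built into the construction, and to recognise the outcome as the multi-time Euler--Lagrange equations \eqref{2el1}--\eqref{2el3} for $\cL$. First I would record an action equality at the level of surfaces: summing the construction's formal identity $L_\disc(U(\square_{i,j}(\bn)),\lambda_i,\lambda_j)=\int_{\parallelo_{ij}}\cL_\miwa([u],\lambda_i,\lambda_j)\,\eta_i\wedge\eta_j$ over the quads of a discrete surface $\Gamma$, and using that the filled-in parallelograms tile the Miwa embedding of $\Gamma$, gives $S_\Gamma[U]=\int_{\Gamma}\sum_{i<j}\cL_\miwa(\lambda_i,\lambda_j)\,\eta_i\wedge\eta_j$, the integral being taken over the embedded surface. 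Thus the discrete action of $U$ coincides with the continuous action of $\cL$ in its $\eta$-representation.

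Next I would make the change of representation explicit, which is what ``more convenient form'' refers to. Using the duality $\eta_a(\fv_b)=\delta_{ab}$ together with $\d t_k(\fv_a)=(-1)^{k+1}c\lambda_a^k/k$, I would evaluate both $\sum_{i<j}\cL_\miwa(\lambda_i,\lambda_j)\,\eta_i\wedge\eta_j$ and $\sum_{i<j}\cL_{i,j}\,\d t_i\wedge\d t_j$ on pairs of Miwa vectors $(\fv_p,\fv_q)$. The hypothesis that every term of $\cL_\miwa$ has strictly positive degree in each $\lambda$ is precisely what lets the power series be reorganised into the coordinate forms $\d t_k$ with $k\ge 1$; combined with the antisymmetry $\cL_{i,j}=-\cL_{j,i}$ inherited from $L_\disc$, matching coefficients shows the two representations agree on $\R^N$ up to terms carrying a factor $\d t_k$ with $k>N$, which vanish upon restriction. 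This identifies the coefficients of $\cL$ in the $\d t_i\wedge\d t_j$ basis with the series coefficients $\cL_{i,j}$.

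I would then transfer criticality. A continuous variation $\delta u$ vanishing near $\partial\Gamma$ induces the discrete variation $\delta U(\bn)=\delta u(\bt(\bn))$, which is again admissible; since the discrete problem is critical under \emph{all} point-wise variations, and the induced ones form a subclass, the action equality yields $\delta\int_\Gamma\cL=0$ for every Miwa-embedded surface. To upgrade this to arbitrary surfaces I would let the parameters $\lambda_1,\dots,\lambda_N$ vary: for each choice the Miwa-sampled field solves the difference equation and is hence critical, so $u$ is critical on the whole parameter-dependent family of embedded surfaces. Because criticality on all surfaces is characterised locally by \eqref{2el1}--\eqref{2el3}, it then remains to show this family is rich enough to force each of those equations; equivalently, expanding the discrete corner equations in powers of the $\lambda_i$ reproduces \eqref{2el1}--\eqref{2el3} order by order.

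The main obstacle is this last step: proving that criticality on the skew, parameter-dependent family of Miwa surfaces is equivalent to the \emph{full} system \eqref{2el1}--\eqref{2el3} and not merely a proper subset of it. This means disentangling the power-series coefficients by exploiting the freedom in the distinct parameters $\lambda_i$ (a Vandermonde-type independence), verifying that no multi-time Euler--Lagrange equation is left unconstrained, and justifying the order-by-order manipulations at the level of formal series so that the Euler--Maclaurin expansion and the coefficient matching are legitimate.
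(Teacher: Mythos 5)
A preliminary remark: this paper does not prove Theorem~\ref{thm-Lmiwa} at all --- it is quoted from \cite{vermeeren2019continuum}, and Section~\ref{sec-clim} only records the construction --- so the comparison can only be against that construction and the cited paper's strategy. Measured against these, your first two paragraphs are correct and essentially reproduce the intended setup: the Euler--Maclaurin identity $L_\disc(U(\square_{i,j}(\bn)),\lambda_i,\lambda_j)=\int\cL_\miwa([u],\lambda_i,\lambda_j)\,\eta_i\wedge\eta_j$ does sum to an equality of actions over Miwa-embedded surfaces, and the change of basis from $\eta_i\wedge\eta_j$ to $\d t_i\wedge\d t_j$ works exactly as you describe, using $\d t_k(\fv_a)=(-1)^{k+1}c\lambda_a^k/k$, the antisymmetry $\cL_{i,j}=-\cL_{j,i}$ inherited from the orientation-antisymmetry of $L_\disc$, and the strict-positivity hypothesis; the leftover terms of degree greater than $N$ in either parameter are exactly what the phrase ``restricted to $\R^N$'' truncates.

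The genuine gap is in your third paragraph, at the pivot ``for each choice the Miwa-sampled field solves the difference equation and is hence critical.'' This is false in the exact sense your subclass argument needs: as stressed at the end of Section~\ref{sec-miwa}, the continuum limit is not a true limit, and a solution $u$ of the limit hierarchy interpolates the lattice equation only as a formal power series in the parameters --- so $U(\bn)=u(\bt(\bn))$ is \emph{not} an exact critical point of the discrete action, and discrete pluri-Lagrangian criticality cannot be invoked wholesale. The transfer has to be carried out coefficient by coefficient in the formal series, which is also the only setting in which your Vandermonde-type separation of orders is meaningful. Two further points escape the argument as stated: the limit hierarchy a priori involves all times $t_k$, $k\in\N$, so restricting to $\R^N$ requires controlling derivatives with respect to the discarded times (the alien-derivative issue addressed by Theorem~\ref{thm-alien}, invisible to a subclass-of-variations argument); and the concluding step --- that stationarity on the skew parallelogram family forces the full system \eqref{2el1}--\eqref{2el3} rather than a proper subset --- is precisely the substance of the theorem, which you yourself flag as ``the main obstacle'' and leave unproven. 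As it stands, the proposal is a correct plan for the reformulation half and an honest to-do list for the criticality half; the paper's own remark immediately after the theorem, that criticality of the limit 2-form was additionally verified by direct calculation in every example, is a hint that this unexecuted step is where the real work lies in \cite{vermeeren2019continuum}.
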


In all examples presented below it was also verified by direct calculation that solutions to the continuum limit equations are indeed critical fields for the corresponding pluri-Lagrangian structure.

Closely related to our definition of a discrete pluri-Lagrangian system is the property that on solutions
\begin{gather}
L_\disc(U(\square_{i,j}(\bn + \fe_k) ), \lambda_{i}, \lambda_{j})- L_\disc(U(\square_{i,j}(\bn) ), \lambda_{i}, \lambda_{j}) \nonumber\\
\qquad{} + L_\disc(U(\square_{j,k}(\bn + \fe_i) ), \lambda_{j}, \lambda_{k})- L_\disc(U(\square_{j,k}(\bn) ), \lambda_{j}, \lambda_{k}) \nonumber\\
\qquad{} + L_\disc(U(\square_{k,i}(\bn + \fe_j) ), \lambda_{k}, \lambda_{i}) - L_\disc(U(\square_{k,i}(\bn) ), \lambda_{k}, \lambda_{i}) = 0,\label{closed}
\end{gather}
which implies that the action sum over any closed surface is zero. Some authors argue that this is the fundamental property of the Lagrangian theory of multi-dimensionally consistent equations \cite{hietarinta2016discrete, lobb2009lagrangian, lobb2010lagrangian, yoo2011discrete}. (The authors that take this perspective mostly use the term ``Lagrangian multiform'' in place of our ``pluri-Lagrangian''.) In the continuum limit, this turns into the property that the action integral vanishes on closed surfaces, when evaluated on solutions. In other words, equation~\eqref{closed} implies that the 2-form $\cL$ found in the continuum limit is closed on solutions of the limit hierarchy.

\subsection{Eliminating alien derivatives}

Suppose a pluri-Lagrangian 2-form in $\R^N$ produces multi-time Euler--Lagrange equations of evolutionary type,
\begin{gather*} u_{t_k} = f_k[u] \qquad \text{for } k \in \{2,3,\dots,N\} . \end{gather*}
If this is the case, then the differential consequences of the multi-time Euler--Lagrange equations can be written in a similar form,
\begin{gather}\label{fullel}
	u_I = f_I[u] \qquad \text{with $I \ni t_k$ for some } k \in \{2,3,\dots,N\},
\end{gather}
where the multi-index $I$ in $f_I$ is a label, not a partial derivative. In this context it is natural to consider $t_1$ as a space coordinate and the others as time coordinates. If the multi-time Euler--Lagrange equations are not evolutionary, equation~\eqref{fullel} still holds for a reduced set of multi-indices $I$.

\begin{Definition}A mixed partial derivative $u_I$ is called \emph{$\{i,j\}$-native} if each individual derivative is taken with respect to $t_i$, $t_j$ or the space coordinate $t_1$, i.e., if
\begin{gather*} I \ni t_k \ \Rightarrow \ k \in \{ 1, i, j \} .\end{gather*}
If $u_I$ is not $\{i,j\}$-native, i.e., if there is a $k \not\in \{ 1,i,j\}$ such that $t_k \in I$, then we say $u_I$ is \emph{$\{i,j\}$-alien}.
	
If it is clear what the relevant indices are, for example when discussing a coefficient $\cL_{i,j}$, we will use \emph{native} and \emph{alien} without mentioning the indices.
\end{Definition}

We would like the coefficient $\cL_{i,j}$ to contain only native derivatives. A naive approach would be to use the multi-time Euler--Lagrange equations \eqref{fullel} to eliminate alien derivatives. Let~$R_{i,j}$ denote the operator that replaces all $\{i,j\}$-alien derivatives for which the multi-time Euler--Lagrange equations provide an expression. We denote the resulting pluri-Lagrangian coefficients by $\ol{\cL}_{i,j} = R_{i,j}(\cL_{i,j})$ and the 2-form with these coefficients by~$\ol{\cL}$. A~priori there is no reason to believe that the 2-form $\ol{\cL}$ will be equivalent to the original pluri-Lagrangian 2-form $\cL$, but using the multi-time Euler--Lagrange equations one can derive the following result~\cite{vermeeren2019continuum}.

\begin{Theorem}\label{thm-alien}If for all $j$ the coefficient $\cL_{1j}$ does not contain any alien derivatives, then every critical field $u$ for the pluri-Lagrangian $2$-form $\cL$ is also critical for $\ol{\cL}$.
\end{Theorem}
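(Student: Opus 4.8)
The plan is to show that the two action functionals $\int_\Gamma \cL$ and $\int_\Gamma \ol{\cL}$ differ, on any admissible surface $\Gamma$ and along any variation supported away from $\partial\Gamma$, by a quantity whose first variation vanishes. Since the coefficients $\ol{\cL}_{i,j}$ are obtained from $\cL_{i,j}$ by substituting the multi-time Euler--Lagrange equations \eqref{fullel} into alien derivatives, the difference $\cL_{i,j} - \ol{\cL}_{i,j}$ is built entirely out of expressions of the form $u_I - f_I[u]$. These vanish on solutions, but we need more: we need the \emph{variational derivative} of the difference of actions to vanish on solutions. The strategy is therefore to track how the replacement operator $R_{i,j}$ interacts with the variational structure encoded in \eqref{2el1}--\eqref{2el3}.

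\smallskip

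First I would fix a pair of distinct indices and write the difference $\cL_{i,j} - \ol{\cL}_{i,j}$ as a sum of terms each carrying a factor $(u_I - f_I[u])$ for some alien multi-index $I$ (with $I \ni t_k$, $k \notin \{1,i,j\}$). Next, because criticality in the pluri-Lagrangian sense is tested on arbitrary surfaces, I would reduce to understanding the contribution of each coefficient $\cL_{i,j}$ to the three families of multi-time Euler--Lagrange equations \eqref{2el1}--\eqref{2el3}. The crucial structural input is the hypothesis that $\cL_{1j}$ contains no alien derivatives for every $j$: this means the coefficients involving the space direction $t_1$ are already native, so the replacement $R_{i,j}$ only touches the genuinely ``transversal'' coefficients $\cL_{i,j}$ with $1 \notin \{i,j\}$. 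I would then show that, for such a coefficient, applying $\var{ij}{}{u_I}$ to the substituted term $f_I[u]$ reproduces, modulo the equations of motion, the same variational derivative one obtains from the unsubstituted term. Concretely, differentiating $(u_I - f_I[u])$ against a test derivative and using that $u_I = f_I[u]$ holds together with all its differential consequences should make the correction terms drop out order by order.

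\smallskip

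The technical engine here is the interplay between the variational derivatives $\var{ij}{}{u_I}$ and the total derivative operators $\frac{\d}{\d t_k}$. When I substitute $f_I[u]$ for an alien $u_I$, any subsequent variation of $f_I$ produces terms that, after integration by parts (the alternating signs in $\var{ij}{}{u_I}$ encode exactly these), regroup into expressions proportional to the Euler--Lagrange equations themselves. The role of the hypothesis on $\cL_{1j}$ is to guarantee that no alien derivative in the $t_1$-direction obstructs this integration by parts: since $t_1$ is the distinguished space variable and the relevant equations \eqref{fullel} are evolutionary in the remaining times, every alien derivative can be rewritten purely in terms of native data without introducing new alien derivatives in $\cL_{1j}$. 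This lets me conclude that $\delta \int_\Gamma (\cL - \ol{\cL}) = 0$ for every solution $u$ and every admissible $\Gamma$.

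\smallskip

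\textbf{The main obstacle} I expect is controlling the commutation of $R_{i,j}$ with the variational derivatives across the \emph{coupled} equations \eqref{2el2} and \eqref{2el3}, which mix coefficients $\cL_{ij}$, $\cL_{ik}$, $\cL_{jk}$ living on different 2-faces. A substitution performed in $\cL_{ij}$ may introduce a $t_k$-derivative whose variational effect shows up in the equation that also constrains $\cL_{ik}$ or $\cL_{jk}$; ensuring these cross-terms cancel consistently, rather than merely within a single coefficient, is the delicate point. The saving grace is that the alien derivatives being eliminated are by definition those indexed by some $t_k$ outside $\{1,i,j\}$, so the replacements in $\cL_{ij}$ feed only into equations already governed by the evolutionary system \eqref{fullel}; invoking that system together with its differential consequences should close the argument, but verifying the cancellation in \eqref{2el3} is where the bookkeeping is heaviest.
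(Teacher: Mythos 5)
The first thing to note is that this paper does not prove Theorem~\ref{thm-alien} at all: it is quoted from the reference \cite{vermeeren2019continuum}, where the actual argument lives, so your proposal can only be judged on its own merits. Judged that way, it has a genuine gap at its central step. You reduce everything to the claim that $\delta \int_\Gamma (\cL - \ol{\cL}) = 0$ on solutions because each difference $\cL_{i,j} - \ol{\cL}_{i,j}$ is built from factors $(u_I - f_I[u])$, i.e., that substituting the equations of motion into a Lagrangian coefficient does not change its variational derivatives modulo the equations of motion. In the form stated, this is false: a term with a \emph{simple} zero on the equations of motion does not have vanishing first variation there; only a double zero does. (This is precisely why the paper, directly after the theorem, singles out ``terms that attain a double zero on solutions'' as the modifications that are harmless.) Concretely, if an alien derivative enters linearly, $\cL_{ij} = A[u]\, u_I$, then $\cL_{ij} - \ol{\cL}_{ij} = A[u]\,(u_I - f_I[u])$ and
\begin{gather*}
\var{ij}{\big(\cL_{ij} - \ol{\cL}_{ij}\big)}{u_I} = A[u] + \big(\text{terms coming from } f_I\big),
\end{gather*}
which does not vanish on solutions. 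So the coefficientwise invariance of variational derivatives that your plan rests on cannot be the mechanism of the proof.

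What actually has to be shown is weaker and more structural: the multi-time Euler--Lagrange expressions of $\ol{\cL}$ are, on solutions, linear combinations of those of $\cL$ --- the correction terms produced by the substitution $u_I \mapsto f_I$ are, via a chain-rule computation with coefficients built from $\partial f_I / \partial u_J$ and total derivatives, proportional to Euler--Lagrange expressions of $\cL$ with respect to \emph{other} multi-indices, including the coupled equations \eqref{2el2} and \eqref{2el3} that mix $\cL_{ij}$, $\cL_{ik}$, $\cL_{jk}$. The hypothesis on $\cL_{1j}$ enters here concretely: it gives $\ol{\cL}_{1j} = \cL_{1j}$, so the equations generating the system \eqref{fullel} are untouched, and variational derivatives of $\cL_{1j}$ in alien directions vanish identically, which is what closes the cross-term cancellation. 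You correctly identify this cross-term bookkeeping as the ``main obstacle,'' but the proposal stops exactly there --- no chain-rule identity is derived, so the heart of the theorem is asserted rather than proved. A smaller inaccuracy: your integration-by-parts argument presupposes that \eqref{fullel} is evolutionary in all non-space times, whereas the theorem must cover the case where \eqref{fullel} holds only for a reduced set of multi-indices and $R_{i,j}$ replaces only those alien derivatives for which an equation is available (in the Gelfand--Dickey examples of Section~\ref{sec-gd}, derivatives such as $v_2$, $x_2$ cannot be eliminated).
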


In practice, given a Lagrangian 2-form, we can make it fulfill the conditions on the $\cL_{1j}$ by adding a suitable exact form and adding terms that attain a double zero on solutions to the multi-time Euler--Lagrange equations. Neither of these actions affects the multi-time Euler--Lagrange equations.

\section{ABS equations of type Q}\label{sec-Q}

All equations of type Q can be prepared for the continuum limit in the same way, based on their particularly symmetric three leg form. In Section \ref{sec-Q-3-leg} we present this general strategy. In Sections \ref{sec-Q10}--\ref{sec-Q4} we will discuss each individual equation of type Q.

\subsection{Three leg forms and Lagrangians}\label{sec-Q-3-leg}

All quad equations $Q(V,V_1,V_2,V_{12},\lambda_1,\lambda_2) = 0$ from the ABS list have a three leg form:
\begin{gather*} Q(V,V_1,V_2,V_{12},\lambda_1,\lambda_2) = \Psi\big(V,V_1,\lambda_1^2\big) - \Psi\big(V,V_2,\lambda_2^2\big) - \Phi\big(V,V_{12},\lambda_1^2-\lambda_2^2\big).\end{gather*}
(Usually they are written in terms of the parameters $\alpha_i = \lambda_i^2$.)
For the equations of type $Q$, the function $\Phi$ on the long (diagonal) leg is the same as the function $\Psi$ on the short legs:
\begin{gather*} Q(V,V_1,V_2,V_{12},\lambda_1,\lambda_2) = \Psi\big(V,V_1,\lambda_1^2\big) - \Psi\big(V,V_2,\lambda_2^2\big) - \Psi\big(V,V_{12},\lambda_1^2-\lambda_2^2\big). \end{gather*}
Suitable leg functions $\Psi$ were listed in~\cite{bobenko2010lagrangian}. For the purposes of a continuum limit, it is useful to reverse one of the time directions, i.e., to consider
\begin{gather*} Q(V,V_{-1},V_2,V_{-1,2},\lambda_1,\lambda_2) = \Psi\big(V,V_{-1},\lambda_1^2\big) - \Psi\big(V,V_2,\lambda_2^2\big) - \Psi\big(V,V_{-1,2},\lambda_1^2-\lambda_2^2\big), \end{gather*}
and to write the $\Psi$ in terms of difference quotients. We will introduce a function
\begin{gather*} \psi(v,v',\lambda,\mu) = \psi_1(v,\lambda,\mu) + \psi_2 (v',\lambda,\mu) \end{gather*}
 of the continuous variables, from which we can recover $\Psi(V,W,\lambda\mu)$ by plugging in suitable approximations to $v$ and $v'$. Note that $\Psi$ takes only one parameter, which is the product of the two parameters of $\psi$. For all of the ABS equations we will use $c = -2$ in the Miwa correspondence, which means that the derivative $v_{t_1}$ is approximated by difference quotients such as $\frac{V_{-1} - V}{2 \lambda_1}$ and $\frac{V - V_2}{2 \lambda_2}$. We identify
\begin{gather*} \Psi(V,W,\lambda \mu) = \psi \left( \frac{V+W}{2} ,\frac{V-W}{2 \lambda},\lambda,\mu \right) . \end{gather*}

All equations of the Q-list can be written in the form
\begin{gather}
Q(V,V_{-1},V_2,V_{-1,2},\lambda_1,\lambda_2) = \psi \left( \frac{V+V_{-1}}{2} ,\frac{V-V_{-1}}{2 \lambda_1},\lambda_1,\lambda_1 \right) \nonumber \\
\hphantom{Q(V,V_{-1},V_2,V_{-1,2},\lambda_1,\lambda_2) =}{} - \psi\left( \frac{V+V_2}{2} ,\frac{V-V_2}{2 \lambda_2},\lambda_2,\lambda_2 \right)\nonumber \\
\hphantom{Q(V,V_{-1},V_2,V_{-1,2},\lambda_1,\lambda_2) =}{} - \psi \left( \frac{V+V_{-1,2}}{2},\frac{V-V_{-1,2}}{2 (\lambda_1 - \lambda_2)},\lambda_1-\lambda_2,\lambda_1+\lambda_2 \right) .\label{Q-3-leg}
\end{gather}
As suggested by the $D_4$-symmetry of a quad equation, in particular by
\begin{gather*} Q(V,V_{-1},V_2,V_{-1,2},\lambda_1,\lambda_2) = - Q(V,V_2,V_{-1},V_{-1,2},\lambda_2,\lambda_1) ,\end{gather*}
we require that
\begin{gather*}
\psi_1(v,-\lambda,\mu) = -\psi_1(v,\lambda,\mu), \qquad
\psi_2 (-v',-\lambda,\mu) = -\psi_2 (v',\lambda,\mu).
\end{gather*}
Furthermore, we require that
\begin{gather*}\psi_2 (v',-\lambda,\mu) = \psi_2 (v',\lambda,\mu)\end{gather*}
and $\psi(v,v',0,0) = 0$. As we will see below, all ABS equations of type Q have a three-leg form that satisfies these conditions.

We would expect the first nonzero terms in the series expansion at first order in $\lambda_1$, $\lambda_2$, but using the symmetry of $\psi$ one can check that
\begin{gather*} Q(V,V_{-1},V_2,V_{-1,2},\lambda_1,\lambda_2) = \cO \big( \lambda_1^2 + \lambda_2^2 \big) . \end{gather*}
This is the leading order cancellation required to obtain PDEs in the continuum limit: at the first order, where generically we would get only derivatives with respect to $t_1$, we get nothing at all.

Equation \eqref{Q-3-leg} also reveals a reason for considering the three-leg form with a ``downward'' diagonal leg, as in Fig.~\ref{fig-3-leg-form}$(b)$: the difference quotient $\frac{V_{-1,2}-V}{2(\lambda_2 - \lambda_1)}$ can be expanded in a double power series, but its ``upward'' analogue $\frac{V_{1,2}-V}{2(\lambda_1 + \lambda_2)}$ cannot.

To find a Lagrangian for equation~\eqref{Q-3-leg}, we follow \cite{adler2003classification, bobenko2010lagrangian} and integrate the leg function $\psi$. We take
\begin{gather*} \chi_1(v,\lambda,\mu) = \frac{2}{\lambda} \int \psi_1(v,\lambda,\mu) \d v
\qquad \text{and} \qquad
\chi_2(v',\lambda,\mu) = 2 \int \psi_2(v',\lambda,\mu) \d v' . \end{gather*}
Then
\begin{gather*}
\chi_1(v,-\lambda,\mu) = \chi_1(v,\lambda,\mu), \\
\chi_2 (-v',\lambda,\mu) = \chi_2 (v',-\lambda,\mu) = \chi_2 (v',\lambda,\mu),
\end{gather*}
and $\chi = \chi_1 + \chi_2$ satisfies
\begin{gather*} \frac{\lambda}{2} \der{}{v}\chi(v,v',\lambda,\mu) + \frac{1}{2} \der{}{v'}\chi(v,v',\lambda,\mu) = \psi(v,v',\lambda,\mu) . \end{gather*}
Now
\begin{gather*} \Lambda(V,W,\lambda,\mu) = \lambda \, \chi \left( \frac{V+W}{2} ,\frac{V-W}{2\lambda},\lambda,\mu \right) \end{gather*}
gives the terms of the Lagrangian in triangle form:
\begin{gather}
L(V,V_1,V_2,\lambda_1,\lambda_2) = \Lambda(V,V_1,\lambda_1,\lambda_1) - \Lambda(V,V_2,\lambda_2,\lambda_2) - \Lambda(V_1,V_2,\lambda_1 - \lambda_2, \lambda_1+\lambda_2)\nonumber\\
\hphantom{L(V,V_1,V_2,\lambda_1,\lambda_2)}{} = \lambda_1 \chi \left( \frac{V+V_1}{2} ,\frac{V-V_1}{2\lambda_1},\lambda_1,\lambda_1 \right) - \lambda_2 \chi \left( \frac{V+V_2}{2} ,\frac{V-V_2}{2\lambda_2},\lambda_2,\lambda_2 \right) \nonumber\\
\hphantom{L(V,V_1,V_2,\lambda_1,\lambda_2)=}{} - (\lambda_1 - \lambda_2) \chi \left( \frac{V_1+V_2}{2} ,\frac{V_1-V_2}{2(\lambda_1 - \lambda_2)},\lambda_1-\lambda_2,\lambda_1+\lambda_2 \right) .\label{Q-lagrangian}
\end{gather}
Note the symmetries of $\Lambda$:
\begin{gather*} \Lambda(V,W,\lambda,\mu) = \Lambda(W,V,\lambda,\mu) = -\Lambda(V,W,-\lambda,\mu) . \end{gather*}
In some cases we will rescale $\Lambda$ and hence $L$ by a constant factor. This is purely for esthetic reasons and does not affect the multi-time Euler--Lagrange equations.

\begin{figure}[t]\centering
\includegraphics[scale=0.8]{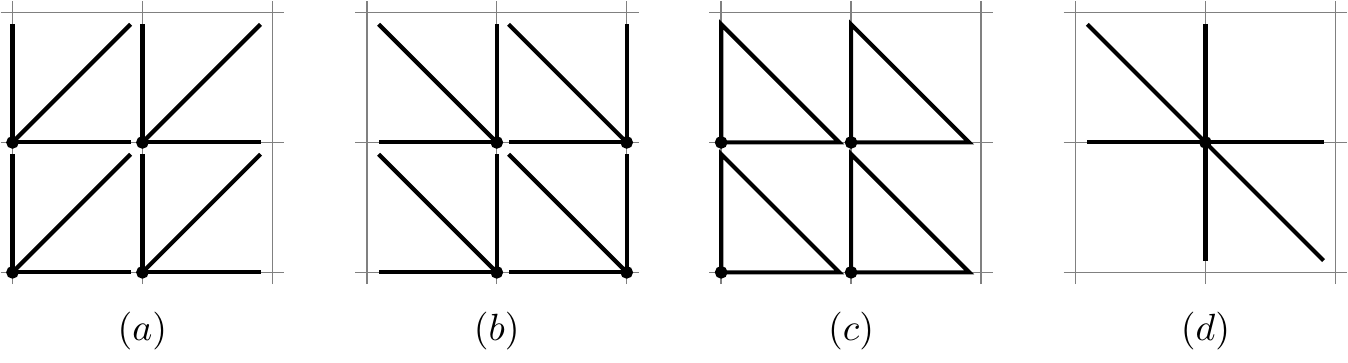}
\caption{The stencils on four adjacent quads for $(a)$ the three-leg form in the usual orientation, $(b)$ the three-leg form after time-reversal, $(c)$ the triangle form for the Lagrangian, and $(d)$ an Euler--Lagrange equation in a planar lattice.}\label{fig-3-leg-form}
\end{figure}

\begin{Proposition}Solutions to the quad equation $Q = 0$ in the plane, with $Q$ given by equation~\eqref{Q-3-leg}, are critical fields for the action of the Lagrangian given by equation~\eqref{Q-lagrangian}.
\end{Proposition}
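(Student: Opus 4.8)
The plan is to compute the discrete Euler--Lagrange equation produced by the triangle Lagrangian \eqref{Q-lagrangian} at an arbitrary interior vertex $\bn$ of the planar lattice and to recognize it as a combination of three-leg equations \eqref{Q-3-leg}, each of which vanishes on solutions of the quad equation. The variation of the action with respect to a single value $V(\bn)$ collects the $V(\bn)$-derivatives of the Lagrangian contribution of every quad having $\bn$ as a corner, and setting this sum to zero gives the Euler--Lagrange equation at $\bn$. First I would establish the one differentiation identity that drives everything: writing $v = \frac{V+W}{2}$ and $v' = \frac{V-W}{2\lambda}$, the chain rule together with the defining relation $\frac{\lambda}{2} \partial_v \chi + \frac{1}{2} \partial_{v'} \chi = \psi$ gives
\begin{gather*} \der{}{V}\Lambda(V,W,\lambda,\mu) = \frac{\lambda}{2} \partial_v \chi + \frac{1}{2} \partial_{v'} \chi = \Psi(V,W,\lambda\mu), \end{gather*}
so that differentiating $\Lambda$ in its first slot simply returns the leg function $\Psi$. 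Using the symmetry $\Lambda(V,W,\lambda,\mu) = \Lambda(W,V,\lambda,\mu)$ one gets $\der{}{W}\Lambda(V,W,\lambda,\mu) = \Psi(W,V,\lambda\mu)$; this can also be checked directly from $\frac{\lambda}{2} \partial_v \chi = \psi_1$, $\frac{1}{2} \partial_{v'} \chi = \psi_2$ and the parity relations of $\psi_1$ and $\psi_2$.

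The bookkeeping step is the heart of the argument. Because the triangle form uses only the three corners $V$, $V_1$, $V_2$ of each quad and never the fourth corner $V_{12}$, the value $V(\bn)$ enters the Lagrangians of exactly three quads: as the base corner $V$ of $\square_{12}(\bn)$, as the corner $V_1$ of $\square_{12}(\bn - \fe_1)$, and as the corner $V_2$ of $\square_{12}(\bn - \fe_2)$, while $\square_{12}(\bn - \fe_1 - \fe_2)$ contributes nothing. Applying the identity above to each of these three contributions produces six leg-function terms, one for every edge from $\bn$ to a neighbor other than the two long diagonals $\bn \pm (\fe_1 + \fe_2)$. I would then regroup these six terms and observe that they are precisely the sum of the two three-leg equations \eqref{Q-3-leg} centered at $\bn$, associated with the upper-left quad (corners $\bn$, $\bn - \fe_1$, $\bn + \fe_2$, $\bn - \fe_1 + \fe_2$) and with the lower-right quad (corners $\bn$, $\bn + \fe_1$, $\bn - \fe_2$, $\bn + \fe_1 - \fe_2$): the two short legs coming from $\square_{12}(\bn)$ split between these two quad equations, while $\square_{12}(\bn - \fe_1)$ and $\square_{12}(\bn - \fe_2)$ supply the remaining short legs and the two downward diagonals. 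Hence the Euler--Lagrange equation at $\bn$ reads $Q^{\mathrm{ul}} + Q^{\mathrm{lr}} = 0$, and any field satisfying the quad equation on every quad, in particular $Q^{\mathrm{ul}} = 0$ and $Q^{\mathrm{lr}} = 0$, is critical.

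I expect the main obstacle to be exactly this regrouping: keeping track of the signs and of the orientation reversals in the three-leg form for quads lying in different quadrants around $\bn$ (the forms for the four surrounding quads differ by the substitutions $\fe_1 \to -\fe_1$ and $\fe_2 \to -\fe_2$), and applying the parity relations of $\psi_1$ and $\psi_2$ consistently when passing between $\der{}{V}\Lambda$ and $\der{}{W}\Lambda$. Once the six terms are correctly assembled the decomposition into a sum of two three-leg equations is immediate, and the conclusion follows without further computation. A remark is in order that this shows only that quad-equation solutions are critical, which is precisely the claim; criticality does not by itself force $Q = 0$ on each individual quad.
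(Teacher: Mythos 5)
Your proposal is correct and takes essentially the same approach as the paper's proof: you derive $\der{}{V}\Lambda(V,W,\lambda,\mu)=\Psi(V,W,\lambda\mu)$ from the defining relation $\frac{\lambda}{2}\der{}{v}\chi+\frac{1}{2}\der{}{v'}\chi=\psi$, collect the contributions of the three quads whose triangle Lagrangians involve the vertex, and regroup the six resulting leg terms into two shifted copies of the three-leg equation~\eqref{Q-3-leg} via the symmetry of the quad equation, exactly as in the paper (its Fig.~\ref{fig-3-leg-form}$(d)$). Your closing remark correctly delimits the claim, and no step is missing.
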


\begin{proof}We have
\begin{gather*}
\der{}{V} L(V,V_1,V_2,\lambda_1,\lambda_2) = \lambda_1 \der{}{V} \chi \left( \frac{V+V_1}{2} ,\frac{V-V_1}{2\lambda_1},\lambda_1,\lambda_1 \right) \\
\hphantom{\der{}{V} L(V,V_1,V_2,\lambda_1,\lambda_2) =}{} - \lambda_2 \der{}{V} \chi \left( \frac{V+V_2}{2} ,\frac{V-V_2}{2\lambda_2},\lambda_2,\lambda_2 \right) \\
\hphantom{\der{}{V} L(V,V_1,V_2,\lambda_1,\lambda_2)}{} = \frac{\lambda_1}{2} \chi_1' \left( \frac{V+V_1}{2} ,\lambda_1,\lambda_1 \right) + \frac{1}{2} \chi_2' \left( \frac{V-V_1}{2\lambda_1},\lambda_1,\lambda_1 \right) \\
\hphantom{\der{}{V} L(V,V_1,V_2,\lambda_1,\lambda_2) =}{} - \frac{\lambda_1}{2} \chi_1' \left( \frac{V+V_2}{2},\lambda_2,\lambda_2 \right) - \frac{1}{2} \chi_2' \left( \frac{V-V_2}{2\lambda_2},\lambda_2,\lambda_2 \right) \\
\hphantom{\der{}{V} L(V,V_1,V_2,\lambda_1,\lambda_2)}{} = \psi \left( \frac{V+V_1}{2} ,\frac{V-V_1}{2\lambda_1},\lambda_1,\lambda_1 \right) - \psi \left( \frac{V+V_2}{2} ,\frac{V-V_2}{2\lambda_2},\lambda_2,\lambda_2 \right) \\
\hphantom{\der{}{V} L(V,V_1,V_2,\lambda_1,\lambda_2)}{}= \Psi\big(V,V_1,\lambda_1^2\big) - \Psi\big(V,V_2,\lambda_2^2\big) .
\end{gather*}
Similarly, using the symmetries of $\Lambda$, we have
\begin{gather*} \der{}{V_1} L(V,V_1,V_2,\lambda_1,\lambda_2)
= \Psi\big(V_1,V,\lambda_1^2\big) - \Psi\big(V_1,V_2,\lambda_1^2-\lambda_2^2\big) \end{gather*}
and
\begin{gather*} \der{}{V_2} L(V,V_1,V_2,\lambda_1,\lambda_2)
= -\Psi\big(V_2,V,\lambda_2^2\big) - \Psi\big(V_2,V_1,\lambda_1^2-\lambda_2^2\big) . \end{gather*}
Summing up all derivatives of the action in the plane with respect to the field at one vertex, and using the symmetry of the quad equation, we find two shifted copies of equation~\eqref{Q-3-leg}, arranged as in Fig.~\ref{fig-3-leg-form}$(d)$.
\end{proof}

The Lagrangian constructed this way is suitable for the continuum limit procedure, as the following proposition establishes.
\begin{Proposition}Every term of $L$ is of at least first order in both parameters, $L = \cO(\lambda_1 \lambda_2)$.
\end{Proposition}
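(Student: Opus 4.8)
The plan is to show that the continuum substitution $\cL_\disc([u],\lambda_1,\lambda_2)$ of the Lagrangian $L$ in \eqref{Q-lagrangian} vanishes identically when either $\lambda_1=0$ or $\lambda_2=0$. Since $\cL_\disc$ is a formal power series in $\lambda_1$ and $\lambda_2$, vanishing at $\lambda_2=0$ forces every term to have positive degree in $\lambda_2$, and vanishing at $\lambda_1=0$ forces positive degree in $\lambda_1$; together these give divisibility by $\lambda_1\lambda_2$, which is exactly the assertion $L=\cO(\lambda_1\lambda_2)$. (The same conclusion then passes to $\cL_\miwa$, as required for Theorem~\ref{thm-Lmiwa}, because the Euler--Maclaurin operators $\partial_k=\cO(\lambda_k)$ cannot lower the degree in either parameter.) Throughout I use that the Miwa shift $\fv_i$ is $\cO(\lambda_i)$, so that $V_i=u(\bt+\fv_i)\to V=u$ as $\lambda_i\to 0$, and that the difference quotient $\frac{V-V_i}{2\lambda_i}$ is a genuine power series in $\lambda_i$ with finite constant term $u_{t_1}$ (for $c=-2$) rather than a singular expression.

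First I would record the two facts that render the ``short-leg'' terms negligible at $\lambda_i=0$. Each such term $\Lambda(V,V_i,\lambda_i,\lambda_i)=\lambda_i\,\chi(\frac{V+V_i}{2},\frac{V-V_i}{2\lambda_i},\lambda_i,\lambda_i)$ carries an explicit prefactor $\lambda_i$, while the remaining $\chi$-factor is regular at $\lambda_i=0$: its first and second arguments tend to $u$ and $u_{t_1}$, the part $\chi_1(v,\lambda,\mu)=\frac{2}{\lambda}\int\psi_1(v,\lambda,\mu)\,\d v$ stays bounded because $\psi_1$ is odd in $\lambda$ (so the integral is itself divisible by $\lambda$), and $\chi_2$ is manifestly regular. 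Hence $\Lambda(V,V_i,\lambda_i,\lambda_i)=\cO(\lambda_i)$ and vanishes at $\lambda_i=0$. This is the step I expect to demand the most care, since it is precisely here that one must verify that no $1/\lambda_i$ singularity survives the Miwa substitution.

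With this in hand the two limits are short computations driven by the symmetries $\Lambda(V,W,\lambda,\mu)=\Lambda(W,V,\lambda,\mu)=-\Lambda(V,W,-\lambda,\mu)$. Setting $\lambda_2=0$: the leg-$2$ term $\Lambda(V,V_2,\lambda_2,\lambda_2)$ vanishes by the previous step, while in the diagonal term $V_2\to V$, so $\Lambda(V_1,V_2,\lambda_1-\lambda_2,\lambda_1+\lambda_2)\to\Lambda(V_1,V,\lambda_1,\lambda_1)=\Lambda(V,V_1,\lambda_1,\lambda_1)$ by the first symmetry; this cancels the leg-$1$ term and yields $\cL_\disc|_{\lambda_2=0}=0$. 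Setting $\lambda_1=0$: now the leg-$1$ term vanishes, $V_1\to V$, and the diagonal term becomes $\Lambda(V,V_2,-\lambda_2,\lambda_2)=-\Lambda(V,V_2,\lambda_2,\lambda_2)$ by the second symmetry, which cancels the remaining $-\Lambda(V,V_2,\lambda_2,\lambda_2)$ and yields $\cL_\disc|_{\lambda_1=0}=0$.

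Combining the two vanishing statements, $\cL_\disc$ is divisible by both $\lambda_1$ and $\lambda_2$, hence by $\lambda_1\lambda_2$, which is what is claimed. The only genuine obstacle, as noted, is the regularity of the substituted leg functions at $\lambda_i=0$ needed in the second paragraph; everything else is an application of the prescribed symmetries of $\Lambda$ together with the elementary fact that the Miwa shifts degenerate to the identity as the corresponding parameter tends to zero.
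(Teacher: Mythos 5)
Your proof is correct and follows essentially the same route as the paper: evaluate the Lagrangian at $\lambda_1=0$ (with $V_1=V$) and at $\lambda_2=0$ (with $V_2=V$), use the symmetries $\Lambda(V,W,\lambda,\mu)=\Lambda(W,V,\lambda,\mu)=-\Lambda(V,W,-\lambda,\mu)$ to get cancellation, and conclude divisibility by $\lambda_1\lambda_2$. Your second paragraph merely makes explicit a regularity check (that the oddness of $\psi_1$ in $\lambda$ keeps $\chi_1$ finite and that the Miwa-substituted difference quotient is a genuine power series) which the paper leaves implicit.
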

\begin{proof}Taking the limit $\lambda_1 \rightarrow 0$, the Lagrangian vanishes:
\begin{align*}
L(V,V,V_2,0,\lambda_2) &= - \lambda_2 \chi \left( \frac{V+V_2}{2} ,\frac{V-V_2}{2\lambda_2},\lambda_2,\lambda_2 \right) + \lambda_2 \chi \left( \frac{V+V_2}{2} ,\frac{V-V_2}{-2\lambda_2},-\lambda_2,\lambda_2 \right) \\
&= 0.
\end{align*}
Similarly, for $\lambda_2 \rightarrow 0$ we have $L(V,V_1,V,\lambda_1,0) = 0$.
\end{proof}

\subsection[Q1$_{\delta=0}$]{$\boldsymbol{{\rm Q1}_{\delta=0}}$}\label{sec-Q10}

A continuum limit for equation Q1,
\begin{gather*}\lambda_1^2 (V_2 - V) (V_{12} - V_1) - \lambda_2^2 (V_1 - V) (V_{12} - V_2) = 0,\end{gather*}
with its pluri-Lagrangian structure, was given in \cite{vermeeren2019continuum}. The result is the hierarchy of Schwarzian KdV equations,
\begin{gather*}
v_{2} = 0, \nonumber\\
\frac{v_{3}}{v_{1}} = -\frac{3 v_{11}^{2}}{2 v_{1}^{2}} + \frac{v_{111}}{v_{1}}, \nonumber\\
v_{4} = 0,\nonumber \\
\frac{v_{5}}{v_{1}} = -\frac{45 v_{11}^{4}}{8 v_{1}^{4}} + \frac{25 v_{11}^{2} v_{111}}{2 v_{1}^{3}} - \frac{5 v_{111}^{2}}{2 v_{1}^{2}} - \frac{5 v_{11} v_{1111}}{v_{1}^{2}} + \frac{v_{11111}}{v_{1}},\\ % \label{SKdVhier}\\
\cdots \nonumber
\end{gather*}
where $v_i$ is a shorthand notation for the derivative $v_{t_i}$. Table \ref{table-Q10-disc} shows how this continuum limit fits in the scheme of Section~\ref{sec-Q-3-leg}. We find the discrete Lagrangian
\begin{gather*}
L = \lambda_i^2 \log \left( \frac{V - V_i}{\lambda_i} \right) - \lambda_j^2 \log \left( \frac{V - V_j}{\lambda_j} \right) - \big(\lambda_i^2 - \lambda_j^2\big) \log \left( \frac{V_i - V_j}{\lambda_i - \lambda_j} \right) .
\end{gather*}

\begin{table}[h!]
\fillbox{
\begin{minipage}{.45\linewidth}
\begin{gather*}
\Psi\big(V,W,\lambda^2\big) = \frac{\lambda^2}{V-W} \\
\psi(v,v',\lambda,\mu) = \frac{\mu}{2v'}
\end{gather*}
\end{minipage}
\begin{minipage}{.45\linewidth}
\begin{gather*}
\chi(v,v',\lambda,\mu) = \mu \log(v') \\
\Lambda(V,W,\lambda,\mu) = \lambda \mu \log \left( \frac{V-W}{2\lambda} \right)
\end{gather*}
\end{minipage}}
\caption{Q1$_{\delta=0}$ fact sheet. See Section \ref{sec-Q-3-leg} for the meaning of these functions.}\label{table-Q10-disc}
\end{table}

The first few coefficients of the continuous pluri-Lagrangian 2-form, after eliminating alien derivatives, are
\begin{gather*}
\cL_{12} = -\frac{v_{2}}{4 v_{1}},\\
\cL_{13} = \frac{v_{11}^2}{4 v_{1}^2} - \frac{v_{3}}{4 v_{1}},\\
\cL_{14} = -\frac{v_{4}}{4 v_{1}},\\
\cL_{15} = \frac{3 v_{11}^4}{16 v_{1}^4} - \frac{v_{111}^2}{4 v_{1}^2} - \frac{v_{5}}{4 v_{1}},\\
\cL_{23} = \frac{v_{11} v_{12}}{2 v_{1}^2} + \frac{v_{11}^2 v_{2}}{8 v_{1}^3} - \frac{v_{111} v_{2}}{4 v_{1}^2},\\
\cL_{24} = 0,\\
\cL_{25} = -\frac{v_{111} v_{112}}{2 v_{1}^2} + \frac{3 v_{11}^3 v_{12}}{4 v_{1}^4} - \frac{v_{11} v_{111} v_{12}}{v_{1}^3} + \frac{v_{1111} v_{12}}{2 v_{1}^2} + \frac{27 v_{11}^4 v_{2}}{32 v_{1}^5}\\
\hphantom{\cL_{25} =}{} - \frac{17 v_{11}^2 v_{111} v_{2}}{8 v_{1}^4} + \frac{7 v_{111}^2 v_{2}}{8 v_{1}^3} + \frac{3 v_{11} v_{1111} v_{2}}{4 v_{1}^3} - \frac{v_{11111} v_{2}}{4 v_{1}^2},\\
\cL_{34} = -\frac{v_{11} v_{14}}{2 v_{1}^2} - \frac{v_{11}^2 v_{4}}{8 v_{1}^3} + \frac{v_{111} v_{4}}{4 v_{1}^2},\\
\cL_{35} = \frac{45 v_{11}^6}{64 v_{1}^6} - \frac{57 v_{11}^4 v_{111}}{32 v_{1}^5} + \frac{19 v_{11}^2 v_{111}^2}{16 v_{1}^4} - \frac{7 v_{111}^3}{8 v_{1}^3} + \frac{3 v_{11}^3 v_{1111}}{8 v_{1}^4} + \frac{3 v_{11} v_{111} v_{1111}}{4 v_{1}^3} - \frac{v_{1111}^2}{4 v_{1}^2} \\
\hphantom{\cL_{35} =}{} - \frac{3 v_{11}^2 v_{11111}}{8 v_{1}^3} + \frac{v_{111} v_{11111}}{4 v_{1}^2} - \frac{v_{111} v_{113}}{2 v_{1}^2} + \frac{3 v_{11}^3 v_{13}}{4 v_{1}^4} - \frac{v_{11} v_{111} v_{13}}{v_{1}^3} + \frac{v_{1111} v_{13}}{2 v_{1}^2} - \frac{v_{11} v_{15}}{2 v_{1}^2} \\
\hphantom{\cL_{35} =}{}+ \frac{27 v_{11}^4 v_{3}}{32 v_{1}^5}- \frac{17 v_{11}^2 v_{111} v_{3}}{8 v_{1}^4} + \frac{7 v_{111}^2 v_{3}}{8 v_{1}^3} + \frac{3 v_{11} v_{1111} v_{3}}{4 v_{1}^3} - \frac{v_{11111} v_{3}}{4 v_{1}^2} - \frac{v_{11}^2 v_{5}}{8 v_{1}^3} + \frac{v_{111} v_{5}}{4 v_{1}^2},\\
\cL_{45} = -\frac{v_{111} v_{114}}{2 v_{1}^2} + \frac{3 v_{11}^3 v_{14}}{4 v_{1}^4} - \frac{v_{11} v_{111} v_{14}}{v_{1}^3} + \frac{v_{1111} v_{14}}{2 v_{1}^2} + \frac{27 v_{11}^4 v_{4}}{32 v_{1}^5} \\
\hphantom{\cL_{45} =}{}- \frac{17 v_{11}^2 v_{111} v_{4}}{8 v_{1}^4} + \frac{7 v_{111}^2 v_{4}}{8 v_{1}^3} + \frac{3 v_{11} v_{1111} v_{4}}{4 v_{1}^3} - \frac{v_{11111} v_{4}}{4 v_{1}^2}.
\end{gather*}

The even-numbered times correspond to trivial equations, $v_{2k} = 0$, restricting the dynamics to a space of half the dimension. We can also restrict the pluri-Lagrangian formulation to this space:
\begin{gather*} \cL = \sum_{i<j} \cL_{2i+1,2j+1} \d t_{2i+1} \wedge \d t_{2j+1} \end{gather*}
is a pluri-Lagrangian $2$-form for the hierarchy of nontrivial Schwarzian KdV equations
\begin{gather*}
\frac{v_{3}}{v_{1}} = -\frac{3 v_{11}^{2}}{2 v_{1}^{2}} + \frac{v_{111}}{v_{1}}, \\
\frac{v_{5}}{v_{1}} = -\frac{45 v_{11}^{4}}{8 v_{1}^{4}} + \frac{25 v_{11}^{2} v_{111}}{2 v_{1}^{3}} - \frac{5 v_{111}^{2}}{2 v_{1}^{2}} - \frac{5 v_{11} v_{1111}}{v_{1}^{2}} + \frac{v_{11111}}{v_{1}}, \\
\cdots
\end{gather*}

\subsection[Q1$_{\delta=1}$]{$\boldsymbol{{\rm Q1}_{\delta=1}}$}

Equation Q1$_{\delta=1}$ reads
\begin{gather*} \lambda_1^2 (V_2 - V) (V_{12} - V_1) - \lambda_2^2 (V_1 - V) (V_{12} - V_2) + \lambda_1^2 \lambda_2^2\big(\lambda_1^2 - \lambda_2^2\big) = 0.\end{gather*}
We apply the procedure of Section~\ref{sec-Q-3-leg} to find a Lagrangian that is suitable for the continuum limit. Intermediate steps are listed in Table \ref{table-Q11-disc}.

\begin{table}[h]
\fillbox{
\begin{gather*}
\Psi\big(V,W,\lambda^2\big) = \log \left( \frac{V-W + \lambda^2}{V-W - \lambda^2} \right) \\
\psi(v,v',\lambda,\mu) = \log \left( \frac{2 v' + \mu}{2 v' - \mu} \right) \\
\chi(v,v',\lambda,\mu) = (2 v' + \mu) \log(2 v' + \mu) - (2 v' - \mu) \log(2 v' - \mu) \\
\Lambda(V,W,\lambda,\mu) = ( V - W + \lambda \mu) \log( V-W + \lambda \mu) - (V - W - \lambda \mu) \log( V-W - \lambda \mu)
\end{gather*}}
\caption{Q1$_{\delta=1}$ fact sheet. See Section \ref{sec-Q-3-leg} for the meaning of these functions.}\label{table-Q11-disc}
\end{table}
In the continuum limit of the equation we find
\begin{gather*}
v_3 = v_{111} - \frac{3}{2} \frac{v_{11}^2 - \frac{1}{4}}{v_1}, \\
v_5 = -\frac{45 v_{11}^4}{8 v_1^3} + \frac{25 v_{11}^2 v_{111}}{2 v_1^2} - \frac{5 v_{111}^2}{2 v_1} - \frac{5 v_{11} v_{1111}}{v_1} + v_{11111} + \frac{25 v_{11}^2}{16 v_1^3} - \frac{5 v_{111}}{8 v_1^2} -
\frac{5}{128 v_1^3} , \\
\cdots
\end{gather*}
and $v_{2k} = 0$ for all $k \in \N$. Some coefficients of the continuous pluri-Lagrangian 2-form are
\begin{gather*}
\cL_{13} = \frac{v_{11}^2}{2 v_{1}^2} - \frac{v_{3}}{2 v_{1}} + \frac{1}{8 v_{1}^2},\\
\cL_{15} = \frac{3 v_{11}^4}{8 v_{1}^4} - \frac{v_{111}^2}{2 v_{1}^2} - \frac{v_{5}}{2 v_{1}} - \frac{5 v_{11}^2}{16 v_{1}^4} - \frac{1}{128 v_{1}^4},\\
\cL_{35} = \frac{45 v_{11}^6}{32 v_{1}^6} - \frac{57 v_{11}^4 v_{111}}{16 v_{1}^5} + \frac{19 v_{11}^2 v_{111}^2}{8 v_{1}^4} - \frac{7 v_{111}^3}{4 v_{1}^3} + \frac{3 v_{11}^3 v_{1111}}{4 v_{1}^4} + \frac{3 v_{11} v_{111} v_{1111}}{2 v_{1}^3} - \frac{v_{1111}^2}{2 v_{1}^2} \\
\hphantom{\cL_{35} =}{}
- \frac{3 v_{11}^2 v_{11111}}{4 v_{1}^3} + \frac{v_{111} v_{11111}}{2 v_{1}^2} - \frac{v_{111} v_{113}}{v_{1}^2} + \frac{3 v_{11}^3 v_{13}}{2 v_{1}^4} - \frac{2 v_{11} v_{111} v_{13}}{v_{1}^3} + \frac{v_{1111} v_{13}}{v_{1}^2} \\
\hphantom{\cL_{35} =}{}
- \frac{v_{11} v_{15}}{v_{1}^2} + \frac{27 v_{11}^4 v_{3}}{16 v_{1}^5} - \frac{17 v_{11}^2 v_{111} v_{3}}{4 v_{1}^4} + \frac{7 v_{111}^2 v_{3}}{4 v_{1}^3} + \frac{3 v_{11} v_{1111} v_{3}}{2 v_{1}^3} - \frac{v_{11111} v_{3}}{2 v_{1}^2} - \frac{v_{11}^2 v_{5}}{4 v_{1}^3} \\
\hphantom{\cL_{35} =}{}
+ \frac{v_{111} v_{5}}{2 v_{1}^2} - \frac{95 v_{11}^4}{128 v_{1}^6} + \frac{41 v_{11}^2 v_{111}}{32 v_{1}^5} - \frac{27 v_{111}^2}{32 v_{1}^4} - \frac{3 v_{11} v_{1111}}{16 v_{1}^4} + \frac{3 v_{11111}}{16 v_{1}^3} - \frac{5 v_{11} v_{13}}{8 v_{1}^4} - \frac{15 v_{11}^2 v_{3}}{32 v_{1}^5}\\
\hphantom{\cL_{35} =}{}
 + \frac{5 v_{111} v_{3}}{16 v_{1}^4} + \frac{v_{5}}{16 v_{1}^3} + \frac{55 v_{11}^2}{512 v_{1}^6} - \frac{25 v_{111}}{256 v_{1}^5} + \frac{3 v_{3}}{256 v_{1}^5} - \frac{5}{2048 v_{1}^6}.
\end{gather*}

\subsection{Q2}

For the equation
\begin{gather*}
 \lambda_1^2 \big( V_2^2 - V^2 \big) \big( V_{12}^2 - V_1^2 \big) - \lambda_2^2 \big( V_1^2 - V^2 \big) \big( V_{12}^2 - V_2^2 \big) \\
\qquad{} + \lambda_1^2 \lambda_2^2 \big( \lambda_1^2 - \lambda_2^2 \big) \big( V^2 + V_1^2 + V_2^2 + V_{12}^2 - \lambda_1^4 + \lambda_1^2 \lambda_2^2 - \lambda_2^4 \big) = 0
\end{gather*}
the general strategy of Section \ref{sec-Q-3-leg} works with the choices listed in Table \ref{table-Q2-disc}.

\begin{table}[h]
\fillbox{
\begin{gather*}
\Psi(V,W,\lambda^2) = \log \left( \frac{\big(V + W + \lambda^2\big) \big(V - W + \lambda^2\big)}{\big( V + W - \lambda^2\big) \big(V - W - \lambda^2\big)} \right) \\
\psi(v,v',\lambda,\mu) = \log \left( \frac{(2v + \lambda \mu) (2v' + \mu)}{( 2v - \lambda \mu) (2v' - \mu)} \right)\\
\chi(v,v',\lambda,\mu) = \frac{1}{\lambda} (2 v + \lambda \mu) \log(2 v + \lambda \mu) + (2 v' + \mu) \log(2 v' + \mu) \\
\hphantom{\chi(v,v',\lambda,\mu) =}{} - \frac{1}{\lambda}(2 v - \lambda \mu) \log(2 v - \lambda \mu) - (2 v' - \mu) \log(2 v' - \mu) \\
 \Lambda(V,W,\lambda,\mu) = (V + W + \lambda \mu) \log( V + W + \lambda \mu) + (V - W + \lambda \mu) \log \left( \frac{V - W}{\lambda} + \mu \right) \\
\hphantom{\Lambda(V,W,\lambda,\mu) =}{} - (V + W - \lambda \mu) \log( V + W - \lambda \mu) - (V - W - \lambda \mu) \log \left( \frac{V - W}{\lambda} - \mu \right)
\end{gather*}
}
\caption{Q2 fact sheet. See Section \ref{sec-Q-3-leg} for the meaning of these functions.}\label{table-Q2-disc}
\end{table}

 The continuum limit hierarchy is
\begin{gather*}
v_3 = v_{111} - \frac{3}{2} \frac{v_{11}^2 - \frac{1}{4}}{v_1} - \frac{3}{2} \frac{v_1^3}{v^2}, \\
v_{5} = -\frac{45 v_{1}^{5}}{8 v^{4}} + \frac{15 v_{1}^{3} v_{11}}{v^{3}} + \frac{15 v_{1} v_{11}^{2}}{4 v^{2}} - \frac{45v_{11}^{4}}{8 v_{1}^{3}} - \frac{15 v_{1}^{2} v_{111}}{2 v^{2}}
+ \frac{25 v_{11}^{2} v_{111}}{2 v_{1}^{2}} - \frac{5v_{111}^{2}}{2 v_{1}} \\
\hphantom{v_{5} =}{} - \frac{5 v_{11} v_{1111}}{v_{1}} + v_{11111} + \frac{5 v_{1}}{16 v^{2}} + \frac{25 v_{11}^{2}}{16v_{1}^{3}} - \frac{5 v_{111}}{8 v_{1}^{2}} - \frac{5}{128v_{1}^{3}} , \\
\cdots
\end{gather*}
and $v_{2k} = 0$ for all $k \in \N$. A few coefficients of the pluri-Lagrangian 2-form are
\begin{gather*}
\cL_{13} = \frac{3 v_{1}^2}{2 v_{}^2} + \frac{v_{11}^2}{2 v_{1}^2} - \frac{v_{3}}{2 v_{1}} + \frac{1}{8 v_{1}^2},\\
\cL_{15} = \frac{15 v_{1}^4}{8 v_{}^4} - \frac{15 v_{11}^2}{4 v_{}^2} + \frac{3 v_{11}^4}{8 v_{1}^4} - \frac{v_{111}^2}{2 v_{1}^2} - \frac{v_{5}}{2 v_{1}} + \frac{5}{16 v_{}^2} - \frac{5 v_{11}^2}{16 v_{1}^4} - \frac{1}{128 v_{1}^4},\\
\cL_{35} = \frac{45 v_{1}^6}{32 v_{}^6} - \frac{9 v_{1}^4 v_{11}}{4 v_{}^5} + \frac{63 v_{1}^2 v_{11}^2}{32 v_{}^4} - \frac{81 v_{11}^3}{4 v_{}^3} + \frac{495 v_{11}^4}{32 v_{}^2 v_{1}^2} + \frac{45 v_{11}^6}{32 v_{1}^6} - \frac{9 v_{1}^3 v_{111}}{16 v_{}^4} + \frac{39 v_{1} v_{11} v_{111}}{2 v_{}^3} \\
\hphantom{\cL_{35} =}{}
- \frac{165 v_{11}^2 v_{111}}{8 v_{}^2 v_{1}} - \frac{57 v_{11}^4 v_{111}}{16 v_{1}^5} + \frac{3 v_{111}^2}{8 v_{}^2} + \frac{19 v_{11}^2 v_{111}^2}{8 v_{1}^4} - \frac{7 v_{111}^3}{4 v_{1}^3} - \frac{3 v_{1}^2 v_{1111}}{v_{}^3} + \frac{27 v_{11} v_{1111}}{4 v_{}^2} \\
\hphantom{\cL_{35} =}{}
+ \frac{3 v_{11}^3 v_{1111}}{4 v_{1}^4} + \frac{3 v_{11} v_{111} v_{1111}}{2 v_{1}^3} - \frac{v_{1111}^2}{2 v_{1}^2} - \frac{3 v_{1} v_{11111}}{4 v_{}^2} - \frac{3 v_{11}^2 v_{11111}}{4 v_{1}^3} + \frac{v_{111} v_{11111}}{2 v_{1}^2} \\
\hphantom{\cL_{35} =}{}
- \frac{v_{111} v_{113}}{v_{1}^2} - \frac{15 v_{11} v_{13}}{2 v_{}^2} + \frac{3 v_{11}^3 v_{13}}{2 v_{1}^4} - \frac{2 v_{11} v_{111} v_{13}}{v_{1}^3} + \frac{v_{1111} v_{13}}{v_{1}^2} - \frac{v_{11} v_{15}}{v_{1}^2} + \frac{75 v_{1}^3 v_{3}}{16 v_{}^4} \\
\hphantom{\cL_{35} =}{}
- \frac{15 v_{1} v_{11} v_{3}}{2 v_{}^3} + \frac{15 v_{11}^2 v_{3}}{8 v_{}^2 v_{1}} + \frac{27 v_{11}^4 v_{3}}{16 v_{1}^5} + \frac{15 v_{111} v_{3}}{4 v_{}^2} - \frac{17 v_{11}^2 v_{111} v_{3}}{4 v_{1}^4} + \frac{7 v_{111}^2 v_{3}}{4 v_{1}^3} \\
\hphantom{\cL_{35} =}{}
+ \frac{3 v_{11} v_{1111} v_{3}}{2 v_{1}^3} - \frac{v_{11111} v_{3}}{2 v_{1}^2} - \frac{9 v_{1} v_{5}}{4 v_{}^2} - \frac{v_{11}^2 v_{5}}{4 v_{1}^3} + \frac{v_{111} v_{5}}{2 v_{1}^2} - \frac{135 v_{1}^2}{128 v_{}^4} + \frac{65 v_{11}}{16 v_{}^3} - \frac{255 v_{11}^2}{64 v_{}^2 v_{1}^2} \\
\hphantom{\cL_{35} =}{}
- \frac{95 v_{11}^4}{128 v_{1}^6} - \frac{35 v_{111}}{32 v_{}^2 v_{1}} + \frac{41 v_{11}^2 v_{111}}{32 v_{1}^5} - \frac{27 v_{111}^2}{32 v_{1}^4} - \frac{3 v_{11} v_{1111}}{16 v_{1}^4} + \frac{3 v_{11111}}{16 v_{1}^3} - \frac{5 v_{11} v_{13}}{8 v_{1}^4} + \frac{5 v_{3}}{32 v_{}^2 v_{1}} \\
\hphantom{\cL_{35} =}{}
- \frac{15 v_{11}^2 v_{3}}{32 v_{1}^5} + \frac{5 v_{111} v_{3}}{16 v_{1}^4} + \frac{v_{5}}{16 v_{1}^3} + \frac{15}{512 v_{}^2 v_{1}^2} + \frac{55 v_{11}^2}{512 v_{1}^6} - \frac{25 v_{111}}{256 v_{1}^5} + \frac{3 v_{3}}{256 v_{1}^5} - \frac{5}{2048 v_{1}^6}.
\end{gather*}

\subsection[Q3$_{\delta=0}$]{$\boldsymbol{{\rm Q3}_{\delta=0}}$}\label{sec-Q30}

We consider equation Q3 in the form
\begin{gather*}
\sin\big(\lambda_1^2\big) \big( {\rm e}^{iV} {\rm e}^{iV_1} + {\rm e}^{iV_2} {\rm e}^{iV_{12}}\big) -\sin\big(\lambda_2^2\big) \big( {\rm e}^{iV} {\rm e}^{iV_2} + {\rm e}^{iV_1} {\rm e}^{iV_{12}}\big) \\
\qquad{} - \sin\big(\lambda_1^2 - \lambda_2^2\big)\big( {\rm e}^{iV} {\rm e}^{iV_{12}} + {\rm e}^{iV_1} {\rm e}^{iV_2}\big) = 0 .
\end{gather*}
The strategy of Section \ref{sec-Q-3-leg} works as outlined in Table \ref{table-Q30-disc}.

\begin{table}[h]
\fillbox{
\begin{gather*}
\Psi\big(V,W,\lambda^2\big) = \log \left( \frac{{\rm e}^{i \lambda^2} {\rm e}^{i V} - {\rm e}^{i W}}{ {\rm e}^{i V} - {\rm e}^{i \lambda^2} {\rm e}^{i W}} \right) \\
\hphantom{\Psi\big(V,W,\lambda^2\big)}{} = \log \left( \sin \left( \frac{V-W+\lambda^2}{2} \right) \right) - \log \left( \sin \left( \frac{V-W-\lambda^2}{2} \right) \right) \\
\psi(v,v',\lambda,\mu) = \log \left( \sin \left( \lambda v' + \frac{\lambda \mu}{2} \right) \right) - \log \left( \sin \left(\lambda v' - \frac{\lambda \mu}{2} \right) \right) \\
\chi(v,v',\lambda,\mu) = \frac{i}{\lambda} \big( {-} 2 \lambda^2 \mu v' + \Li2 \big( {\rm e}^{i(2 \lambda v' + \lambda \mu)} \big) - \Li2 \big( {\rm e}^{i(2 \lambda v' - \lambda \mu)}\big)\big)
\\
 \Lambda(V,W,\lambda,\mu) = \lambda \mu (V - W) - \Li2 \big({\rm e}^{i(V-W+\lambda \mu)} \big) + \Li2 \big({\rm e}^{i(V-W-\lambda \mu)} \big)
\end{gather*}}
\caption{Q3$_{\delta=0}$ fact sheet. See Section \ref{sec-Q-3-leg} for the meaning of these functions.}\label{table-Q30-disc}
\end{table}

The continuum limit hierarchy is
\begin{gather*}
v_3 = v_{111} - \frac{3}{2} \frac{v_{11}^2 - \frac{1}{4}}{v_1} + \frac{1}{2} v_1^3, \\
v_{5} = \frac{3}{8} v_{1}^{5} - \frac{5}{4} v_{1} v_{11}^{2} + \frac{5}{2} v_{1}^{2} v_{111} - \frac{5}{48} v_{1} - \frac{45 v_{11}^{4}}{8 v_{1}^{3}} + \frac{25 v_{11}^{2} v_{111}}{2 v_{1}^{2}} - \frac{5 v_{111}^{2}}{2 v_{1}} - \frac{5 v_{11} v_{1111}}{v_{1}} \\
\hphantom{v_{5} =}{} + v_{11111} + \frac{25 v_{11}^{2}}{16 v_{1}^{3}} - \frac{5 v_{111}}{8 v_{1}^{2}} - \frac{5}{128 v_{1}^{3}}, \\
\cdots
\end{gather*}
and $v_{2k} = 0$ for all $k \in \N$. Some coefficients of the continuous pluri-Lagrangian 2-form are
\begin{gather*}
\cL_{13} = -\frac{1}{4} v_{1}^2 + \frac{v_{11}^2}{4 v_{1}^2} - \frac{v_{3}}{4 v_{1}} + \frac{1}{16 v_{1}^2},\\
\cL_{15} = -\frac{1}{16} v_{1}^4 + \frac{5}{8} v_{11}^2 + \frac{3 v_{11}^4}{16 v_{1}^4} - \frac{v_{111}^2}{4 v_{1}^2} - \frac{v_{5}}{4 v_{1}} - \frac{5 v_{11}^2}{32 v_{1}^4} - \frac{1}{256 v_{1}^4} - \frac{5}{96},\\
\cL_{35} = \frac{1}{64} v_{1}^6 - \frac{17}{64} v_{1}^2 v_{11}^2 + \frac{7}{32} v_{1}^3 v_{111} - \frac{5}{32} v_{1}^3 v_{3} + \frac{91}{768} v_{1}^2 - \frac{165 v_{11}^4}{64 v_{1}^2} + \frac{55 v_{11}^2 v_{111}}{16 v_{1}} - \frac{1}{16} v_{111}^2 \\
\hphantom{\cL_{35} =}{}
- \frac{9}{8} v_{11} v_{1111} + \frac{1}{8} v_{1} v_{11111} + \frac{5}{4} v_{11} v_{13} - \frac{5 v_{11}^2 v_{3}}{16 v_{1}} - \frac{5}{8} v_{111} v_{3} + \frac{3}{8} v_{1} v_{5} + \frac{85 v_{11}^2}{128 v_{1}^2} + \frac{45 v_{11}^6}{64 v_{1}^6} \\
\hphantom{\cL_{35} =}{}
+ \frac{35 v_{111}}{192 v_{1}} - \frac{57 v_{11}^4 v_{111}}{32 v_{1}^5} + \frac{19 v_{11}^2 v_{111}^2}{16 v_{1}^4} - \frac{7 v_{111}^3}{8 v_{1}^3} + \frac{3 v_{11}^3 v_{1111}}{8 v_{1}^4} + \frac{3 v_{11} v_{111} v_{1111}}{4 v_{1}^3} - \frac{v_{1111}^2}{4 v_{1}^2} \\
\hphantom{\cL_{35} =}{}
- \frac{3 v_{11}^2 v_{11111}}{8 v_{1}^3} + \frac{v_{111} v_{11111}}{4 v_{1}^2} - \frac{v_{111} v_{113}}{2 v_{1}^2} + \frac{3 v_{11}^3 v_{13}}{4 v_{1}^4} - \frac{v_{11} v_{111} v_{13}}{v_{1}^3} + \frac{v_{1111} v_{13}}{2 v_{1}^2} - \frac{v_{11} v_{15}}{2 v_{1}^2} \\
\hphantom{\cL_{35} =}{}
- \frac{5 v_{3}}{192 v_{1}} + \frac{27 v_{11}^4 v_{3}}{32 v_{1}^5} - \frac{17 v_{11}^2 v_{111} v_{3}}{8 v_{1}^4} + \frac{7 v_{111}^2 v_{3}}{8 v_{1}^3} + \frac{3 v_{11} v_{1111} v_{3}}{4 v_{1}^3} - \frac{v_{11111} v_{3}}{4 v_{1}^2} - \frac{v_{11}^2 v_{5}}{8 v_{1}^3} \\
\hphantom{\cL_{35} =}{}
+ \frac{v_{111} v_{5}}{4 v_{1}^2} - \frac{5}{1024 v_{1}^2} - \frac{95 v_{11}^4}{256 v_{1}^6} + \frac{41 v_{11}^2 v_{111}}{64 v_{1}^5} - \frac{27 v_{111}^2}{64 v_{1}^4} - \frac{3 v_{11} v_{1111}}{32 v_{1}^4} + \frac{3 v_{11111}}{32 v_{1}^3} \\
\hphantom{\cL_{35} =}{}
 - \frac{5 v_{11} v_{13}}{16 v_{1}^4}- \frac{15 v_{11}^2 v_{3}}{64 v_{1}^5} + \frac{5 v_{111} v_{3}}{32 v_{1}^4} + \frac{v_{5}}{32 v_{1}^3} + \frac{55 v_{11}^2}{1024 v_{1}^6} - \frac{25 v_{111}}{512 v_{1}^5} + \frac{3 v_{3}}{512 v_{1}^5} - \frac{5}{4096 v_{1}^6}.
\end{gather*}

{\bf Note on dilogarithms.} In Table~\ref{table-Q30-disc} we encounter the dilogarithm function~\cite{zagier2007dilogarithm}, given by
\begin{gather*} \Li2(z) = \sum_{n = 1}^\infty \frac{z^n}{n^2} \qquad \text{for } |z| < 1 \end{gather*}
and by analytic continuation for other $z \in \C \setminus [1,\infty)$. Its derivative is
\begin{gather*} \frac{\d}{\d z} \Li2(z)= - \frac{\log(1 - z)}{z} . \end{gather*}
This can be used to integrate $\log \circ \sin$, which occurs in the leg function $\Psi$,
\begin{gather*} \int \log(\sin z) \,\d z = \frac{i}{2} \big( {-} z^2 + \Li2 \big( {\rm e}^{2iz} \big) \big) - z \log(-2 i) + c. \end{gather*}
In the continuum limit procedure we need a series expansion of $\Li2({\rm e}^x)$. For small $x$ this seems problematic, since $\Li2(z)$ has a branch point at $z = 1$. The way to handle this is to carry along this branch point in the formal series expansion
\begin{gather*}%\label{dilogseries}
\Li2({\rm e}^x) = \Li2(1) - x \log(-x) + x - \frac{x^2}{4} - \frac{x^3}{72} + \cdots,
\end{gather*}
where the first term famously equals $\Li2(1) = \frac{\pi^2}{6}$.

\subsection[Q3$_{\delta=1}$]{$\boldsymbol{{\rm Q3}_{\delta=1}}$}

For equation Q3 with nonzero parameter we use a slightly different form compared to Q3$_{\delta=0}$,
\begin{gather*}
 \sin\big(\lambda_1^2\big) \big( \cos(V) \cos(V_1) + \cos(V_2) \cos(V_{12}) \big) \\
\qquad{}- \sin\big(\lambda_2^2\big) \big( \cos(V) \cos(V_2) + \cos(V_1) \cos(V_{12}) \big) \\
\qquad{} - \sin\big(\lambda_1^2 - \lambda_2^2\big) \big( \cos(V) \cos(V_{12}) + \cos(V_1) \cos(V_2) - \sin\big(\lambda_1^2\big) \sin\big(\lambda_2^2\big)\big) = 0.
\end{gather*}
The general strategy of Section \ref{sec-Q-3-leg} applies to this transformed equation with the choices in Table \ref{table-Q31-disc}.

\begin{table}[h]
\fillbox{
\begin{gather*}
\Psi\big(V,W,\lambda^2\big) = \log \left( \sin \left( \frac{V-W+\lambda^2}{2} \right) \sin \left( \frac{V+W+\lambda^2}{2} \right) \right) \\
\hphantom{\Psi\big(V,W,\lambda^2\big) =}{} - \log \left( \sin \left( \frac{V-W-\lambda^2}{2} \right) \sin \left( \frac{V+W-\lambda^2}{2} \right) \right)\\
\psi(v,v',\lambda,\mu) = \log \left( \sin \left( \lambda v' + \frac{\lambda \mu}{2} \right) \sin \left( v + \frac{\lambda \mu}{2} \right) \right) \\
\hphantom{\psi(v,v',\lambda,\mu) =}{} - \log \left( \sin \left( \lambda v' - \frac{\lambda \mu}{2} \right) \sin \left( v - \frac{\lambda \mu}{2} \right) \right)\\
\chi(v,v',\lambda,\mu) = \frac{i}{\lambda} \big({-} 2 \lambda^2 \mu v' + \Li2 \big( {\rm e}^{i(2 \lambda v' + \lambda \mu)} \big) - \Li2 \big( {\rm e}^{i(2 \lambda v' - \lambda \mu)} \big)\big) \\
\hphantom{\chi(v,v',\lambda,\mu) =}{} - \frac{i}{\lambda} \big( {-} 2 \lambda \mu v + \Li2 \big( {\rm e}^{i(2 v + \lambda \mu)} \big) - \Li2 \big( {\rm e}^{i(2 v - \lambda \mu)} \big)\big) \\
\Lambda(V,W,\lambda,\mu) = \lambda \mu (V - W) - \Li2\big({\rm e}^{i(V-W+\lambda \mu)}\big) + \Li2\big({\rm e}^{i(V-W-\lambda \mu)} \big) \\
\hphantom{\Lambda(V,W,\lambda,\mu) =}{} + \lambda \mu (V + W) - \Li2 \big({\rm e}^{i(V+W+\lambda\mu)} \big) + \Li2 \big({\rm e}^{i(V+W-\lambda\mu)}\big)
\end{gather*}}
\caption{Q3$_{\delta=1}$ fact sheet. See Section \ref{sec-Q-3-leg} for the meaning of these functions.}\label{table-Q31-disc}
\end{table}

The continuum limit hierarchy is
\begin{gather*}
v_3 = v_{111} - \frac{3}{2} \frac{v_{11}^2 - \frac{1}{4}}{v_1} + \frac{1}{2} v_1^3 - \frac{3}{2} \frac{v_1^3}{\sin(v)^2}, \\
v_{5} = \frac{3}{8} v_{1}^{5} - \frac{5}{4} v_{1} v_{11}^{2} + \frac{5}{2} v_{1}^{2} v_{111} + \frac{15 v_{1}^{5}}{4 \sin(v)^{2}} + \frac{15 v_{1}^{3} v_{11} \cos(v)}{\sin(v)^{3}} - \frac{5}{48} v_{1} - \frac{45 v_{11}^{4}}{8 v_{1}^{3}} + \frac{25 v_{11}^{2} v_{111}}{2 v_{1}^{2}} \\
\hphantom{v_{5} =}{}- \frac{5 v_{111}^{2}}{2 v_{1}} - \frac{5 v_{11} v_{1111}}{v_{1}} + v_{11111} - \frac{45 v_{1}^{5}}{8 \sin(v)^{4}} + \frac{15 v_{1} v_{11}^{2}}{4 \sin(v)^{2}} - \frac{15 v_{1}^{2} v_{111}}{2 \sin(v)^{2}} + \frac{25 v_{11}^{2}}{16 v_{1}^{3}} \\
\hphantom{v_{5} =}{} - \frac{5 v_{111}}{8 v_{1}^{2}} + \frac{5 v_{1}}{16 \sin (v )^{2}} - \frac{5}{128 v_{1}^{3}},\\
\cdots
\end{gather*}
and $v_{2k} = 0$ for all $k \in \N$. A few coefficients of the continuous pluri-Lagrangian 2-form are
\begin{gather*}
\cL_{13} = -\frac{1}{4} v_{1}^2 + \frac{v_{11}^2}{4 v_{1}^2} - \frac{v_{3}}{4 v_{1}} + \frac{3 v_{1}^2}{4 \sin(v)^2} + \frac{1}{16 v_{1}^2},\\
\cL_{15} = -\frac{1}{16} v_{1}^4 + \frac{5}{8} v_{11}^2 - \frac{5}{12} v_{} v_{13} - \frac{5}{12} v_{1} v_{3} - \frac{5 v_{1}^4}{8 \sin(v)^2} + \frac{3 v_{11}^4}{16 v_{1}^4} - \frac{v_{111}^2}{4 v_{1}^2} - \frac{v_{5}}{4 v_{1}} + \frac{15 v_{1}^4}{16 \sin(v)^4} \\
\hphantom{\cL_{15} =}{}
- \frac{15 v_{11}^2}{8 \sin(v)^2} - \frac{5 v_{11}^2}{32 v_{1}^4} + \frac{5}{32 \sin(v)^2} - \frac{1}{256 v_{1}^4} - \frac{5}{96},\\
\cL_{35} = \frac{23}{192} v_{1}^6 - \frac{57}{64} v_{1}^2 v_{11}^2 + \frac{61}{96} v_{1}^3 v_{111} - \frac{55}{96} v_{1}^3 v_{3} - \frac{v_{1}^6}{64 \sin(v)^2} + \frac{3 v_{1}^4 v_{11} \cos(v)}{8 \sin(v)^3} + \frac{211}{768} v_{1}^2 \\
\hphantom{\cL_{35} =}{}
- \frac{105 v_{11}^4}{64 v_{1}^2} + \frac{35 v_{11}^2 v_{111}}{16 v_{1}} + \frac{17}{48} v_{111}^2 - \frac{9}{8} v_{11} v_{1111} + \frac{1}{8} v_{1} v_{11111} + \frac{5}{4} v_{11} v_{13} + \frac{15 v_{11}^2 v_{3}}{16 v_{1}}\\
\hphantom{\cL_{35} =}{}
- \frac{35}{24} v_{111} v_{3} - \frac{5}{12} v_{} v_{33} + \frac{3}{8} v_{1} v_{5} + \frac{15 v_{1}^6}{64 \sin(v)^4} + \frac{39 v_{1}^2 v_{11}^2}{32 \sin(v)^2} - \frac{17 v_{1}^3 v_{111}}{16 \sin(v)^2} - \frac{5 v_{1}^3 v_{3}}{16 \sin(v)^2} \\
\hphantom{\cL_{35} =}{}
- \frac{9 v_{1}^4 v_{11} \cos(v)}{8 \sin(v)^5} - \frac{81 v_{11}^3 \cos(v)}{8 \sin(v)^3} + \frac{39 v_{1} v_{11} v_{111} \cos(v)}{4 \sin(v)^3} - \frac{3 v_{1}^2 v_{1111} \cos(v)}{2 \sin(v)^3} \\
\hphantom{\cL_{35} =}{}
- \frac{15 v_{1} v_{11} v_{3} \cos(v)}{4 \sin(v)^3} + \frac{25 v_{11}^2}{128 v_{1}^2} + \frac{45 v_{11}^6}{64 v_{1}^6} + \frac{95 v_{111}}{192 v_{1}} - \frac{57 v_{11}^4 v_{111}}{32 v_{1}^5} + \frac{19 v_{11}^2 v_{111}^2}{16 v_{1}^4} - \frac{7 v_{111}^3}{8 v_{1}^3} \\
\hphantom{\cL_{35} =}{}
+ \frac{3 v_{11}^3 v_{1111}}{8 v_{1}^4} + \frac{3 v_{11} v_{111} v_{1111}}{4 v_{1}^3} - \frac{v_{1111}^2}{4 v_{1}^2} - \frac{3 v_{11}^2 v_{11111}}{8 v_{1}^3} + \frac{v_{111} v_{11111}}{4 v_{1}^2} - \frac{v_{111} v_{113}}{2 v_{1}^2} + \frac{3 v_{11}^3 v_{13}}{4 v_{1}^4}\! \\
\hphantom{\cL_{35} =}{}
- \frac{v_{11} v_{111} v_{13}}{v_{1}^3} + \frac{v_{1111} v_{13}}{2 v_{1}^2} - \frac{v_{11} v_{15}}{2 v_{1}^2} - \frac{65 v_{3}}{192 v_{1}} + \frac{27 v_{11}^4 v_{3}}{32 v_{1}^5} - \frac{17 v_{11}^2 v_{111} v_{3}}{8 v_{1}^4} + \frac{7 v_{111}^2 v_{3}}{8 v_{1}^3} \\
\hphantom{\cL_{35} =}{}
+ \frac{3 v_{11} v_{1111} v_{3}}{4 v_{1}^3} - \frac{v_{11111} v_{3}}{4 v_{1}^2} - \frac{v_{11}^2 v_{5}}{8 v_{1}^3} + \frac{v_{111} v_{5}}{4 v_{1}^2} + \frac{45 v_{1}^6}{64 \sin(v)^6} + \frac{63 v_{1}^2 v_{11}^2}{64 \sin(v)^4} - \frac{9 v_{1}^3 v_{111}}{32 \sin(v)^4} \\
\hphantom{\cL_{35} =}{}
+ \frac{75 v_{1}^3 v_{3}}{32 \sin(v)^4} - \frac{15 v_{1}^2}{128 \sin(v)^2} + \frac{495 v_{11}^4}{64 v_{1}^2 \sin(v)^2} - \frac{165 v_{11}^2 v_{111}}{16 v_{1} \sin(v)^2} + \frac{3 v_{111}^2}{16 \sin(v)^2} + \frac{27 v_{11} v_{1111}}{8 \sin(v)^2} \\
\hphantom{\cL_{35} =}{}
- \frac{3 v_{1} v_{11111}}{8 \sin(v)^2} - \frac{15 v_{11} v_{13}}{4 \sin(v)^2} + \frac{15 v_{11}^2 v_{3}}{16 v_{1} \sin(v)^2} + \frac{15 v_{111} v_{3}}{8 \sin(v)^2} - \frac{9 v_{1} v_{5}}{8 \sin(v)^2} + \frac{65 v_{11} \cos(v)}{32 \sin(v)^3} \\
\hphantom{\cL_{35} =}{}
+ \frac{55}{1024 v_{1}^2} - \frac{95 v_{11}^4}{256 v_{1}^6} + \frac{41 v_{11}^2 v_{111}}{64 v_{1}^5} - \frac{27 v_{111}^2}{64 v_{1}^4} - \frac{3 v_{11} v_{1111}}{32 v_{1}^4} + \frac{3 v_{11111}}{32 v_{1}^3} - \frac{5 v_{11} v_{13}}{16 v_{1}^4} \\
\hphantom{\cL_{35} =}{}
- \frac{15 v_{11}^2 v_{3}}{64 v_{1}^5} + \frac{5 v_{111} v_{3}}{32 v_{1}^4} + \frac{v_{5}}{32 v_{1}^3} - \frac{135 v_{1}^2}{256 \sin(v)^4} - \frac{255 v_{11}^2}{128 v_{1}^2 \sin(v)^2} - \frac{35 v_{111}}{64 v_{1} \sin(v)^2}\\
\hphantom{\cL_{35} =}{}
 + \frac{5 v_{3}}{64 v_{1} \sin(v)^2} + \frac{55 v_{11}^2}{1024 v_{1}^6} - \frac{25 v_{111}}{512 v_{1}^5} + \frac{3 v_{3}}{512 v_{1}^5} + \frac{15}{1024 v_{1}^2 \sin(v)^2} - \frac{5}{4096 v_{1}^6}.
\end{gather*}

\subsection{Q4}\label{sec-Q4}

Consider equation Q4 in its elliptic form
\begin{gather*}
 A \big( (\wp(V)-b)(\wp(V_2)-b) - (a-b)(c-b) \big) \big( (\wp(V_1)-b)(\wp(V_{12})-b) - (a-b)(c-b) \big) \\
{} +B \big( (\wp(V)-a)(\wp(V_1)-a) - (b-a)(c-a) \big) \big( (\wp(V_2)-a)(\wp(V_{12})-a) - (b-a)(c-a) \big) \\
{}= ABC(a-b),
\end{gather*}
where
\begin{gather*}
 (a,A) = \big( \wp\big(\lambda_1^2\big), \wp'\big(\lambda_1^2\big) \big), \qquad
 (b,B) = \big( \wp\big(\lambda_2^2\big), \wp'\big(\lambda_2^2\big) \big), \\
 (c,C) = \big( \wp(\lambda_2^2-\lambda_1^2), \wp'(\lambda_2^2-\lambda_1^2) \big),
\end{gather*}
and $\wp$ is the Weierstrass elliptic function. The approach of Section \ref{sec-Q-3-leg} works as outlined in Table~\ref{table-Q4-disc}. A few additional functions and constants appear in this table: the Weierstrass func\-tions~$\sigma$ and~$\zeta$, with the same periods as $\wp$, and the invariants $g_2$ and $g_3$ of the $\wp$-function. They satisfy
\begin{alignat*}{3}
&\zeta= \frac{\sigma'}{\sigma}, \qquad && g_2 = 12 \wp^2 - 2 \wp'',& \\
& \zeta'= - \wp,\qquad && g_3= -8 \wp^3 + 2 \wp'' \wp - (\wp')^2.&
\end{alignat*}
Additionally, for the series expansion of $\psi$ we use
\begin{gather*} \sigma(z) = \sum_{m,n = 0}^\infty a_{m,n} \left(\frac{g_2}{2}\right)^m (2 g_3)^n \frac{z^{4m+6n+1}}{(4m+6n+1)!}, \end{gather*}
where the first coefficients are $a_{0,0} = 1$, $a_{1,0} = -1$, and $a_{0,1} = -3$. These and many more identities involving the Weierstrass elliptic functions can be found in \cite[Chapter~18]{abramowitz1964handbook}.

\begin{table}[h]
\fillbox{
\begin{gather*}
\Psi\big(V,W,\lambda^2\big) = \log \left( \frac{ \sigma\big(V+W+\lambda^2\big) \sigma\big(V-W+\lambda^2\big) }{ \sigma\big(V+W-\lambda^2\big) \sigma\big(V-W-\lambda^2\big) } \right)\\
\psi(v,v',\lambda,\mu) = \log \left( \frac{ \sigma(2 v + \lambda \mu) \sigma(2 \lambda v' + \lambda \mu) }{ \sigma(2 v - \lambda \mu) \sigma(2 \lambda v' - \lambda \mu) } \right) \\
\hphantom{\psi(v,v',\lambda,\mu)}{} = \left( 2 \zeta(2v) \lambda \mu - \frac{1}{3} \wp'(2v) \lambda^3 \mu^3 + \frac{1}{5} (\wp(2v) \wp'(2v)) \lambda^5 \mu^5 + \cdots \right) \\
\hphantom{\psi(v,v',\lambda,\mu)=}{} + \left( \log(2 v' + \mu) - \frac{g_2 (2 \lambda v' + \lambda \mu)^4}{240}-\frac{g_3 (2 \lambda v' + \lambda \mu)^6}{840} + \cdots \right) \\
\hphantom{\psi(v,v',\lambda,\mu)=}{} - \left( \log(2 v' - \mu) - \frac{g_2 (2 \lambda v' - \lambda \mu)^4}{240}-\frac{g_3 (2 \lambda v' - \lambda \mu)^6}{840} + \cdots \right)\\
\chi(v,v',\lambda,\mu) = \frac{1}{\lambda} \left( 2 \log(\sigma(2v)) \lambda \mu - \frac{1}{3} \wp(2v) \lambda^3 \mu^3 - \frac{1}{10} \wp(2v)^2 \lambda^5 \mu^5 + \cdots \right) \\
\hphantom{\chi(v,v',\lambda,\mu) =}{} + \left( (2 v' + \mu) (\log (2 v' + \mu) - 1)- \frac{g_2 (2 \lambda v' + \lambda \mu)^5}{1200 \lambda} + \cdots \right) \\
\hphantom{\chi(v,v',\lambda,\mu) =}{}- \left( (2 v' - \mu) (\log (2 v' - \mu) - 1) - \frac{g_2 (2 \lambda v' - \lambda \mu)^5}{1200 \lambda} + \cdots \right)\\
\Lambda(V,W,\lambda,\mu) = 2 \log(\sigma(V+W)) \lambda \mu - \frac{1}{3} \wp(V+W) \lambda^3 \mu^3 - \frac{1}{10} \wp(V+W)^2 \lambda^5 \mu^5 + \cdots \\
\hphantom{\Lambda(V,W,\lambda,\mu) =}{} + (V - W + \mu \lambda) \left(\log \left(\frac{V-W}{\lambda} + \mu \right) - 1 \right) - \frac{g_2 (V-W + \lambda \mu)^5}{1200} + \cdots \\
\hphantom{\Lambda(V,W,\lambda,\mu) =}{} - (V - W - \mu \lambda) \left(\log \left(\frac{V-W}{\lambda} - \mu \right) -1 \right) + \frac{g_2 (V-W - \lambda \mu)^5}{1200} + \cdots
\end{gather*}}
\caption{Q4 fact sheet. See Section \ref{sec-Q-3-leg} for the meaning of these functions.}\label{table-Q4-disc}
\end{table}

The continuum limit hierarchy is
\begin{gather*}
v_{3} = -6 v_{1}^{3} \wp(2 v) - \frac{3 v_{11}^{2}}{2 v_{1}} + v_{111} + \frac{3}{8 v_{1}}, \\
v_{5} = -90 v_{1}^{5} \wp(2 v)^{2} + 144 v_{1}^{5} \wp(v)^{2} - 24 v_{1}^{5} \wp''(v) + \frac{60 v_{1}^{3} v_{11} \wp(2 v) \wp''(v)}{\wp'(v)} - \frac{60 v_{1}^{3} v_{11} \wp(v) \wp''(v)}{\wp'(v)} \\
\hphantom{v_{5} =}{} + 60 v_{1}^{3} v_{11} \wp'(v) + 15 v_{1} v_{11}^{2} \wp(2 v) - 30 v_{1}^{2} v_{111} \wp(2 v) + \frac{5}{4} v_{1} \wp(2 v) - \frac{45 v_{11}^{4}}{8 v_{1}^{3}} + \frac{25 v_{11}^{2} v_{111}}{2 v_{1}^{2}} \\
\hphantom{v_{5} =}{} - \frac{5 v_{111}^{2}}{2 v_{1}} - \frac{5 v_{11} v_{1111}}{v_{1}} + v_{11111} + \frac{25 v_{11}^{2}}{16 v_{1}^{3}} - \frac{5 v_{111}}{8 v_{1}^{2}} - \frac{5}{128 v_{1}^{3}}, \\
\cdots
\end{gather*}
and $v_{2k} = 0$ for all $k \in \N$. To simplify these equations we used the doubling formula for the Weierstrass function,
\begin{gather*} \wp(2v) = -2 \wp(v) + \left( \frac{\wp''(v)}{2 \wp'(v)} \right)^2 .\end{gather*}

As the first nontrivial equation of the limit hierarchy we recognize the Krichever--Novikov equation~\cite{krichever1980holomorphic, nijhoff2002lax}. The continuum limit equations found in all previous examples can be considered as degenerations of the KN equation and its hierarchy, just like all other ABS equations of type Q can be obtained from~Q4~\cite{adler2004q4}.

The continuous pluri-Lagrangian 2-form has the coefficients
\begin{gather*}
\cL_{13} = 3 v_{1}^2 \wp(2 v_{}) + \frac{v_{11}^2}{4 v_{1}^2} - \frac{v_{3}}{4 v_{1}} + \frac{1}{16 v_{1}^2},\\
\cL_{15} = 15 v_{1}^4 \wp(2 v_{})^2 - 24 v_{1}^4 \wp(v_{})^2 + 4 v_{1}^4 \wp''(v_{}) - \frac{15}{2} v_{11}^2 \wp(2 v_{}) \\
\hphantom{\cL_{15} =}{} + \frac{3 v_{11}^4}{16 v_{1}^4} - \frac{v_{111}^2}{4 v_{1}^2} - \frac{v_{5}}{4 v_{1}} - \frac{5 v_{11}^2}{32 v_{1}^4} - \frac{1}{256 v_{1}^4} + \frac{5}{8} \wp(2 v_{}),\\
\cL_{35} = 45 v_{1}^6 \wp(2 v_{})^3 + 216 v_{1}^6 \wp(2 v_{}) \wp(v_{})^2 - 288 v_{1}^6 \wp(v_{})^3 - 36 v_{1}^6 \wp(2 v_{}) \wp''(v_{}) + 72 v_{1}^6 \wp(v_{}) \wp''(v_{})\\
\hphantom{\cL_{35} =}{}
 - 36 v_{1}^6 \wp'(v_{})^2 - \frac{18 v_{1}^4 v_{11} \wp(2 v_{})^2 \wp''(v_{})}{\wp'(v_{})} + \frac{18 v_{1}^4 v_{11} \wp(2 v_{}) \wp(v_{}) \wp''(v_{})}{\wp'(v_{})} - 18 v_{1}^4 v_{11} \wp(2 v_{}) \wp'(v_{})\\
\hphantom{\cL_{35} =}{}
 + \frac{63}{4} v_{1}^2 v_{11}^2 \wp(2 v_{})^2 - \frac{9}{2} v_{1}^3 v_{111} \wp(2 v_{})^2 + \frac{75}{2} v_{1}^3 v_{3} \wp(2 v_{})^2 - 54 v_{1}^2 v_{11}^2 \wp(v_{})^2 + 36 v_{1}^3 v_{111} \wp(v_{})^2 \\
\hphantom{\cL_{35} =}{}
 - 60 v_{1}^3 v_{3} \wp(v_{})^2 + 9 v_{1}^2 v_{11}^2 \wp''(v_{}) - 6 v_{1}^3 v_{111} \wp''(v_{}) + 10 v_{1}^3 v_{3} \wp''(v_{}) - \frac{135}{16} v_{1}^2 \wp(2 v_{})^2 \\
\hphantom{\cL_{35} =}{}
 + \frac{51}{2} v_{1}^2 \wp(v_{})^2 - \frac{81 v_{11}^3 \wp(2 v_{}) \wp''(v_{})}{2 \wp'(v_{})} + \frac{39 v_{1} v_{11} v_{111} \wp(2 v_{}) \wp''(v_{})}{\wp'(v_{})} - \frac{6 v_{1}^2 v_{1111} \wp(2 v_{}) \wp''(v_{})}{\wp'(v_{})} \\
\hphantom{\cL_{35} =}{}
 - \frac{15 v_{1} v_{11} v_{3} \wp(2 v_{}) \wp''(v_{})}{\wp'(v_{})} + \frac{81 v_{11}^3 \wp(v_{}) \wp''(v_{})}{2 \wp'(v_{})} - \frac{39 v_{1} v_{11} v_{111} \wp(v_{}) \wp''(v_{})}{\wp'(v_{})} \\
\hphantom{\cL_{35} =}{}
 + \frac{6 v_{1}^2 v_{1111} \wp(v_{}) \wp''(v_{})}{\wp'(v_{})} + \frac{15 v_{1} v_{11} v_{3} \wp(v_{}) \wp''(v_{})}{\wp'(v_{})} - \frac{81}{2} v_{11}^3 \wp'(v_{}) + 39 v_{1} v_{11} v_{111} \wp'(v_{}) \\
\hphantom{\cL_{35} =}{}
 - 6 v_{1}^2 v_{1111} \wp'(v_{}) - 15 v_{1} v_{11} v_{3} \wp'(v_{}) + \frac{495 v_{11}^4 \wp(2 v_{})}{16 v_{1}^2} - \frac{165 v_{11}^2 v_{111} \wp(2 v_{})}{4 v_{1}} + \frac{3}{4} v_{111}^2 \wp(2 v_{}) \\
\hphantom{\cL_{35} =}{}
 + \frac{27}{2} v_{11} v_{1111} \wp(2 v_{}) - \frac{3}{2} v_{1} v_{11111} \wp(2 v_{}) - 15 v_{11} v_{13} \wp(2 v_{}) + \frac{15 v_{11}^2 v_{3} \wp(2 v_{})}{4 v_{1}} \\
\hphantom{\cL_{35} =}{}
 + \frac{15}{2} v_{111} v_{3} \wp(2 v_{}) - \frac{9}{2} v_{1} v_{5} \wp(2 v_{}) - \frac{17}{4} v_{1}^2 \wp''(v_{}) + \frac{65 v_{11} \wp(2 v_{}) \wp''(v_{})}{8 \wp'(v_{})} - \frac{65 v_{11} \wp(v_{}) \wp''(v_{})}{8 \wp'(v_{})} \\
\hphantom{\cL_{35} =}{}
 + \frac{65}{8} v_{11} \wp'(v_{}) - \frac{255 v_{11}^2 \wp(2 v_{})}{32 v_{1}^2} - \frac{35 v_{111} \wp(2 v_{})}{16 v_{1}} + \frac{5 v_{3} \wp(2 v_{})}{16 v_{1}} + \frac{45 v_{11}^6}{64 v_{1}^6} - \frac{57 v_{11}^4 v_{111}}{32 v_{1}^5} \\
\hphantom{\cL_{35} =}{}
 + \frac{19 v_{11}^2 v_{111}^2}{16 v_{1}^4} - \frac{7 v_{111}^3}{8 v_{1}^3} + \frac{3 v_{11}^3 v_{1111}}{8 v_{1}^4} + \frac{3 v_{11} v_{111} v_{1111}}{4 v_{1}^3} - \frac{v_{1111}^2}{4 v_{1}^2} - \frac{3 v_{11}^2 v_{11111}}{8 v_{1}^3} + \frac{v_{111} v_{11111}}{4 v_{1}^2} \\
\hphantom{\cL_{35} =}{}
 - \frac{v_{111} v_{113}}{2 v_{1}^2} + \frac{3 v_{11}^3 v_{13}}{4 v_{1}^4} - \frac{v_{11} v_{111} v_{13}}{v_{1}^3} + \frac{v_{1111} v_{13}}{2 v_{1}^2} - \frac{v_{11} v_{15}}{2 v_{1}^2} + \frac{27 v_{11}^4 v_{3}}{32 v_{1}^5} - \frac{17 v_{11}^2 v_{111} v_{3}}{8 v_{1}^4}\\
\hphantom{\cL_{35} =}{}
 + \frac{7 v_{111}^2 v_{3}}{8 v_{1}^3} + \frac{3 v_{11} v_{1111} v_{3}}{4 v_{1}^3} - \frac{v_{11111} v_{3}}{4 v_{1}^2} - \frac{v_{11}^2 v_{5}}{8 v_{1}^3} + \frac{v_{111} v_{5}}{4 v_{1}^2} + \frac{15 \wp(2 v_{})}{256 v_{1}^2} - \frac{95 v_{11}^4}{256 v_{1}^6} \\
\hphantom{\cL_{35} =}{}
 + \frac{41 v_{11}^2 v_{111}}{64 v_{1}^5} - \frac{27 v_{111}^2}{64 v_{1}^4} - \frac{3 v_{11} v_{1111}}{32 v_{1}^4} + \frac{3 v_{11111}}{32 v_{1}^3} - \frac{5 v_{11} v_{13}}{16 v_{1}^4} - \frac{15 v_{11}^2 v_{3}}{64 v_{1}^5} + \frac{5 v_{111} v_{3}}{32 v_{1}^4} + \frac{v_{5}}{32 v_{1}^3}\\
\hphantom{\cL_{35} =}{}
 + \frac{55 v_{11}^2}{1024 v_{1}^6} - \frac{25 v_{111}}{512 v_{1}^5} + \frac{3 v_{3}}{512 v_{1}^5} - \frac{5}{4096 v_{1}^6}.
\end{gather*}

\section{ABS equations of type H}\label{sec-H}

For quad equations of type H we do not have a general strategy to find a form of the difference equation and its Lagrangian that is suitable for the continuum limit. This has to be investigated on a case-by-case basis. The continuum limit of equation H1 was discussed in \cite{vermeeren2019continuum}.

\subsection{H2}

No continuum limit is known for equation H2, which reads
\begin{gather}\label{H2-lambda}
(U - U_{12}) (U_1 - U_2) + \big(\lambda_2^2 - \lambda_1^2\big)( U + U_1 + U_2 + U_{12} ) + \lambda_2^4 - \lambda_1^4 = 0.
\end{gather}
We give a heuristic explanation of the obstruction one encounters when trying to pass to the continuum limit.

A tempting trick would be to change the sign of the field at every other vertex, $V(n,m) = (-1)^{n+m} U(n,m)$. The resulting equation
\begin{gather*} -(V - V_{12}) (V_1 - V_2) \pm \big(\lambda_2^2 - \lambda_1^2\big)( V - V_1 - V_2 + V_{12} ) + \lambda_2^4 - \lambda_1^4 = 0.
 \end{gather*}
has a suitable power series expansion, but the sign of the second term depends on the location in the lattice. In other words, the equation has become nonautonomous. Though there might not be any fundamental objection to this, the pluri-Lagrangian theory for nonautonomous systems has not yet been developed.

Of course the fact that one particular change of variables fails does not imply that there is no continuum limit. A better perspective on the issue is given by \emph{background solutions}. When the zero field $U \equiv 0$ is a solution to the quad equation, the continuum limit assumes the field to be small compared to the inverse of the parameters. If $U \equiv 0$ is not a solution, as is the case for equation~\eqref{H2-lambda}, one can look for a different solution to expand around. Such solutions are known as background solutions. There are two kinds of background solutions for H2 that treat both lattice directions equally~\cite{hietarinta2009soliton}:
\begin{gather*} U(n,m) = (\lambda_1 n + \lambda_2 m + c)^2 \end{gather*}
and
\begin{gather*} U(n,m) = \left(\frac{1}{2}(-1)^n \lambda_1 + \frac{1}{2}(-1)^m \lambda_2 + c \right)^2. \end{gather*}
Expanding around either of these solutions is equivalent to expanding around $0$ after a nonautonomous changes of variables. As in our first attempt above, these changes of variables make the equation nonautonomous.

Another approach would be to use the fact that H2 can be obtained as a degeneration of another ABS equation, for example Q2. However, this leads to the same problem. As one can see for example from equation (5.22) of \cite{nijhoff2009soliton}, solutions will pick up an alternating factor $(-1)^{m+n}$ in this degeneration.

\subsection[H3_{\delta=0}$]{$\boldsymbol{{\rm H3}_{\delta=0}}$}\label{sec-H3}

Equation H3 is usually given as
\begin{gather*}%\label{H3}
\lambda_1 (U U_1 + U_2 U_{12}) - \lambda_2 (U U_2 + U_1 U_{12}) = 0,
\end{gather*}
but it is useful to perform a nonautonomous transformation $U(n,m) \rightarrow i^{n+m} U(n,m)$, which gives us
\begin{gather}\label{lmKdV}
\lambda_1 (U U_1 - U_2 U_{12}) - \lambda_2 (U U_2 - U_1 U_{12}) = 0 ,
\end{gather}
which is known as the \emph{lattice modified KdV equation} \cite{nijhoff1995discrete}. Note that even though we used a~nonautonomous transformation, the resulting equation is again autonomous, so we avoid the issues encountered with~H2. The continuum limit of equation~\eqref{lmKdV} is the hierarchy
\begin{gather*}
u_3 = u_{111} - 3 \frac{u_1 u_{11}}{u} , \\
u_{5} = - 10 \frac{u_{1}^{3} u_{11}}{u^{3}} + 10 \frac{2 u_{1} u_{11}^{2} + u_{1}^{2} u_{111}}{u^{2}} - 5 \frac{2
u_{11} u_{111} + u_{1} u_{1111}}{u} + u_{11111}, \\
\cdots
\end{gather*}
and $u_{2k} = 0$ for all $k \in \N$. However, we run in to trouble on the Lagrangian level. The Lagrangian given in~\cite{lobb2009lagrangian} for H3, adapted to the case $\delta = 0$, is
\begin{gather*}
L = \frac{1}{2} \log \left( \frac{U U_1}{\lambda_1} \right)^2 - \frac{1}{2} \log \left( \frac{U U_2}{\lambda_2} \right)^2 + \Li2 \left( \frac{\lambda_2 U_1}{\lambda_1 U_2} \right) - \Li2 \left( \frac{\lambda_1 U_1}{\lambda_2 U_2} \right)\\
\hphantom{L =}{} + 2 \big( \log(\lambda_1) - \log(\lambda_2) \big) \log(U) + 2 \log(\lambda_2) \big( \log(U_1) - \log(U_2) \big) ,
\end{gather*}
where $\Li2$ is the dilogarithm function. Unfortunately, the occurrence of expressions like $\log(\lambda_i)$ prohibits a power series expansion in the parameters~$\lambda_i$.

In order to find a better form of H3$_{\delta=0}$, with an expandable Lagrangian, we make the transformation $ U = \exp \big( \frac{i}{2} V \big)$. Then equation~\eqref{lmKdV} turns in to
\begin{gather*} \lambda_1 \big( {\rm e}^{ \frac{i}{2}(V + V_1) } - {\rm e}^{ \frac{i}{2}(V_2 + V_{12}) } \big) - \lambda_2 \big( {\rm e}^{ \frac{i}{2}(V + V_2) } - {\rm e}^{ \frac{i}{2}(V_1 + V_{12}) } \big) = 0. \end{gather*}
Multiplying with $\exp \big( {-}\frac{i}{4} (V+V_1+V_2+V_{12}) \big)$ turns this into
\begin{gather}\label{H3-SG}
\lambda_1 \sin \left( \frac{1}{4}(V+V_1-V_2-V_{12}) \right) - \lambda_2 \sin \left( \frac{1}{4}(V-V_1+V_2-V_{12}) \right) = 0,
\end{gather}
which is the form in which H3$_{\delta=0}$ arises from B\"acklund transformations for the sine-Gordon equation \cite[p.~60]{hietarinta2016discrete}. Additionally, it is closely related to the discrete sine-Gordon equation of Hirota \cite{hirota1977nonlinear3}. Using standard trigonometric identities we can rewrite the equation as
\begin{gather*} (\lambda_1 - \lambda_2) \tan \left( \frac{V_{12} - V}{4}\right) - (\lambda_1 + \lambda_2) \tan \left( \frac{V_1 - V_2}{4}\right) = 0, \end{gather*}
which we can put in a three-leg form:
\begin{gather*} \arctan \left( \frac{\lambda_1 + \lambda_2}{\lambda_1 - \lambda_2}\tan \left( \frac{V_1 - V_2}{4}\right) \right) = \frac{V_{12} - V_1}{4} - \frac{V - V_1}{4} . \end{gather*}
From the three-leg form we can derive a Lagrangian as in \cite{bobenko2010lagrangian}. It takes the implicit form
\begin{gather*} L = \frac{1}{8}(V_1 - V)^2 - \frac{1}{8}(V_2 - V)^2 - \mathcal{I}_{\lambda_1,\lambda_2}(V_1-V_2), \end{gather*}
where
\begin{gather*}\mathcal{I}_{\lambda_1,\lambda_2}(x) = \int_0^x \arctan \left( \frac{\lambda_1 + \lambda_2}{\lambda_1 - \lambda_2}\tan \left(\frac{y}{4}\right) \right) \d y .\end{gather*}
There is no need to evaluate this integral exactly. Instead one can expand the integrand as a~power series up to any desired order in $y$ and integrate this series. This is sufficient to write $L$ as a~power series up to the corresponding order in the parameters $\lambda_i$. A leading order calculation shows that this power series does not contain any terms of nonpositive order in either of the parameters, so it is suitable for the continuum limit.

In the continuum limit of equation~\eqref{H3-SG} we find the potential modified KdV hierarchy
\begin{gather*}
v_3 = v_{111} + \frac{1}{2} v_1^3 , \\
v_{5} = \frac{3}{8} v_{1}^{5} + \frac{5}{2} v_{1} v_{11}^{2} + \frac{5}{2} v_{1}^{2} v_{111} + v_{11111}, \\
\cdots
\end{gather*}
and $v_{2k} = 0$ for all $k \in \N$. It was shown in \cite{vermeeren2019continuum} that our continuum limit procedure is equivalent to the one presented for the lattice potential KdV and mKdV equations in \cite{wiersma1987lattice}, where it is proved that it results in the potential KdV and mKdV hierarchies.

Some coefficients of the pluri-Lagrangian $2$-form are
\begin{gather*}
\cL_{13} = \frac{1}{16} v_{1}^4 - \frac{1}{4} v_{11}^2 + \frac{1}{4} v_{} v_{13},\\
\cL_{15} = \frac{1}{32} v_{1}^6 - \frac{5}{8} v_{1}^2 v_{11}^2 + \frac{1}{4} v_{111}^2 + \frac{1}{4} v_{} v_{15},\\
\cL_{35} = -\frac{3}{256} v_{1}^8 + \frac{5}{32} v_{1}^4 v_{11}^2 - \frac{7}{32} v_{1}^5 v_{111} + \frac{3}{32} v_{1}^5 v_{3} + \frac{1}{16} v_{11}^4 - \frac{7}{8} v_{1} v_{11}^2 v_{111} - \frac{3}{8} v_{1}^2 v_{111}^2\\
\hphantom{\cL_{35} =}{}
 + \frac{3}{4} v_{1}^2 v_{11} v_{1111} - \frac{1}{8} v_{1}^3 v_{11111} - \frac{5}{4} v_{1}^2 v_{11} v_{13} + \frac{5}{8} v_{1} v_{11}^2 v_{3} + \frac{5}{8} v_{1}^2 v_{111} v_{3} - \frac{1}{8} v_{1}^3 v_{5} + \frac{1}{4} v_{1111}^2 \\
\hphantom{\cL_{35} =}{}
 - \frac{1}{4} v_{111} v_{11111} + \frac{1}{2} v_{111} v_{113} - \frac{1}{2} v_{1111} v_{13} + \frac{1}{2} v_{11} v_{15} + \frac{1}{4} v_{11111} v_{3} - \frac{1}{4} v_{111} v_{5}
\end{gather*}

\subsection[H3$_{\delta=1}$]{$\boldsymbol{{\rm H3}_{\delta=1}}$}

Due to difficulties analogous to those of equation H2, no continuum limit is known for equation H3$_{\delta=1}$,
\begin{gather*} \lambda_1(U U_1 + U_2 U_{12}) - \lambda_2(U U_2 + U_1 U_{12}) + \big(\lambda_1^2 - \lambda_2^2\big) = 0.\end{gather*}

\section{H3 and the sine-Gordon equation}\label{sec-SG}

Equation H3 is not just related to the modified KdV equation, but also to the sine-Gordon equation. In this section we investigate how to find both in the continuum limit. To do so we need to transform equation~\eqref{lmKdV} in a way that breaks the symmetry:
\begin{gather*} U(n,m) \mapsto U(n,m)^{(-1)^{m}} , \qquad \lambda_1 \mapsto \lambda_1 , \qquad \lambda_2 \mapsto \frac{1}{\lambda_2}.\end{gather*}
Setting again $ U = \exp \big( \frac{i}{2} V \big)$, we find the equation
\begin{gather}\label{SG}
\sin \left( \frac{1}{4}(V+V_1+V_2+V_{12}) \right) - \frac{1}{\lambda_1 \lambda_2} \sin \left( \frac{1}{4}(V-V_1-V_2+V_{12}) \right) = 0.
\end{gather}
In order to recover some kind of multidimensional consistency for this equation, we consider an even-dimensional lattice $\Z^{2N}$ with coordinates $(n_1,\dots,n_N,n_{\ol{1}},\dots, n_{\ol{N}})$, where the coordinates $(n_{\ol{1}},\dots, n_{\ol{N}})$ correspond to lattice directions where the above transformation takes place. Then on any quad spanned by directions $n_k$ and $n_{\ol{\ell}}$ we impose equation~\eqref{SG},
\begin{gather}\label{SG2}
\sin \left( \frac{1}{4}(V+V_k+V_{\ol{\ell}}+V_{k\ol{\ell}}) \right) - \frac{1}{\lambda_k \lambda_{\ol{\ell}}} \sin \left( \frac{1}{4}(V-V_k-V_{\ol{\ell}}+V_{k\ol{\ell}}) \right) = 0.
\end{gather}
The original equation H3, in the form of equation~\eqref{H3-SG}, is imposed on quads spanned by directions $n_k$ and $n_\ell$, and also on quads spanned by directions $n_{\ol{k}}$ and $n_{\ol{\ell}}$, because transforming both directions leaves the equation invariant.

We extend the Miwa correspondence, treating the barred variables in the same way as the variables without decoration, but contributing to a separate set of continuous variables $(t_{\ol{1}}, \dots, t_{\ol{N}})$,
\begin{gather*} t_i = (-1)^{i+1} \left(n_1 \frac{c \lambda_1^i}{i} + \dots + n_N \frac{c \lambda_N^i}{i} \right) + \tau_i, \\
 t_{\ol{i}} = (-1)^{i+1} \left(n_{\ol{1}} \frac{c \lambda_{\ol{1}}^i}{i} + \dots + n_{\ol{N}} \frac{c \lambda_{\ol{N}}^i}{i} \right) + \tau_{\ol{i}}. \end{gather*}
Then in the leading order of the continuum limit, equation~\eqref{SG2} produces the sine-Gordon equation
\begin{gather*} \sin(v) - v_{1 \ol{1}} = 0. \end{gather*}
In addition to this equation, the full series expansion in the parameters $\lambda_k$ and $\lambda_{\ol{\ell}}$ yields two copies of the potential modified KdV hierarchy,
\begin{alignat*}{3}
&v_2= 0 , \qquad && v_{\ol{2}} = 0 ,& \\
&v_3= v_{111} + \frac{1}{2} v_1^3 , \qquad && v_{\ol{3}}= v_{\ol{111}} + \frac{1}{2} v_{\ol{1}}^3 ,& \\
&v_4= 0 , \qquad && v_{\ol{4}}= 0 , &\\
& v_{5}= \frac{3}{8} v_{1}^{5} + \frac{5}{2} v_{1} v_{11}^{2} + \frac{5}{2} v_{1}^{2} v_{111} + v_{11111} , \qquad&&
v_{\ol{5}}= \frac{3}{8} v_{\ol{1}}^{5} + \frac{5}{2} v_{\ol{1}} v_{\ol{11}}^{2} +
\frac{5}{2} v_{\ol{1}}^{2} v_{\ol{111}} + v_{\ol{11111}} ,& \\
&\cdots &&\cdots s&
\end{alignat*}
These are the same hierarchies we obtain when taking the continuum limits of the equations imposed on quads spanned by directions $n_k$ and $n_\ell$, or $n_{\ol{k}}$ and $n_{\ol{\ell}}$.

The discrete Lagrangian takes slightly different forms on the three types of quads:
\begin{gather*}
L(\square_{k\ell}) = \frac{1}{8}(V_k - V)^2 - \frac{1}{8}(V_\ell - V)^2 - \mathcal{I}_{\lambda_k,\lambda_\ell}(V_k-V_\ell) , \\
L(\square_{k\ol{\ell}}) = \frac{1}{8}(V_k - V)^2 - \frac{1}{8}(V_{\ol{\ell}} + V)^2 - \mathcal{J}_{\lambda_k,\lambda_{\ol{\ell}}}(V_k+V_{\ol{\ell}}) , \\
L(\square_{\ol{k\ell}}) = \frac{1}{8}(V_{\ol{k}} + V)^2 - \frac{1}{8}(V_{\ol{\ell}} + V)^2 + \mathcal{I}_{\lambda_{\ol{k}},\lambda_{\ol{\ell}}}(V_{\ol{k}}-V_{\ol{\ell}}) ,
\end{gather*}
where
\begin{gather*} \mathcal{I}_{\lambda_k,\lambda_\ell}(x) = \int_0^x \arctan \left( \frac{\lambda_k + \lambda_\ell}{\lambda_k - \lambda_\ell}\tan \left(\frac{y}{4}\right) \right) \d y \end{gather*}
and
\begin{gather*} \mathcal{J}_{\lambda_k,\lambda_{\ol{\ell}}}(x) =
\mathcal{I}_{\lambda_k,\frac{1}{\lambda_{\ol{\ell}}}}(x) = \int_0^x \arctan \left( \frac{\lambda_k \lambda_{\ol{\ell}} + 1}{\lambda_k \lambda_{\ol{\ell}} - 1}\tan \left(\frac{y}{4}\right) \right) \d y .\end{gather*}
To get the required leading-order cancellation, we add to $L$ the exterior derivative of the discrete 1-form $\eta$ given by $\eta(V,V_{\ol{\ell}}) = -\frac{1}{8} \big( V^2 + V_{\ol{\ell}}^2 \big)$ and $\eta(V,V_k) = 0$. This yields
\begin{gather*}
L(\square_{k\ell}) = \frac{1}{8}(V_k - V)^2 - \frac{1}{8}(V_\ell - V)^2 - \mathcal{I}_{\lambda_k,\lambda_\ell}(V_k-V_\ell) , \\
L(\square_{k\ol{\ell}}) = \frac{1}{8}(V_k - V)^2 - \frac{1}{8}(V_{\ol{\ell}} + V)^2 - \mathcal{J}_{\lambda_k,\lambda_{\ol{\ell}}}(V_k+V_{\ol{\ell}}) - \frac{1}{8} \big( V_k^2 - V^2 + V_{k\ol{\ell}}^2 - V_{\ol{\ell}}^2 \big), \\
L(\square_{\ol{k\ell}}) = \frac{1}{8}(V_{\ol{k}} + V)^2 - \frac{1}{8}(V_{\ol{\ell}} + V)^2 + \mathcal{I}_{\lambda_{\ol{k}},\lambda_{\ol{\ell}}}(V_{\ol{k}}-V_{\ol{\ell}}) + \frac{1}{4} \big( V_{\ol{\ell}}^2 - V_{\ol{k}}^2 \big) \\
\hphantom{L(\square_{\ol{k\ell}})}{}= -\frac{1}{8} (V_{\ol{k}} - V)^2 + \frac{1}{8}(V_{\ol{\ell}} - V)^2 + \mathcal{I}_{\lambda_{\ol{k}},\lambda_{\ol{\ell}}}(V_{\ol{k}} - V_{\ol{\ell}}) .
\end{gather*}
The resulting pluri-Lagrangian 2-form is of the form
\begin{gather*} \cL = \sum_{i,j \in \{1,\ol{1},2,\ol{2},\dots\}} \cL_{ij} \, \d t_i \wedge \d t_j , \end{gather*}
where $\cL_{ij} = - \cL_{ji}$. If $i,j \in \{1,2,\dots\}$, then the coefficients are exactly those found in Section~\ref{sec-H3}. If $i,j \in \{\ol{1},\ol{2},\dots\}$ the coefficients are obtained from those by placing a bar over every derivative and putting a minus sign in front. As before, we can restrict our attention to the odd-numbered times to get a pluri-Lagrangian structure for the nontrivial equations
\begin{gather*} \cL = \sum_{i,j \in \{1,\ol{1},3,\ol{3},\dots\}} \cL_{ij} \,\d t_i \wedge \d t_j . \end{gather*}
Some coefficients of mixed type are
\begin{gather*}
\cL_{\ol{1}1} = -\frac{1}{4} v_{1\ol{1}} v_{} - \frac{1}{2} \cos(v),\\
\cL_{\ol{1}3} = -\frac{1}{4} v_{1}^2 \cos(v) - \frac{1}{2} v_{11} v_{1\ol{1}} + \frac{1}{4} v_{3} v_{\ol{1}} + \frac{1}{2} v_{11} \sin(v),\\
\cL_{\ol{1}5} = -\frac{3}{16} v_{1}^4 \cos(v) - \frac{5}{4} v_{11} v_{1}^2 v_{1\ol{1}} + \frac{3}{4} v_{11} v_{1}^2 \sin(v) + \frac{1}{4} v_{11}^2 \cos(v) - \frac{1}{2} v_{111} v_{1} \cos(v)\\
\hphantom{\cL_{\ol{1}5} =}{}
+ \frac{1}{2} v_{111} v_{11\ol{1}} - \frac{1}{2} v_{1111} v_{1\ol{1}} + \frac{1}{4} v_{5} v_{\ol{1}} + \frac{1}{2} v_{1111} \sin(v),\\
\cL_{\ol{3}1} = -\frac{1}{4} v_{\ol{1}}^2 \cos(v) - \frac{1}{2} v_{1\ol{3}} v_{} + \frac{1}{2} v_{1\ol{1}\ol{1}} v_{\ol{1}} - \frac{1}{4} v_{1} v_{\ol{3}} + \frac{1}{2} v_{\ol{1}\ol{1}} \sin(v),\\
\cL_{\ol{3}3} = -\frac{1}{8} v_{1}^2 v_{\ol{1}}^2 \cos(v) + \frac{1}{4} v_{11} v_{\ol{1}}^2 \sin(v) + \frac{1}{4} v_{1}^2 v_{\ol{1}\ol{1}} \sin(v) + \frac{1}{2} v_{11} v_{\ol{1}\ol{1}} \cos(v) - \frac{1}{2} v_{11} v_{1\ol{3}} \\
\hphantom{\cL_{\ol{3}3} =}{}
- \frac{1}{4} v_{3\ol{3}} v_{}- \frac{1}{2} v_{1} v_{\ol{1}} + \frac{1}{2} v_{3\ol{1}\ol{1}} v_{\ol{1}} - \frac{1}{2} \cos(v),\\
\cL_{\ol{3}5} = -\frac{3}{32} v_{1}^4 v_{\ol{1}}^2 \cos(v) + \frac{3}{8} v_{11} v_{1}^2 v_{\ol{1}}^2 \sin(v) + \frac{3}{16} v_{1}^4 v_{\ol{1}\ol{1}} \sin(v) + \frac{1}{8} v_{11}^2 v_{\ol{1}}^2 \cos(v) \\
\hphantom{\cL_{\ol{3}5} =}{}- \frac{1}{4} v_{111} v_{1} v_{\ol{1}}^2 \cos(v) + \frac{3}{4} v_{11} v_{1}^2 v_{\ol{1}\ol{1}} \cos(v) - \frac{5}{4} v_{11} v_{1}^2 v_{1\ol{3}} - \frac{1}{4} v_{1}^3 v_{\ol{1}} + \frac{1}{4} v_{1111} v_{\ol{1}}^2 \sin(v)\\
\hphantom{\cL_{\ol{3}5} =}{} - \frac{1}{4} v_{11}^2 v_{\ol{1}\ol{1}} \sin(v) + \frac{1}{2} v_{111} v_{1} v_{\ol{1}\ol{1}} \sin(v) - \frac{1}{4} v_{1}^2 \cos(v) + \frac{1}{2} v_{1111} v_{\ol{1}\ol{1}} \cos(v) \\
\hphantom{\cL_{\ol{3}5} =}{}
 + \frac{1}{2} v_{111} v_{11\ol{3}} - \frac{1}{2} v_{1111} v_{1\ol{3}} - \frac{1}{4} v_{5\ol{3}} v_{} - \frac{1}{2} v_{111} v_{\ol{1}} + \frac{1}{2} v_{5\ol{1}\ol{1}} v_{\ol{1}} + \frac{1}{2} v_{11} \sin(v),\\
\cL_{\ol{5}1} = -\frac{7}{144} v_{\ol{1}}^4 \cos(v) + \frac{5}{36} v_{1\ol{1}\ol{1}} v_{\ol{1}}^3 + \frac{5}{12} v_{1\ol{1}} v_{\ol{1}}^2 v_{\ol{1}\ol{1}} - \frac{1}{12} v_{\ol{1}}^2 v_{\ol{1}\ol{1}} \sin(v) + \frac{1}{4} v_{\ol{1}\ol{1}}^2 \cos(v) \\
\hphantom{\cL_{\ol{5}1} =}{}
- \frac{2}{9} v_{\ol{1}} v_{\ol{1}\ol{1}\ol{1}} \cos(v) - \frac{5}{18} v_{\ol{1}} v_{\ol{3}} \cos(v) - \frac{1}{2} v_{1\ol{5}} v_{} - \frac{1}{18} v_{1\ol{1}\ol{1}\ol{1}\ol{1}} v_{\ol{1}} + \frac{5}{9} v_{1\ol{1}\ol{3}} v_{\ol{1}} - \frac{2}{9} v_{1\ol{1}\ol{1}\ol{1}} v_{\ol{1}\ol{1}} \\
\hphantom{\cL_{\ol{5}1} =}{}
- \frac{5}{18} v_{1\ol{3}} v_{\ol{1}\ol{1}} - \frac{1}{4} v_{1} v_{\ol{5}} - \frac{1}{18} v_{\ol{1}\ol{1}\ol{1}\ol{1}} \sin(v) + \frac{5}{9} v_{\ol{1}\ol{3}} \sin(v),\\
\cL_{\ol{5}3} = -\frac{7}{288} v_{1}^2 v_{\ol{1}}^4 \cos(v) + \frac{7}{144} v_{11} v_{\ol{1}}^4 \sin(v) - \frac{1}{24} v_{1}^2 v_{\ol{1}}^2 v_{\ol{1}\ol{1}} \sin(v) - \frac{1}{12} v_{11} v_{\ol{1}}^2 v_{\ol{1}\ol{1}} \cos(v)\\
\hphantom{\cL_{\ol{5}3} =}{}
 + \frac{1}{8} v_{1}^2 v_{\ol{1}\ol{1}}^2 \cos(v) - \frac{1}{9} v_{1}^2 v_{\ol{1}} v_{\ol{1}\ol{1}\ol{1}} \cos(v) - \frac{5}{36} v_{1}^2 v_{\ol{1}} v_{\ol{3}} \cos(v) - \frac{1}{9} v_{1} v_{\ol{1}}^3 + \frac{5}{36} v_{3\ol{1}\ol{1}} v_{\ol{1}}^3 \\
\hphantom{\cL_{\ol{5}3} =}{}
 + \frac{5}{12} v_{3\ol{1}} v_{\ol{1}}^2 v_{\ol{1}\ol{1}} - \frac{1}{4} v_{11} v_{\ol{1}\ol{1}}^2 \sin(v) + \frac{2}{9} v_{11} v_{\ol{1}} v_{\ol{1}\ol{1}\ol{1}} \sin(v) - \frac{1}{36} v_{1}^2 v_{\ol{1}\ol{1}\ol{1}\ol{1}} \sin(v) \\
\hphantom{\cL_{\ol{5}3} =}{}
+ \frac{5}{18} v_{1}^2 v_{\ol{1}\ol{3}} \sin(v) + \frac{5}{18} v_{11} v_{\ol{1}} v_{\ol{3}} \sin(v) - \frac{1}{4} v_{\ol{1}}^2 \cos(v) - \frac{1}{18} v_{11} v_{\ol{1}\ol{1}\ol{1}\ol{1}} \cos(v)\\
\hphantom{\cL_{\ol{5}3} =}{}
 + \frac{5}{9} v_{11} v_{\ol{1}\ol{3}} \cos(v) - \frac{1}{2} v_{11} v_{1\ol{5}}
 - \frac{1}{4} v_{3\ol{5}} v_{} - \frac{1}{18} v_{3\ol{1}\ol{1}\ol{1}\ol{1}} v_{\ol{1}} + \frac{5}{9} v_{3\ol{1}\ol{3}} v_{\ol{1}} \\
\hphantom{\cL_{\ol{5}3} =}{}
 - \frac{2}{9} v_{3\ol{1}\ol{1}\ol{1}} v_{\ol{1}\ol{1}} - \frac{5}{18} v_{3\ol{3}} v_{\ol{1}\ol{1}} - \frac{2}{9} v_{1} v_{\ol{1}\ol{1}\ol{1}} - \frac{5}{18} v_{1} v_{\ol{3}}
 + \frac{1}{2} v_{\ol{1}\ol{1}} \sin(v).
\end{gather*}

A few examples of multi-time Euler--Lagrange equations are
\begin{gather*}
0 = \var{\ol{1}1}{\cL_{\ol{1}1}}{v} = -\tfrac{1}{2} v_{1\ol{1}} + \tfrac{1}{2} \sin(v) , \\
0 = \var{\ol{1}3}{\cL_{\ol{1}3}}{v_{\ol{1}}} - \var{\ol{3}3}{\cL_{\ol{3}3}}{v_{\ol{3}}}
= \frac{1}{4} v_3 - \frac{1}{4} \big( v_1^3 + 2 v_{111} - v_3 \big) , \\
0 = \var{1\ol{3}}{\cL_{1\ol{3}}}{v_1} - \var{3\ol{3}}{\cL_{3\ol{3}}}{v_3}
= -\frac{1}{4} v_{\ol{3}} + \frac{1}{4} \big( v_{\ol{1}}^3 + 2 v_{\ol{111}} - v_{\ol{3}} \big) , \\
0 = \var{\ol{1}5}{\cL_{\ol{1}5}}{v_{\ol{1}}} - \var{15}{\cL_{15}}{v_1} = \frac{1}{4} v_5 - \left( \frac{3}{16} v_1^5 + \frac{5}{4} v_1 v_{11}^2 + \frac{5}{4} v_1^2 v_{111} + \frac{1}{2} v_{11111} - \frac{1}{4} v_5 \right) .
\end{gather*}
One recognizes the equations of the double hierarchy. All multi-time Euler--Lagrange equations are either trivial, or differential consequences of the equations of the hierarchy.

\section{About the even-numbered times}\label{sec-even}

In all the examples so far, only the odd-numbered times have nontrivial equations associated to them. This is consistent with several purely continuous descriptions of integrable hierarchies. From the perspective of continuum limits, it follows from the fact that for all equations we dealt with, $Q$ is an even or odd function of the parameters.

\begin{Proposition}\label{prop-time-reversal}If the difference equation $Q = 0$ satisfies
\begin{gather*} Q(U,U_1,U_2,U_{12},\lambda_1,\lambda_2) = \pm Q(U,U_1,U_2,U_{12},-\lambda_1,-\lambda_2) \end{gather*}
then the continuum limit hierarchy is invariant under simultaneous reversal of all even-numbered times, $t_{2k} \mapsto -t_{2k}$.
\end{Proposition}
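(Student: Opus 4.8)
The plan is to show that the operator $R\colon (t_1,t_2,t_3,t_4,\dots)\mapsto(t_1,-t_2,t_3,-t_4,\dots)$ reversing the even-numbered times sends solutions of the continuum hierarchy to solutions; since the hierarchy is the set of these solutions, that is exactly the asserted invariance. Recall from Section~\ref{sec-clim} that $u$ is a solution precisely when the difference equation, evaluated on Miwa-shifted copies of $u$, vanishes order by order in $\lambda_1,\lambda_2$, i.e.\ when the generating function
\[
\cF[u](\bt;\lambda_1,\lambda_2):=Q\big(u(\bt),\,u(\bt+\fv_1),\,u(\bt+\fv_2),\,u(\bt+\fv_1+\fv_2),\,\lambda_1,\lambda_2\big)
\]
is the zero formal power series, where $\fv_i$ is the Miwa shift vector built from $\lambda_i$. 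So it suffices to realize $\cF[u\circ R]$ as $\pm\cF[u]$ up to a reparametrization that cannot affect vanishing.

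First I would record how $R$ interacts with the Miwa correspondence. Since the $j$-th component of $\fv_i$ is $(-1)^{j+1}c\lambda_i^j/j$, applying $R$ (which negates the even-indexed components) negates exactly those components that are fixed by $\lambda_i\mapsto-\lambda_i$, and fixes those that are negated; a direct check then gives $R\fv_i=-\fv_i|_{-\lambda_i}$, the shift vector with $\lambda_i$ replaced by $-\lambda_i$ and an overall sign. Consequently $(u\circ R)(\bt+\fv_i)=u(R\bt-\fv_i|_{-\lambda_i})$, so applying $R$ to the field turns the forward Miwa shifts at parameters $\lambda_i$ into backward Miwa shifts at parameters $-\lambda_i$, based at the point $R\bt$.

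The second step converts these backward shifts back into the canonical generating function. Because $Q$ is a quad equation it is invariant up to a sign $\sigma=\pm1$ under the central symmetry of the square $(U,U_1,U_2,U_{12})\mapsto(U_{12},U_2,U_1,U)$ (parameters unchanged); combined with the parity hypothesis $Q(\cdots;\lambda_1,\lambda_2)=\epsilon\,Q(\cdots;-\lambda_1,-\lambda_2)$, $\epsilon=\pm1$, this yields the single identity
\[
Q(U,U_1,U_2,U_{12};\lambda_1,\lambda_2)=\sigma\epsilon\,Q(U_{12},U_2,U_1,U;-\lambda_1,-\lambda_2).
\]
Applying it to $\cF[u\circ R]$ and setting $\mathbf{r}:=R\bt-\fv_1|_{-\lambda_1}-\fv_2|_{-\lambda_2}$ reorders the four arguments into the standard forward pattern and flips the explicit parameters, giving $\cF[u\circ R](\bt;\lambda_1,\lambda_2)=\sigma\epsilon\,\cF[u](\mathbf{r};-\lambda_1,-\lambda_2)$. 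Reversing the sign of both parameters multiplies the coefficient of $\lambda_1^a\lambda_2^b$ by $(-1)^{a+b}$, so the system of coefficient equations is unchanged; and since $u$ is a solution exactly when $\cF[u]$ is the zero power series at every base point, the $\lambda$-dependent translation $\bt\mapsto\mathbf{r}$ is harmless. Hence $u\circ R$ solves the hierarchy whenever $u$ does, and by symmetry conversely.

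The main obstacle is the sign-and-index bookkeeping in the first two steps. One must check carefully that even-time reversal is intertwined by the Miwa map with the \emph{composite} of a backward shift and a parameter reversal: the parity hypothesis on its own intertwines $\lambda_i\mapsto-\lambda_i$ with reversal of the \emph{odd}-numbered times, and it is only after combining it with the central symmetry of the quad equation (which supplies the passage from backward to forward shifts) that the even-numbered times are singled out. Tracking the two signs $\sigma$ and $\epsilon$, and confirming that the central symmetry does hold for each equation treated, is the delicate part; the formal-power-series formulation then absorbs the $\lambda$-dependent base point $\mathbf{r}$ automatically.
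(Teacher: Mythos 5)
Your proof is correct and takes essentially the same route as the paper: the paper likewise combines the parity hypothesis with the quad equation's point-inversion symmetry to obtain $Q(U,U_1,U_2,U_{12},\lambda_1,\lambda_2)=\pm Q(U_{12},U_2,U_1,U,-\lambda_1,-\lambda_2)$, evaluates the Miwa-embedded equation at the base point shifted once in both lattice directions (your $\mathbf{r}$ is exactly the paper's $\btau$ seen through $R$, since $\mathbf{r}=R\bt+R\fv_1+R\fv_2=R\btau$), and substitutes $\mu_i=-\lambda_i$ so that the two series expansions differ only by a sign on each derivative with respect to an even-numbered time. If anything, you are more explicit than the paper, whose phrase ``due to the symmetry assumptions on $Q$'' leaves implicit the central-symmetry ingredient that you correctly identify as the step converting odd-time reversal (which parity alone would give) into even-time reversal.
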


\begin{proof}Due to the symmetry assumptions on $Q$, we have
\begin{gather*} Q(U,U_1,U_2,U_{12},\lambda_1,\lambda_2) = \pm Q(U_{12},U_2,U_1,U,-\lambda_1,-\lambda_2) . \end{gather*}
Let $u(t_1,t_2,\dots)$ be a solution to the continuum limit equations. Then
\begin{gather}
Q\bigg(u(t_1,t_2,\dots), u \left(t_1 + c \lambda_1, t_2 - c \frac{\lambda_1^2}{2},\dots \right),
u \left(t_1 + c \lambda_2, t_2 - c \frac{\lambda_2^2}{2},\dots \right), \nonumber\\
 \qquad u \left(t_1 + c \lambda_1 + c \lambda_2, t_2 - c \frac{\lambda_1^2}{2} - c \frac{\lambda_2^2}{2},\dots \right),
\lambda_1,\lambda_2 \bigg) = 0.\label{before-reversal}
\end{gather}
Now consider the same equation at times shifted once in both lattice directions, $\tau_i = t_i + (-1)^{i+1} c \big( \frac{\lambda_1^i}{i} + \frac{\lambda_2^i}{i} \big)$. Due to the symmetry of $Q$ we have that
\begin{gather*}
Q\bigg( u(\tau_1,\tau_2,\dots), u \left(\tau_1 - c \lambda_1, \tau_2 + c \frac{\lambda_1^2}{2},\dots \right),u \left(\tau_1 - c \lambda_2, \tau_2 + c \frac{\lambda_2^2}{2},\dots \right), \\
\qquad u \left(\tau_1 - c \lambda_1 - c \lambda_2, \tau_2 + c \frac{\lambda_1^2}{2} + c \frac{\lambda_2^2}{2},\dots \right),-\lambda_1,-\lambda_2 \bigg) = 0.
\end{gather*}
Introducing the parameters $\mu_1 = - \lambda_1$ and $\mu_2 = - \lambda_2$ we find
\begin{gather}
Q\bigg(u(\tau_1,\tau_2,\dots), u \left(\tau_1 + c \mu_1, \tau_2 + c \frac{\mu_1^2}{2},\dots \right),u \left(\tau_1 + c \mu_2, \tau_2 + c \frac{\mu_2^2}{2},\dots \right), \nonumber\\
\qquad u \left(\tau_1 + c \mu_1 + c \mu_2, \tau_2 + c \frac{\mu_1^2}{2} + c \frac{\mu_2^2}{2},\dots \right),\mu_1,\mu_2 \bigg) = 0 .\label{after-reversal}
\end{gather}
Comparing equations \eqref{before-reversal} and~\eqref{after-reversal}, we immediately see that their series expansions in~$\lambda_1$,~$\lambda_2$ respectively $\mu_1$, $\mu_2$ only differ by a minus sign for each derivative with respect to an even-numbered time $t_{2k}$. In other words, if we have a~solution~$u$ to the continuum limit hierarchy, then reversing all even times gives a new solution.
\end{proof}

\begin{Corollary}\label{cor-odd}If the continuum limit hierarchy of the difference equation $Q = 0$, as in Proposition~{\rm \ref{prop-time-reversal}}, consists of evolutionary equations $u_k = f_k(u,u_1,u_{11},\dots)$, then for all even $k$ we have $f_k = 0$.
\end{Corollary}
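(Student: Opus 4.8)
The plan is to apply the time-reversal symmetry of Proposition~\ref{prop-time-reversal} directly to the evolutionary form of the hierarchy. Let $R$ be the operator reversing all even-numbered times, $(Ru)(t_1,t_2,t_3,t_4,\dots) = u(t_1,-t_2,t_3,-t_4,\dots)$; it is an involution, $R^2 = \mathrm{id}$. By Proposition~\ref{prop-time-reversal}, whenever $u$ solves the continuum limit hierarchy, so does $Ru$. The whole argument rests on comparing the $k$-th flow equation for $u$ with the same equation for $Ru$.

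The key is how $R$ interacts with the derivatives occurring in an evolutionary equation $u_k = f_k(u,u_1,u_{11},\dots)$. Since $R$ leaves $t_1$ unchanged, it commutes with every $t_1$-differentiation, so $(Ru)_1 = R(u_1)$, $(Ru)_{11} = R(u_{11})$, and so on. Because $f_k$ is built solely from $u$ and its $t_1$-derivatives and carries no explicit dependence on the times, it follows that $f_k(Ru,(Ru)_1,(Ru)_{11},\dots) = R(f_k(u,u_1,u_{11},\dots))$. On the other hand, for even $k$ the chain rule gives $\partial_{t_k}(Ru) = -R(\partial_{t_k}u)$, i.e.\ $R$ anticommutes with $\partial_{t_k}$.

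Combining these, I would apply the $k$-th evolutionary equation to the solution $Ru$ and read off both sides. The left-hand side is $\partial_{t_k}(Ru) = -R(\partial_{t_k} u) = -R(u_k)$, while the right-hand side is $f_k(Ru,(Ru)_1,\dots) = R(f_k(u,u_1,\dots))$. Using the original equation $u_k = f_k(u,u_1,\dots)$ to rewrite $R(u_k) = R(f_k(u,u_1,\dots))$, the two expressions become $-R(f_k(u,u_1,\dots))$ and $R(f_k(u,u_1,\dots))$ respectively. Hence $R(f_k(u,u_1,\dots)) = 0$, and applying $R$ once more yields $f_k(u,u_1,\dots) = 0$ on every solution $u$.

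The remaining point, which I expect to need the most care, is passing from ``$f_k$ vanishes on all solutions'' to ``$f_k \equiv 0$ as a differential function.'' For this I would use that the initial data of an evolutionary hierarchy along the $t_1$-axis is unconstrained: the $t_1$-jet $(u,u_1,u_{11},\dots)$ at a point can be prescribed arbitrarily, so vanishing of $f_k$ on all solutions is equivalent to its vanishing identically in the jet variables. The only hypotheses this argument invokes beyond Proposition~\ref{prop-time-reversal} are that $f_k$ involves only $t_1$-derivatives and no explicit time dependence; both are built into the evolutionary form assumed in the statement, and they are exactly what makes the commutation $f_k \circ R = R \circ f_k$ valid.
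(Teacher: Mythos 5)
Your proof is correct and takes essentially the route the paper intends: Corollary~\ref{cor-odd} is stated there without proof as an immediate consequence of Proposition~\ref{prop-time-reversal}, the point being exactly your observation that the reversal operator commutes with $\partial_{t_1}$ (hence with $f_k$) but anticommutes with $\partial_{t_k}$ for even $k$, so a reversed solution satisfies $u_k = -f_k[u]$ alongside $u_k = f_k[u]$. Your final step, upgrading ``$f_k$ vanishes on all solutions'' to $f_k \equiv 0$ via the unconstrained $t_1$-jet of initial data, is a detail the paper leaves implicit, and you handle it correctly.
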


This property can also be understood by considering the plane wave factors
\begin{gather*} \rho_k(\bn) = \prod_{i=1}^N \left( \frac{\lambda_i - k}{\lambda_i + k} \right), \end{gather*}
for which a discrete shift corresponds to a multiplication by
\begin{gather*} \frac{\lambda_i - k}{\lambda_i + k} = \exp \left( \log \left( 1 - \frac{k}{\lambda_i} \right) - \log \left(1 + \frac{k}{\lambda_i} \right) \right) = \exp \left( -2 \frac{k}{\lambda_i} - \frac{2}{3} \frac{k^3}{\lambda_i^3} - \cdots \right) .\end{gather*}
The series expansion in the exponential only contains odd powers of the parameters, hence they will only involve the odd-numbered times in the continuum limit. Plane wave factors play a~role in several works on continuum limits \cite{quispel1984linear, wiersma1987lattice, yoo2011discrete} and in the construction of soliton solutions \cite{hietarinta2009soliton,nijhoff2009soliton} and rational solutions~\cite{zhang2017rational, zhao2019rational}. It is shown in \cite[Appendix~C]{zhang2017rational} that all rational solutions depend only on odd powers of the parameters.

Unfortunately, as we will see in Section~\ref{sec-gd}, there are situations where the difference equation involves more than four points and the continuum limit contains PDEs that are not evolutionary, so Proposition~\ref{prop-time-reversal} and Corollary~\ref{cor-odd} do not apply. Here we will see a different pattern of trivial equations. This pattern can be understood in the framework of reductions of the lattice KP system \cite{fu2017reductions,fu2018linear}, or from the construction by pseudodifferential operators of the Gelfand--Dickey hierarchy~\cite{dickey2003soliton}.

\section{Gelfand--Dickey hierarchies}\label{sec-gd}

A discrete counterpart of the Gelfand--Dickey hierarchy was introduced in \cite{nijhoff1992lattice}, its pluri-Lag\-ran\-gian structure in \cite{lobb2010lagrangian}. The $N$-th member of the discrete hierarchy is a system of quad equations with $2N-3$ components, which we denote by $V^{[1]}, \dots, V^{[N-2]}$, $W^{[1]}, \dots, W^{[N-2]}$, and $U = V^{[0]} = W^{[0]}$. The equations are
\begin{gather}
V_2^{[j+1]} - V_1^{[j+1]} = \left( \frac{1}{\lambda_1} - \frac{1}{\lambda_2} + U_2 - U_1 \right) V_{12}^{[j]} - \frac{1}{\lambda_1} V_2^{[j]} + \frac{1}{\lambda_2} V_1^{[j]}, \label{lGD1}\\
W_2^{[j+1]} - W_1^{[j+1]} = -\left( \frac{1}{\lambda_1} - \frac{1}{\lambda_2} + U_2 - U_1 \right) W^{[j]} - \frac{1}{\lambda_2} W_2^{[j]} + \frac{1}{\lambda_1} W_1^{[j]} , \label{lGD2}
\end{gather}
for $j = 0, \dots, N-3$, and
\begin{gather}
V_{12}^{[N-2]} - W^{[N-2]} = \frac{\frac{1}{\lambda_1^N} - \frac{1}{\lambda_2^N}}{ \frac{1}{\lambda_1} - \frac{1}{\lambda_2} + U_2 - U_1} - \gamma_{N-1}
 + \sum_{i = 0}^{N-3} \sum_{j=0}^{N-3-i} \gamma_{N-3-i-j} V_{12}^{[j]} W^{[i]} \nonumber\\
\hphantom{V_{12}^{[N-2]} - W^{[N-2]} =}{} - \sum_{j=0}^{N-3} \gamma_{N-2-j} \big( V_{12}^{[j]} - W^{[j]} \big),\label{lGD3}
\end{gather}
where
\begin{gather*} \gamma_j = (-1)^j \left( \frac{1}{\lambda_1^j} + \frac{1}{\lambda_1^{j-1} \lambda_2} + \dots + \frac{1}{\lambda_1 \lambda_2^{j-1}} + \frac{1}{\lambda_2^j} \right) .\end{gather*}

It is important to note that the variables $V^{[N-2]}$ and $W^{[N-2]}$ can be eliminated. The reduced system consists of equations \eqref{lGD1}--\eqref{lGD2} for $j = 0, \dots, N-4$ and the 9-point equation
\begin{gather}
\left( \frac{1}{\lambda_1} - \frac{1}{\lambda_2} + U_{122} - U_{112} \right) V_{1122}^{[N-3]} - \frac{1}{\lambda_1} V_{122}^{[N-3]} + \frac{1}{\lambda_2} V_{112}^{[N-3]} \nonumber\\
\qquad\quad{}+ \left( \frac{1}{\lambda_1} - \frac{1}{\lambda_2} + U_2 - U_1 \right) W^{[N-3]} + \frac{1}{\lambda_2} W_2^{[N-3]} - \frac{1}{\lambda_1} W_1^{[N-3]} \nonumber\\
\qquad{} = \frac{\frac{1}{\lambda_1^N} - \frac{1}{\lambda_2^N}}{ \frac{1}{\lambda_1} - \frac{1}{\lambda_2} + U_{22} - U_{12}}
+ \sum_{i = 0}^{N-3} \sum_{j=0}^{N-3-i} \gamma_{N-3-i-j} \big( V_{122}^{[j]} W_2^{[i]} - V_{112}^{[j]} W_1^{[i]} \big) \nonumber\\
\qquad\quad {}- \frac{\frac{1}{\lambda_1^N} - \frac{1}{\lambda_2^N}}{ \frac{1}{\lambda_1} - \frac{1}{\lambda_2} + U_{12} - U_{11}} - \sum_{j=0}^{N-3} \gamma_{N-2-j} \big( V_{122}^{[j]} - W_2^{[j]} - V_{112}^{[j]} + W_1^{[j]} \big),\label{lGD3-9}
\end{gather}
obtained by evaluating $V_{122}^{[N-2]} - W_2^{[N-2]} - V_{112}^{[N-2]} + W_1^{[N-2]}$ once with equations \eqref{lGD1}--\eqref{lGD2} and once with equation~\eqref{lGD3}.

The Lagrangian given in \cite{lobb2010lagrangian} for the $N$-th lattice Gelfand--Dickey equation is
\begin{gather*}
L = (-1)^{N+1} \left( \frac{1}{\lambda_1^N} - \frac{1}{\lambda_2^N} \right) \log \left(\frac{1}{\lambda_1} - \frac{1}{\lambda_2} - U_1 + U_2 \right)- \gamma_{N-1} (U_2 - U_1) \\
\hphantom{L =}{} - \sum_{j=0}^{N-2} \gamma_{N-2-j} (U_2-U_1) V_{12}^{[j]} - \sum_{j=1}^{N-2} \sum_{i=0}^{N-2-j}
\gamma_{N-2-i-j} W^{[i]} \big( V_2^{[j]} - V_1^{[j]} \big) \\
\hphantom{L =}{} + \sum_{j=0}^{N-3} \sum_{i=0}^{N-3-j} \gamma_{N-3-i-j} W^{[i]} \left( \left(\frac{1}{\lambda_1} - \frac{1}{\lambda_2} - U_1 + U_2 \right) V_{12}^{[j]} - \frac{1}{\lambda_1} V_2^{[j]} + \frac{1}{\lambda_2} V_1^{[j]} \right).
\end{gather*}
Note that this Lagrangian depends on the field $V^{[N-2]}$ but not on $W^{[N-2]}$. A more symmetric equivalent Lag\-rangian is obtained by adding the exact discrete differential form
\begin{gather*} -U V_1^{[N-2]} - U_1 V_{12}^{[N-2]} + U_2 V_{12}^{[N-2]} + U V_{2}^{[N-2]}, \end{gather*}
which does not change the Euler--Lagrange equations (see Proposition~\ref{prop-exact}). The resulting Lagrangian does not depend on $V^{[N-2]}$ either,
\begin{gather}
L = (-1)^{N+1} \left( \frac{1}{\lambda_1^N} - \frac{1}{\lambda_2^N} \right) \log \left(\frac{1}{\lambda_1} - \frac{1}{\lambda_2} - U_1 + U_2 \right)- \gamma_{N-1} (U_2 - U_1) \nonumber\\
\hphantom{L =}{} - \sum_{j=0}^{N-3} \gamma_{N-2-j} (U_2-U_1) V_{12}^{[j]} - \sum_{j=1}^{N-3} \sum_{i=0}^{N-2-j}
\gamma_{N-2-i-j} W^{[i]} \big( V_2^{[j]} - V_1^{[j]} \big) \nonumber\\
\hphantom{L =}{} + \sum_{j=0}^{N-3} \sum_{i=0}^{N-3-j} \gamma_{N-3-i-j} W^{[i]} \left( \left(\frac{1}{\lambda_1} - \frac{1}{\lambda_2} - U_1 + U_2 \right) V_{12}^{[j]} - \frac{1}{\lambda_1} V_2^{[j]} + \frac{1}{\lambda_2} V_1^{[j]} \right).\label{GD-lag}
\end{gather}
Now \looseness=-1 that both variables $V^{[N-2]}$ and $W^{[N-2]}$ are absent from the Lagrangian, it becomes clear that the variational formulation does not produce the quad version of the equations \eqref{lGD1}--\eqref{lGD3}, but rather the 9-point version~\eqref{lGD3-9}. In particular this means that, from the Lagrangian perspective, the scalar form of the Boussinesq equation ($N=3$) is the most natural. The first truly multi-component Lagrangian equation of the hierarchy is found for $N=4$. The continuum limit of this equation will allow us to formulate a continuous multi-component pluri-Lagrangian system.

Note that the Lagrangian \eqref{GD-lag} depends on fields on a single quad and on the corresponding lattice parameters. Hence it fits the pluri-Lagrangian theory. The multi-dimensional consistency of the equations can either be checked as consistency around an elementary cube for the quad version of the equation, or in a 27-point cube for the 9-point formulation. For the Boussinesq equation the former was explicitly done in \cite{tongas2005boussinesq} and the latter was numerically verified in \cite[Section~5.7]{walker2001similarity}. The multidimensional consistency also follows from the construction of the GD hierarchy using the direct linearization method \cite{nijhoff1992lattice}. In this construction, lattice shifts are identified with B\"acklund transformations, hence the permutability property implies multidimensional consistency.

\subsection{Continuum limit of the lattice Boussinesq equation (GD3)}

We consider the lattice Boussinesq equation in its 9-point scalar form, i.e., equation \eqref{lGD3-9} for $N=3$,
\begin{gather*}%\label{BSQ-9}
\frac{ \frac{1}{\lambda_1^3} - \frac{1}{\lambda_2^3} }{ \frac{1}{\lambda_1} - \frac{1}{\lambda_2} + U_{12} - U_{11} } - \frac{ \frac{1}{\lambda_1^3} - \frac{1}{\lambda_2^3} }{ \frac{1}{\lambda_1} - \frac{1}{\lambda_2} + U_{22} - U_{12} } - U_2 U_{122} + U_1 U_{112} \\
\qquad{} + \left( \frac{1}{\lambda_1} - \frac{1}{\lambda_2} + U_{122} - U_{112} \right) U_{1122} + \left( \frac{1}{\lambda_1} - \frac{1}{\lambda_2} + U_{2} - U_{1} \right) U \\
\qquad{} - \left( \frac{2}{\lambda_1} + \frac{1}{\lambda_2} \right) (U_1 + U_{122}) + \left( \frac{1}{\lambda_1} + \frac{2}{\lambda_2} \right) (U_2 + U_{112})
= 0 .
\end{gather*}
For this equation we use the Miwa correspondence \eqref{miwa} with $c = -3$. As always, we perform a~double series expansion of the lattice equation
\begin{gather*} \sum_{i,j = 0}^\infty \cF_{ij} \lambda_1^i \lambda_2^j = 0.\end{gather*}
The first column of coefficients of this expansion is
\begin{gather*}
\cF_{00} = 0 ,\\
\cF_{10} = 18 v_{1} v_{11} - \frac{9}{2} v_{1111} - \frac{3}{2} v_{22} ,\\
\cF_{20} = 81 v_{11}^{2} + 81 v_{1} v_{111} - \frac{81}{4} v_{11111} - \frac{27}{4} v_{122} - 3 v_{23} ,\\
\cF_{30} = -108 v_{1}^{2} v_{11} + 648 v_{11} v_{111} + 243 v_{1} v_{1111} + 54 v_{1} v_{112} + 54 v_{11} v_{12} - 18 v_{12} v_{2} - 9 v_{1} v_{22} \\
\hphantom{\cF_{30} =}{} - 54 v_{111111} - \frac{27}{2} v_{11112} - 9 v_{1122} - 12 v_{123} - \frac{9}{2} v_{222} - 3 v_{24} - \frac{4}{3} v_{33} ,\\
\cF_{40} = -810 v_{1} v_{11}^{2} - 405 v_{1}^{2} v_{111} - 135 v_{1}^{2} v_{12} - 135 v_{1} v_{11} v_{2} + 1215 v_{111}^{2} + \frac{6885}{4} v_{11} v_{1111} \\
\hphantom{\cF_{40} =}{}
+ \frac{2025}{4} v_{1} v_{11111} + \frac{675}{2} v_{1} v_{1112} + \frac{1215}{2} v_{11} v_{112} + 45 v_{1} v_{113} + \frac{675}{2} v_{111} v_{12} - \frac{135}{2} v_{12}^{2} \\
\hphantom{\cF_{40} =}{}
- \frac{135}{4} v_{1} v_{122} + 45 v_{11} v_{13} + \frac{135}{4} v_{1111} v_{2} - \frac{135}{2} v_{112} v_{2} - 15 v_{13} v_{2} - \frac{135}{4} v_{11} v_{22}\\
\hphantom{\cF_{40} =}{}
- \frac{45}{4} v_{2} v_{22} - 15 v_{1} v_{23} - 15 v_{12} v_{3} - \frac{405}{4} v_{1111111} - \frac{621}{8} v_{111112} - \frac{45}{4} v_{11113} - 15 v_{1123}\\
 \hphantom{\cF_{40} =}{}
- \frac{135}{8} v_{1222} - \frac{45}{4} v_{124} - 5 v_{133} - \frac{45}{4} v_{223} - 3 v_{25} - \frac{5}{2} v_{34} ,\\
\cdots
\end{gather*}
From $\cF_{10} = 0$ we get the equation
\begin{gather*} v_{22} = 12 v_{1} v_{11} - 3 v_{1111}. \end{gather*}
Using this equation we get $v_{23} = 0$ from $\cF_{20} = 0$. We take the liberty of integrating this without a constant and take $v_3 = 0$ as the second equation in the hierarchy. We can proceed iteratively and at each step integrate with respect to $t_2$ to find the hierarchy
\begin{gather*}
v_{22} = 12 v_{1} v_{11} - 3 v_{1111}, \\
v_{3} = 0, \\
v_{4} = -6 v_{1} v_{2} + 3 v_{112}, \\
v_{5} = -15 v_{1}^{3} + \frac{135}{4} v_{11}^{2} + 45 v_{1} v_{111} - \frac{15}{4} v_{2}^{2} - 9 v_{11111}, \\
\cdots
\end{gather*}
The first equation of the hierarchy is the potential Boussinesq equation. We observe that every third equation is trivial, $v_{3k} = 0$, as expected from the construction of the hierarchy using pseudodifferential operators \cite{dickey2003soliton}. Note that in this hierarchy all equations except the first are evolutionary. However, the higher equations are 3-dimensional: they contain $t_2$-derivatives that cannot be eliminated, because no evolutionary equation is available in the $t_2$-direction. It is surprising that a 2-dimensional lattice equation produces 3-dimensional equations in the continuum limit, but this is consistent with the limit procedure, because under the Miwa correspondence a~single lattice shift influences all continuous variables.

The discrete Lagrangian is
\begin{gather*}%\label{BSQ-lag}
L(U,U_1,U_2,U_{12},\lambda_1,\lambda_2) = \left( \frac{1}{\lambda_1^3} - \frac{1}{\lambda_2^3} \right) \log \left(\frac{1}{\lambda_1} - \frac{1}{\lambda_2} - U_1 + U_2 \right) \\
\hphantom{L(U,U_1,U_2,U_{12},\lambda_1,\lambda_2) =}{} - \left( \frac{1}{\lambda_1^2} + \frac{1}{\lambda_1 \lambda_2} + \frac{1}{\lambda_2^2} \right) (U_2 - U_1) + \left( \frac{1}{\lambda_1} + \frac{1}{\lambda_2} \right) (U_2 - U_1) U_{12} \\
\hphantom{L(U,U_1,U_2,U_{12},\lambda_1,\lambda_2) =}{} + \left(\frac{1}{\lambda_1} - \frac{1}{\lambda_2} + U_{2} - U_{1} \right) U U_{12} - \frac{1}{\lambda_1} U U_2 + \frac{1}{\lambda_2} U U_1.
\end{gather*}
In order to get the necessary leading order cancellation, we add the following terms to the Lagrangian
\begin{gather*}
 \frac{1}{2} \left( \frac{1}{\lambda_1} U U_1 + \frac{1}{\lambda_2} U_1 U_{12} - \frac{1}{\lambda_1} U_2 U_{12} - \frac{1}{\lambda_2} U U_ 2 \right)
+ \frac{1}{3}\big( U_1^3 - U_2^3 \big) \\
\qquad{} + \frac{1}{4}\left( \frac{1}{\lambda_1}\big(U_1^2-U^2\big) + \frac{1}{\lambda_2} \big(U_{12}^2-U_1^2\big) + \frac{1}{\lambda_1}\big(U_2^2-U_{12}^2\big) + \frac{1}{\lambda_2}\big(U^2-U_2^2\big) \right) .
\end{gather*}
These terms do not contribute to the Euler--Lagrange equations because they are the discrete exterior derivative of the 1-form
\begin{gather*} \eta(U,U_i,\lambda_i) = \frac{1}{2 \lambda_i} U U_i + \frac{1}{3} U^3 + \frac{1}{4 \lambda_i} \big(U_i^2 - U^2\big) . \end{gather*}

Some coefficients of the Lagrangian 2-form are
\begin{gather*}
\cL_{12} = 3 v_{1}^3 + \frac{9}{4} v_{11}^2 - \frac{3}{4} v_{2}^2,
\\
\cL_{13} = -\frac{3}{2} v_{2} v_{3},
\\
\cL_{14} = -9 v_{1}^4 - \frac{81}{2} v_{1} v_{11}^2 + 9 v_{} v_{12} v_{2} - \frac{27}{4} v_{111}^2 - \frac{9}{4} v_{12}^2 - \frac{3}{2} v_{2} v_{4},
\\
\cL_{15} = -\frac{45}{2} v_{1}^3 v_{2} + \frac{135}{4} v_{1}^2 v_{112} - \frac{135}{8} v_{11}^2 v_{2} - \frac{15}{8} v_{2}^3 + \frac{27}{2} v_{11} v_{1112} - \frac{3}{2} v_{2} v_{5},
\\
\cL_{23} = -9 v_{1}^2 v_{3} - \frac{9}{2} v_{11} v_{13} + \frac{9}{2} v_{111} v_{3},
\\
\cL_{24} = -36 v_{1}^3 v_{2} - 27 v_{1} v_{11} v_{12} - \frac{27}{2} v_{11}^2 v_{2} + 27 v_{1} v_{111} v_{2} + 3 v_{2}^3 + 9 v_{} v_{2} v_{22} - 9 v_{1}^2 v_{4} \\
\hphantom{\cL_{24} =}{}- \frac{27}{2} v_{111} v_{112} - \frac{9}{2} v_{11} v_{14} - \frac{9}{2} v_{12} v_{22} + \frac{9}{2} v_{111} v_{4},
\\
\cL_{25} = -54 v_{1}^5 + \frac{1215}{4} v_{1}^2 v_{11}^2 + \frac{135}{2} v_{1}^3 v_{111} - \frac{135}{4} v_{1}^2 v_{2}^2 + \frac{81}{8} v_{11}^2 v_{111} - \frac{81}{4} v_{1} v_{111}^2\!- 162 v_{1} v_{11} v_{1111}\\
\hphantom{\cL_{25} =}{} + \frac{135}{4} v_{1} v_{12}^2 + \frac{135}{4} v_{1}^2 v_{122} - \frac{135}{4} v_{11} v_{12} v_{2} + \frac{135}{8} v_{111} v_{2}^2- \frac{135}{2} v_{1} v_{11} v_{22}- 9 v_{1}^2 v_{5} \\
\hphantom{\cL_{25} =}{} + \frac{81}{4} v_{1111}^2 + \frac{27}{4} v_{112}^2 + \frac{27}{2} v_{11} v_{1122} - \frac{27}{2} v_{111} v_{122} - \frac{9}{2} v_{11} v_{15}
+ \frac{27}{2} v_{1111} v_{22} + \frac{9}{2} v_{111} v_{5},
\\
\cL_{34} = -36 v_{1}^3 v_{3} - 81 v_{1} v_{11} v_{13} + 9 v_{} v_{2} v_{23} + \frac{81}{2} v_{11}^2 v_{3} + 81 v_{1} v_{111} v_{3} - \frac{27}{2} v_{111} v_{113} + \frac{27}{2} v_{1111} v_{13}\\
\hphantom{\cL_{34} =}{} - \frac{9}{2} v_{12} v_{23} - \frac{27}{2} v_{11111} v_{3},
\\
\cL_{35} = -\frac{135}{2} v_{1}^2 v_{2} v_{3} + \frac{135}{4} v_{1}^2 v_{123} - \frac{135}{4} v_{11} v_{13} v_{2} - \frac{135}{2} v_{1} v_{11} v_{23} + \frac{135}{2} v_{1} v_{112} v_{3} \\
\hphantom{\cL_{35} =}{}+ \frac{135}{4} v_{11} v_{12} v_{3} + \frac{135}{4} v_{111} v_{2} v_{3} + \frac{27}{2} v_{11} v_{1123} - \frac{27}{2} v_{111} v_{123} + \frac{27}{2} v_{1112} v_{13} \\
\hphantom{\cL_{35} =}{}+ \frac{27}{2} v_{1111} v_{23} - \frac{27}{2} v_{11112} v_{3},
\\
\cL_{45} = 270 v_{1}^6 - 1215 v_{1}^3 v_{11}^2 - 1215 v_{1}^4 v_{111} - 135 v_{1}^3 v_{2}^2 + \frac{13851}{16} v_{11}^4 + \frac{8019}{4} v_{1} v_{11}^2 v_{111} \\
\hphantom{\cL_{45} =}{}+ \frac{6561}{4} v_{1}^2 v_{111}^2 + \frac{729}{2} v_{1}^2 v_{11} v_{1111} + \frac{405}{2} v_{1}^3 v_{11111} - \frac{405}{4} v_{1}^2 v_{12}^2 + 405 v_{1}^2 v_{112} v_{2}\\
\hphantom{\cL_{45} =}{} + \frac{405}{2} v_{1} v_{11} v_{12} v_{2} - \frac{2025}{8} v_{11}^2 v_{2}^2 - \frac{405}{4} v_{1} v_{111} v_{2}^2 + \frac{135}{16} v_{2}^4 - \frac{135}{2} v_{1}^2 v_{2} v_{4} + 36 v_{1}^3 v_{5}\\
\hphantom{\cL_{45} =}{} - \frac{81}{4} v_{111}^3 + \frac{243}{4} v_{11} v_{111} v_{1111} - \frac{243}{4} v_{1} v_{1111}^2 - \frac{3645}{8} v_{11}^2 v_{11111} - \frac{1215}{2} v_{1} v_{111} v_{11111} \\
\hphantom{\cL_{45} =}{}- \frac{567}{4} v_{1} v_{112}^2 + 81 v_{1} v_{1112} v_{12} - 162 v_{11} v_{112} v_{12} + \frac{81}{4} v_{111} v_{12}^2 + \frac{135}{4} v_{1}^2 v_{124} + 81 v_{1} v_{11} v_{15} \\
\hphantom{\cL_{45} =}{}- 81 v_{1} v_{11112} v_{2} + 81 v_{11} v_{1112} v_{2} - 81 v_{111} v_{112} v_{2} - \frac{81}{4} v_{1111} v_{12} v_{2} - \frac{135}{4} v_{11} v_{14} v_{2} \\
\hphantom{\cL_{45} =}{}+ \frac{405}{8} v_{11111} v_{2}^2 - \frac{135}{2} v_{1} v_{11} v_{24} - 9 v_{} v_{2} v_{25} + \frac{135}{2} v_{1} v_{112} v_{4} + \frac{135}{4} v_{11} v_{12} v_{4} \\
\hphantom{\cL_{45} =}{}+ \frac{135}{4} v_{111} v_{2} v_{4} - \frac{81}{2} v_{11}^2 v_{5} - 81 v_{1} v_{111} v_{5} + \frac{243}{4} v_{11111}^2 - \frac{81}{4} v_{1112}^2 + \frac{81}{2} v_{11112} v_{112}\\
\hphantom{\cL_{45} =}{} + \frac{27}{2} v_{11} v_{1124} + \frac{27}{2} v_{111} v_{115} - \frac{27}{2} v_{111} v_{124} + \frac{27}{2} v_{1112} v_{14} - \frac{27}{2} v_{1111} v_{15} + \frac{27}{2} v_{1111} v_{24} \\
\hphantom{\cL_{45} =}{}+ \frac{9}{2} v_{12} v_{25} - \frac{27}{2} v_{11112} v_{4} + \frac{27}{2} v_{11111} v_{5}.
\end{gather*}
Unfortunately, not all alien derivatives can be eliminated. Since we do not have any equation in the hierarchy to eliminate first derivatives with respect to $t_2$, the derivatives $v_2, v_{12}, \dots$ remain in place. All other derivatives are eliminated from the coefficients where they are alien.

\subsection{Continuum limit of lattice GD4}

In order to get a truly multicomponent pluri-Lagrangian system we move to the next equation of the GD hierarchy, with $N = 4$. In its 3-component form the discrete GD4 equation reads
\begin{gather}
 \left( \frac{1}{\lambda_1} - \frac{1}{\lambda_2} + U_{122} - U_{112} \right) V_{1122}
+ \left( \frac{1}{\lambda_1} - \frac{1}{\lambda_2} + U_{2} - U_{1} \right) W \nonumber\\
\qquad\quad{} - \frac{1}{\lambda_1} V_{122} + \frac{1}{\lambda_2} V_{112} + \frac{1}{\lambda_2} W_2 - \frac{1}{\lambda_1} W_1
- \left( V_{122} U_2 - V_{112} U_1 + U_{122} W_2 - U_{112} W_1 \right) \nonumber\\
 \qquad\quad{}- \left( \frac{1}{\lambda_1} + \frac{1}{\lambda_2} \right) \left( V_{122} - V_{112} - W_2 + W_1 \right) \nonumber\\
 \qquad {}= \frac{ \frac{1}{\lambda_1^4} - \frac{1}{\lambda_2^4} }{ \frac{1}{\lambda_1} - \frac{1}{\lambda_2} + U_{12} - U_{11} }
- \frac{ \frac{1}{\lambda_1^4} - \frac{1}{\lambda_2^4} }{ \frac{1}{\lambda_1} - \frac{1}{\lambda_2} + U_{22} - U_{12} }
- \left( \frac{1}{\lambda_1} + \frac{1}{\lambda_2} \right) \left( U_2 U_{122} + U_1 U_{112} \right) \nonumber\\
 \qquad\quad {}- \left( \frac{1}{\lambda_1^2} + \frac{1}{\lambda_1 \lambda_2} + \frac{1}{\lambda_2^2} \right)\left( U_{122} - U_{112} - U_2 + U_1 \right) ,\nonumber\\ %\label{GD4-1}\\
V_2 - V_1 = \left( \frac{1}{\lambda_1} - \frac{1}{\lambda_2} + U_{2} - U_{1} \right) U_{12} - \frac{1}{\lambda_1} U_2 + \frac{1}{\lambda_2} U_1 ,\label{GD4-2} \\
W_2 - W_1 = -\left( \frac{1}{\lambda_1} - \frac{1}{\lambda_2} + U_{2} - U_{1} \right) U - \frac{1}{\lambda_2} U_2 + \frac{1}{\lambda_1} U_1 .\label{GD4-3}
\end{gather}
Equations \eqref{GD4-2} and \eqref{GD4-3} do not just look similar, also their expansions are nearly identical in leading order:
\begin{gather*}
(\lambda_2 - \lambda_1) v_1 = (\lambda_2 - \lambda_1) u_1 u + \frac{1}{2}(\lambda_2 - \lambda_1) u_{11} + \frac{1}{2} (\lambda_2 - \lambda_1) u_{2}
+ \cO\big( (\lambda_1+\lambda_2)^2 \big) , \\
(\lambda_2 - \lambda_1) w_1 = - (\lambda_2 - \lambda_1) u_1 u - \frac{1}{2}(\lambda_2 - \lambda_1) u_{11} + \frac{1}{2}(\lambda_2 - \lambda_1) u_{2}
+ \cO\big( (\lambda_1+\lambda_2 )^2 \big) ,
\end{gather*}
where we used the constant $c = 1$ in the Miwa correspondence~\eqref{miwa}. This gives us the ODE $v_1 - w_1 = 2 u_1 u + u_{11}$, which integrates to
\begin{gather}\label{GD4-constraint}
v - w = u^2 + u_1.
\end{gather}
We have omitted the integration constant because higher order terms in the expansion force it to be zero. equation~\eqref{GD4-constraint} allows us the eliminate either $v$ or $w$ from the continuous system. Hence we lose one of the components in the continuum limit. It is convenient to make a change of variables in the discrete system
\begin{gather*} V = \frac{X+Y}{2}, \qquad W = \frac{X-Y}{2}. \end{gather*}
Then in the continuum limit the variable $y$ can be eliminated by equation~\eqref{GD4-constraint}, which in the new variables reads
\begin{gather}\label{GD4-clim-constraint}
y = u^2 + u_1 .
\end{gather}
(One recognizes the Miura transformation~\cite{miura1968korteweg}.) For the remaining two variables we find the hierarchy
\begin{alignat}{3}
&u_2= x_1, \qquad && x_{22}= -4 u_{11} x_1 - 8 u_1 x_{11} - x_{1111}, &\notag \\
&u_3= \frac{3}{2} u_1^2 + \frac{1}{4} u_{111} +\frac{3}{4} x_2, \qquad && x_3= -3 u_1 x_1 - \frac{1}{2} x_{111},& \label{GD4-clim-hierarchy} \\
&u_4= 0, \qquad && x_4 = 0, &\notag \\
&\cdots & &\cdots &\notag
\end{alignat}
where, presumably, every fourth pair of equations is trivial.

The construction of the continuous GD4 hierarchy using pseudodifferential operators \cite{dickey2003soliton} yields the system
\begin{gather*}
\phi_2 = 2\chi_1 - 2\phi_{11}, \qquad
\chi_2 = \chi_{11} + 2\psi_1 - 2\phi_{111} - \phi \phi_1, \\
\psi_2 = \psi_{11} - \frac{1}{2}\phi_{1111} - \frac{1}{2}\phi \phi_{11} - \frac{1}{2}\chi \phi_1
\end{gather*}
as leading order equations. They are related to the leading order equations of the hierar\-chy~\eqref{GD4-clim-hierarchy} by the transformation
\begin{gather*} \phi = 4 u_1, \qquad \chi = 2 x_1 + 4 u_{11} . \end{gather*}
In pseudodifferential approach there is no obvious reason to eliminate one of the three variables. The reduction to two variables is forced upon us by the continuum limit, because there does not seem to be any way of performing the limit without getting an ODE relation between the variables, as in equation~\eqref{GD4-constraint}.

\subsubsection{Pluri-Lagrangian structure}

For $N = 4$ we have the Lagrangian
\begin{gather*}%\label{GD4-lag}
L(U,V,W,U_1,V_1,W_1,U_2,V_2,W_2,U_{12},V_{12},W_{12},\lambda_1,\lambda_2) \\
= \left( \frac{1}{\lambda_1^4} - \frac{1}{\lambda_2^4} \right) \log \left(\frac{1}{\lambda_1} - \frac{1}{\lambda_2} - U_1 + U_2 \right) - \left( \frac{1}{\lambda_1^3} + \frac{1}{\lambda_1^2 \lambda_2} + \frac{1}{\lambda_1 \lambda_2^2} + \frac{1}{\lambda_2^3} \right) (U_2 - U_1) \\
\qquad {}+ \left( \frac{1}{\lambda_1^2} + \frac{1}{\lambda_1 \lambda_2} + \frac{1}{\lambda_2^2} \right) (U_2 - U_1) U_{12} - \left( \frac{1}{\lambda_1} + \frac{1}{\lambda_2} \right) (U_2 - U_1) V_{12} \\
\qquad {}- U \left( \left( \frac{1}{\lambda_1}- \frac{1}{\lambda_2} + U_2 - U_1 \right) V_{12} - \frac{1}{\lambda_1} V_2 + \frac{1}{\lambda_2} V_1 \right) \\
\qquad {}- \left( \left( \frac{1}{\lambda_1} + \frac{1}{\lambda_2} \right) U - W \right) \left( V_2 - V_1 - \left( \frac{1}{\lambda_1}- \frac{1}{\lambda_2} + U_2 - U_1 \right) U_{12} + \frac{1}{\lambda_1} U_2 - \frac{1}{\lambda_2} U_1 \right) .
\end{gather*}
As before, we make the change of variables
\begin{gather*} V = \frac{X+Y}{2}, \qquad W = \frac{X-Y}{2}. \end{gather*}
In order to achieve the necessary leading order cancellation, we add to the Lagrangian the discrete exterior derivative of the following discrete 1-form:
\begin{gather*}
\eta(U,X,Y,U_i,X_i,Y_i,\lambda_i) = \frac{1}{2 \lambda_i^2} U U_i- \frac{1}{4 \lambda_i^2} \big(U^2-U_i^2\big)+ \frac{1}{6 \lambda_i} \big(U U_i^2 + U^2 U_i\big) \\
\hphantom{\eta(U,X,Y,U_i,X_i,Y_i,\lambda_i) =}{}
- \frac{1}{2 \lambda_i} (U X_i + U_i X) + \frac{1}{2 \lambda_i} U X- \frac{1}{2 \lambda_i} U Y_i\\
\hphantom{\eta(U,X,Y,U_i,X_i,Y_i,\lambda_i) =}{}- \frac{1}{8} (Y Y_i - X X_i - X Y_i - X_i Y)- \frac{1}{4} \big(U_i^2 X + U^2 X_i\big) .
\end{gather*}
This 1-form was found by trial and error. It would be interesting to establish a general strategy to find a Lagrangian within a given equivalence class that provides the required cancellation. Without such a strategy, it seems infeasible to apply the continuum limit to higher members of the Gelfand--Dickey hierarchy.

The first few coefficients of the pluri-Lagrangian 2-form are given in Table~\ref{table-GD4-Lagrangian}. Note that the Lagrangian 2-form depends on all three fields, $u$, $x$, and $y$. The multi-time Euler--Lagrange equations are equivalent to equations~\eqref{GD4-clim-constraint} and~\eqref{GD4-clim-hierarchy}: they contain both the constraint on $y$ and the hierarchy in~$u$ and~$x$.

As with the Boussinesq hierarchy, the elimination of alien derivatives needs a comment. We do not have any equation in the hierarchy to eliminate the derivatives $x_2, x_{12}, \dots$, so these have to be tolerated in the coefficient $\cL_{13}$. All other alien derivatives, in particular those of $u$ and $y$, are eliminated as usual.

\newpage

\begin{table}[th!]
\fillbox{
\begin{gather*}
\cL_{12} = \tfrac{1}{3} u^{3} u_{12} - \tfrac{1}{3} u^{3} x_{11} - \tfrac{1}{3} u^{2} u_{112} - \tfrac{7}{6} u_{1}^{2} u_{2} + u_{1} u_{11} x - \tfrac{1}{2} u_{11} x y - \tfrac{1}{2} u_{1} x_{1} y + \tfrac{1}{2} u x y_{11}\\
\hphantom{\cL_{12} =}{} - \tfrac{1}{4} u^{2} y_{12} - \tfrac{1}{4} u_{1} u_{112} + \tfrac{1}{4} u_{112} x - \tfrac{1}{4} u_{1} x_{111} + \tfrac{1}{4} u x_{112} - \tfrac{1}{2} x x_{12} - u_{2} x_{2} - \tfrac{1}{4} x_{12} y \\
\hphantom{\cL_{12} =}{} - \tfrac{1}{4} x_{11} y_{1} + \tfrac{1}{4} u y_{112} + \tfrac{1}{4} x y_{12} \\
\cL_{13} = \tfrac{3}{2} u^{2} u_{1}^{3} + \tfrac{23}{12} u_{1}^{4} - \tfrac{1}{4} u^{2} u_{11}^{2} + \tfrac{1}{3} u^{3} u_{13} - \tfrac{3}{2} u_{1}^{3} x - 3 u u_{1} u_{11} x - \tfrac{1}{4} u^{3} x_{12} + \tfrac{3}{4} u_{1}^{3} y
\\
\hphantom{\cL_{13} =}{}
+ \tfrac{3}{2} u u_{1} u_{11} y - \tfrac{3}{2} u_{1} u_{11}^{2} - \tfrac{1}{3} u^{2} u_{113} - \tfrac{7}{6} u_{1}^{2} u_{3} + \tfrac{3}{4} u_{1} u_{11} x_{1} + 3 u x_{1} x_{11} + \tfrac{1}{4} u u_{1} x_{111}\\
\hphantom{\cL_{13} =}{}
- \tfrac{1}{4} u^{2} x_{13} + \tfrac{9}{8} u_{1}^{2} x_{2} + \tfrac{3}{8} u x_{12} y + \tfrac{3}{8} u_{1} x_{2} y - \tfrac{3}{8} u_{1} u_{11} y_{1} + \tfrac{3}{4} u x_{11} y_{1} + \tfrac{3}{4} u x_{1} y_{11}\\
\hphantom{\cL_{13} =}{}
- \tfrac{1}{8} u u_{1} y_{111} - \tfrac{1}{4} u^{2} y_{13} + \tfrac{1}{16} u_{111}^{2} - \tfrac{1}{4} u_{1} u_{113} + \tfrac{1}{4} u_{113} x - \tfrac{1}{8} u_{111} x_{11} + \tfrac{1}{4} x_{11}^{2} \\
\hphantom{\cL_{13} =}{}+ \tfrac{1}{16} u_{1} x_{112}
 + \tfrac{1}{2} u x_{113} + \tfrac{1}{2} x x_{13} - u_{3} x_{2} + \tfrac{3}{8} x_{2}^{2} + \tfrac{3}{16} x_{112} y + \tfrac{1}{16} u_{111} y_{11} - \tfrac{1}{8} x_{1} y_{111} \\
\hphantom{\cL_{13} =}{}+ \tfrac{1}{4} u y_{113} + \tfrac{1}{4} x y_{13} \\
\cL_{23} = \tfrac{3}{2} u^{3} u_{1} u_{12} + \tfrac{3}{4} u^{2} u_{1}^{2} u_{2} + \tfrac{3}{2} u u_{1}^{3} x - \tfrac{3}{2} u^{2} u_{1} u_{11} x - \tfrac{9}{4} u^{2} u_{1}^{2} x_{1} + \tfrac{3}{2} u^{3} u_{11} x_{1} + \tfrac{3}{2} u^{3} u_{1} x_{11}\\
\hphantom{\cL_{23} =}{}
+ \tfrac{7}{4} u u_{1}^{2} u_{12} - \tfrac{1}{2} u^{2} u_{11} u_{12} + \tfrac{14}{3} u_{1}^{3} u_{2} - \tfrac{5}{2} u u_{1} u_{11} u_{2} - \tfrac{3}{4} u_{1}^{2} u_{11} x + \tfrac{1}{4} u u_{1} u_{111} x\\
\hphantom{\cL_{23} =}{}
- \tfrac{1}{8} u^{2} u_{1111} x - 3 u u_{1} u_{12} x + \tfrac{1}{2} u^{2} u_{13} x - \tfrac{3}{2} u_{1}^{2} u_{2} x - u u_{1} u_{3} x - 5 u_{1}^{3} x_{1} + 2 u u_{1} u_{11} x_{1}\\
\hphantom{\cL_{23} =}{}
+ \tfrac{3}{4} u^{2} u_{111} x_{1} + 3 u u_{1} u_{2} x_{1} + u^{2} u_{3} x_{1} - 3 u u_{1} x_{1}^{2} + \tfrac{1}{4} u u_{1}^{2} x_{11} + \tfrac{5}{2} u^{2} u_{11} x_{11} + \tfrac{3}{4} u^{2} x_{1} x_{11}\\
\hphantom{\cL_{23} =}{}
+ u^{2} u_{1} x_{111} + \tfrac{3}{8} u^{3} x_{1111} + \tfrac{3}{4} u^{2} u_{1} x_{12} - \tfrac{3}{8} u^{2} x x_{12} + \tfrac{1}{3} u^{3} x_{13} - \tfrac{3}{2} u u_{1}^{2} x_{2} - \tfrac{3}{8} u^{2} u_{2} x_{2}\\
\hphantom{\cL_{23} =}{}
+ \tfrac{3}{4} u u_{1} x x_{2} - \tfrac{3}{8} u^{2} x_{1} x_{2} + \tfrac{1}{8} u^{3} x_{22} - \tfrac{1}{2} u^{2} u_{1} x_{3} + \tfrac{3}{2} u_{1} u_{11} x y - \tfrac{3}{4} u_{1}^{2} x_{1} y - \tfrac{3}{2} u u_{11} x_{1} y\\
\hphantom{\cL_{23} =}{}
- \tfrac{3}{2} u u_{1} x_{11} y - \tfrac{3}{4} u_{1}^{2} x y_{1} + \tfrac{3}{2} u u_{1} x_{1} y_{1} + \tfrac{3}{4} u u_{1}^{2} y_{2} + \tfrac{1}{8} u_{1}^{2} u_{112} + \tfrac{1}{8} u u_{11} u_{112} - 3 u_{1} u_{11} u_{12}\\
\hphantom{\cL_{23} =}{}
+ \tfrac{1}{24} u u_{111} u_{12} + \tfrac{3}{8} u_{11}^{2} u_{2} + \tfrac{13}{8} u_{1} u_{111} u_{2} - \tfrac{1}{6} u u_{1111} u_{2} + \tfrac{1}{8} u_{11} u_{111} x - \tfrac{1}{8} u_{1} u_{1111} x\\
\hphantom{\cL_{23} =}{}
- \tfrac{1}{2} u_{1} u_{13} x - \tfrac{1}{2} u_{11} u_{3} x - \tfrac{7}{8} u_{11}^{2} x_{1} - \tfrac{7}{8} u_{1} u_{111} x_{1} + \tfrac{1}{6} u u_{1111} x_{1} + \tfrac{3}{2} u_{1} u_{12} x_{1} - \tfrac{2}{3} u u_{13} x_{1}\\
\hphantom{\cL_{23} =}{}
- \tfrac{3}{4} u_{11} u_{2} x_{1} + 4 u_{1} u_{3} x_{1} + \tfrac{3}{4} u_{11} x_{1}^{2} + x_{1}^{3} + \tfrac{7}{2} u_{1} u_{11} x_{11} - \tfrac{1}{24} u u_{111} x_{11} - \tfrac{3}{2} u_{1} u_{2} x_{11}\\
\hphantom{\cL_{23} =}{}
+ \tfrac{2}{3} u u_{3} x_{11} + \tfrac{3}{4} u_{1} x_{1} x_{11} + \tfrac{5}{8} u_{1}^{2} x_{111} - \tfrac{3}{8} u u_{11} x_{111} + \tfrac{1}{4} u u_{2} x_{111} - \tfrac{1}{4} u x_{1} x_{111} \\
\hphantom{\cL_{23} =}{}+ \tfrac{1}{4} u u_{1} x_{1111}
+ \tfrac{3}{16} u^{2} x_{11111} + \tfrac{1}{8} u^{2} x_{1112} + \tfrac{1}{2} u u_{12} x_{2} + \tfrac{3}{4} u_{1}^{2} x_{12} - \tfrac{1}{2} u u_{2} x_{12} - \tfrac{3}{8} u_{1} x x_{12}\\
\hphantom{\cL_{23} =}{} + \tfrac{7}{2} u x_{1} x_{12}
+ \tfrac{3}{16} u^{2} x_{122} + \tfrac{1}{2} u u_{1} x_{13} + \tfrac{3}{8} u_{1} u_{2} x_{2} + \tfrac{3}{8} u_{11} x x_{2} - \tfrac{27}{8} u_{1} x_{1} x_{2} - \tfrac{1}{2} u x_{11} x_{2}\\
\hphantom{\cL_{23} =}{}
- \tfrac{1}{2} u u_{11} x_{3}- \tfrac{1}{2} u x_{1} x_{3} + \tfrac{1}{8} u_{1111} x y - \tfrac{3}{4} u_{111} x_{1} y - \tfrac{9}{4} u_{11} x_{11} y - \tfrac{3}{4} x_{1} x_{11} y - \tfrac{7}{4} u_{1} x_{111} y \\
\hphantom{\cL_{23} =}{}- \tfrac{3}{8} u x_{1111} y
- \tfrac{3}{4} u_{1} x_{12} y + \tfrac{3}{8} x x_{12} y + \tfrac{3}{8} x_{1} x_{2} y - \tfrac{1}{8} u_{111} x y_{1} + \tfrac{1}{8} u_{11} x_{1} y_{1} + \tfrac{3}{4} x_{1}^{2} y_{1} \\
\hphantom{\cL_{23} =}{}+ \tfrac{1}{4} u_{1} x_{11} y_{1}
+ \tfrac{3}{8} u x_{111} y_{1} + \tfrac{3}{4} u x_{12} y_{1} - \tfrac{3}{8} x x_{2} y_{1} - \tfrac{1}{8} u_{1} x_{1} y_{11} - \tfrac{1}{8} u x_{11} y_{11} - \tfrac{1}{8} u u_{1} y_{112}\\
\hphantom{\cL_{23} =}{} - \tfrac{1}{4} u_{1}^{2} y_{12}
+ \tfrac{1}{8} u u_{11} y_{12} + \tfrac{3}{4} u x_{1} y_{12} - \tfrac{1}{2} u x y_{13} + \tfrac{3}{8} u x_{2} y_{2} + \tfrac{1}{2} u_{1} x y_{3} + \tfrac{3}{16} u_{111} u_{112}\\
\hphantom{\cL_{23} =}{} - \tfrac{1}{8} u_{1111} u_{12}
+ \tfrac{1}{16} u_{11111} u_{2} - \tfrac{1}{16} u_{11111} x_{1} + \tfrac{1}{8} u_{1111} x_{11} - \tfrac{1}{8} u_{112} x_{11} - \tfrac{1}{4} u_{13} x_{11} \\
\hphantom{\cL_{23} =}{}- \tfrac{3}{16} u_{111} x_{111}
+ \tfrac{1}{4} u_{12} x_{111} + \tfrac{3}{4} u_{3} x_{111} - \tfrac{1}{4} x_{11} x_{111} - \tfrac{1}{16} u_{11} x_{1111} - \tfrac{1}{4} u_{2} x_{1111}\\
\hphantom{\cL_{23} =}{}+ \tfrac{1}{4} x_{1} x_{1111} + \tfrac{3}{16} u_{1} x_{11111}
+ \tfrac{1}{8} u_{1} x_{1112} - \tfrac{1}{8} u_{11} x_{112} - \tfrac{5}{16} u_{2} x_{112} + \tfrac{5}{16} x_{1} x_{112} + \tfrac{1}{4} u_{1} x_{113} \\
\hphantom{\cL_{23} =}{}+ \tfrac{3}{16} u_{12} x_{12} - \tfrac{1}{4} u_{3} x_{12}
+ \tfrac{3}{8} x_{11} x_{12} + \tfrac{1}{4} u_{1} x_{122} + \tfrac{1}{4} u x_{123} + \tfrac{1}{2} x_{1} x_{13} + \tfrac{1}{4} u_{13} x_{2}\\
\hphantom{\cL_{23} =}{} - \tfrac{9}{16} x_{111} x_{2} - \tfrac{1}{4} u_{11} x_{22} + x x_{23}
- \tfrac{1}{2} x_{11} x_{3} + \tfrac{1}{2} x_{2} x_{3} - \tfrac{3}{16} x_{11111} y - \tfrac{1}{8} x_{1112} y \\
\hphantom{\cL_{23} =}{}+ \tfrac{1}{16} x_{1111} y_{1} + \tfrac{1}{8} x_{112} y_{1} - \tfrac{1}{8} x_{12} y_{11}
- \tfrac{1}{8} x_{1} y_{112} + \tfrac{1}{8} x_{11} y_{12} - \tfrac{3}{16} x_{2} y_{12} + \tfrac{1}{4} x_{1} y_{13}\\
\hphantom{\cL_{23} =}{} + \tfrac{3}{16} x_{12} y_{2} + \tfrac{1}{4} x_{2} y_{3}
\end{gather*}
}
\caption{Coefficients of the Lagrangian 2-form for the 4th Gelfand--Dickey hierarchy.}\label{table-GD4-Lagrangian}
\end{table}

\newpage
\section{Conclusions}

The continuum limit procedure developed in \cite{vermeeren2019continuum} can be applied to many discrete pluri-Lag\-ran\-gian systems. Starting from equations from the ABS list and from discrete Gelfand--Dickey equations, we obtained pluri-Lagrangian structures for among others the Krichever--Novikov, Korteweg--de Vries, and Gelfand--Dickey hierarchies. For many of these hierarchies this is the first time a~pluri-Lagrangian structure has been given. The success rate of our method suggests that it is a~useful tool to study connections between discrete and continuous integrable systems. Nevertheless, there are lattice systems to which the continuum limit procedure does not seem to apply, an issue which deserves additional investigation. Previously, only a few examples of continuous pluri-Lagrangian 2-form systems were known. The multitude of new examples given in this work contributes to the evidence that a pluri-Lagrangian structure is a relevant attribute of integrability.

\subsection*{Acknowledgments}

This research was supported by the DFG through the SFB/TRR 109, `Discretization in Geo\-metry and Dynamics'. The author is grateful to the anonymous referees for their insightful comments.

\pdfbookmark[1]{References}{ref}
\LastPageEnding


\begin{thebibliography}{99}
\footnotesize\itemsep=0pt

\bibitem{abramowitz1964handbook}
Abramowitz M., Stegun I.A., Handbook of mathematical functions with formulas,
 graphs, and mathematical tables, \textit{National Bureau of Standards Applied
 Mathematics Series}, Vol.~55, U.S.~Government Printing Office, Washington,
 D.C., 1964.

\bibitem{adler2003classification}
Adler V.E., Bobenko A.I., Suris Yu.B., Classification of integrable equations on
 quad-graphs. {T}he consistency approach, \href{https://doi.org/10.1007/s00220-002-0762-8}{\textit{Comm. Math. Phys.}}
 \textbf{233} (2003), 513--543, \href{https://arxiv.org/abs/nlin.SI/0202024}{arXiv:nlin.SI/0202024}.

\bibitem{adler2004q4}
Adler V.E., Suris Yu.B., {${\rm Q}_4$}: integrable master equation related to an
 elliptic curve, \href{https://doi.org/10.1155/S107379280413273X}{\textit{Int. Math. Res. Not.}} \textbf{2004} (2004),
 2523--2553.

\bibitem{bobenko2010lagrangian}
Bobenko A.I., Suris Yu.B., On the {L}agrangian structure of integrable
 quad-equations, \href{https://doi.org/10.1007/s11005-010-0381-9}{\textit{Lett. Math. Phys.}} \textbf{92} (2010), 17--31,
 \href{https://arxiv.org/abs/0912.2464}{arXiv:0912.2464}.

\bibitem{boll2014integrability}
Boll R., Petrera M., Suris Yu.B., What is integrability of discrete variational
 systems?, \href{https://doi.org/10.1098/rspa.2013.0550}{\textit{Proc.~R. Soc. Lond. Ser.~A Math. Phys. Eng. Sci.}}
 \textbf{470} (2014), 20130550, 15~pages, \href{https://arxiv.org/abs/1307.0523}{arXiv:1307.0523}.

\bibitem{dickey2003soliton}
Dickey L.A., Soliton equations and {H}amiltonian systems, 2nd ed., \textit{Advanced
 Series in Mathematical Physics}, Vol.~26, \href{https://doi.org/10.1142/5108}{World Sci. Publ. Co.,
 Inc.}, River Edge, NJ, 2003.

\bibitem{fu2017reductions}
Fu W., Nijhoff F.W., On reductions of the discrete
 {K}adomtsev--{P}etviashvili-type equations, \href{https://doi.org/10.1088/1751-8121/aa8a97}{\textit{J.~Phys.~A: Math. Theor.}}
 \textbf{50} (2017), 505203, 21~pages, \href{https://arxiv.org/abs/1705.04819}{arXiv:1705.04819}.

\bibitem{fu2018linear}
Fu W., Nijhoff F.W., Linear integral equations, infinite matrices, and soliton
 hierarchies, \href{https://doi.org/10.1063/1.5046684}{\textit{J.~Math. Phys.}} \textbf{59} (2018), 071101, 28~pages,
 \href{https://arxiv.org/abs/1703.08137}{arXiv:1703.08137}.

\bibitem{hietarinta2016discrete}
Hietarinta J., Joshi N., Nijhoff F.W., Discrete systems and integrability,
 \textit{Cambridge Texts in Applied Ma\-the\-matics}, \href{https://doi.org/10.1017/CBO9781107337411}{Cambridge University Press},
 Cambridge, 2016.

\bibitem{hietarinta2009soliton}
Hietarinta J., Zhang D.-J., Soliton solutions for {ABS} lattice equations.
 {II}.~{C}asoratians and bilinearization, \href{https://doi.org/10.1088/1751-8113/42/40/404006}{\textit{J.~Phys.~A: Math. Theor.}}
 \textbf{42} (2009), 404006, 30~pages, \href{https://arxiv.org/abs/0903.1717}{arXiv:0903.1717}.

\bibitem{hirota1977nonlinear3}
Hirota R., Nonlinear partial difference equations. {III}.~{D}iscrete
 sine-{G}ordon equation, \href{https://doi.org/10.1143/JPSJ.43.2079}{\textit{J.~Phys. Soc. Japan}} \textbf{43} (1977),
 2079--2086.

\bibitem{krichever1980holomorphic}
Krichever I.M., Novikov S.P., Holomorphic bundles over algebraic curves and
 non-linear equations, \href{https://doi.org/10.1070/RM1980v035n06ABEH001974}{\textit{Russian Math. Surveys}} \textbf{35} (1980),
 no.~6, 53--79.

\bibitem{lobb2009lagrangian}
Lobb S., Nijhoff F., Lagrangian multiforms and multidimensional consistency,
 \href{https://doi.org/10.1088/1751-8113/42/45/454013}{\textit{J.~Phys.~A: Math. Theor.}} \textbf{42} (2009), 454013, 18~pages,
 \href{https://arxiv.org/abs/0903.4086}{arXiv:0903.4086}.

\bibitem{lobb2010lagrangian}
Lobb S.B., Nijhoff F.W., Lagrangian multiform structure for the lattice
 {G}el'fand--{D}ikii hierarchy, \href{https://doi.org/10.1088/1751-8113/43/7/072003}{\textit{J.~Phys.~A: Math. Theor.}} \textbf{43}
 (2010), 072003, 11~pages, \href{https://arxiv.org/abs/0911.1234}{arXiv:0911.1234}.

\bibitem{miura1968korteweg}
Miura R.M., Korteweg--de {V}ries equation and generalizations.
 {I}.~{A}~remarkable explicit nonlinear transformation, \href{https://doi.org/10.1063/1.1664700}{\textit{J.~Math.
 Phys.}} \textbf{9} (1968), 1202--1204.

\bibitem{miwa1982hirota}
Miwa T., On {H}irota's difference equations, \href{https://doi.org/10.3792/pjaa.58.9}{\textit{Proc. Japan Acad. Ser.~A
 Math. Sci.}} \textbf{58} (1982), 9--12.

\bibitem{nijhoff2002lax}
Nijhoff F.W., Lax pair for the {A}dler (lattice {K}richever--{N}ovikov) system,
 \href{https://doi.org/10.1016/S0375-9601(02)00287-6}{\textit{Phys. Lett.~A}} \textbf{297} (2002), 49--58, \href{https://arxiv.org/abs/nlin.SI/0110027}{arXiv:nlin.SI/0110027}.

\bibitem{nijhoff2009soliton}
Nijhoff F., Atkinson J., Hietarinta J., Soliton solutions for {ABS} lattice
 equations. {I}.~{C}auchy matrix approach, \href{https://doi.org/10.1088/1751-8113/42/40/404005}{\textit{J.~Phys.~A: Math. Theor.}}
 \textbf{42} (2009), 404005, 34~pages, \href{https://arxiv.org/abs/0902.4873}{arXiv:0902.4873}.

\bibitem{nijhoff1995discrete}
Nijhoff F., Capel H., The discrete {K}orteweg--de {V}ries equation,
 \href{https://doi.org/10.1007/BF00994631}{\textit{Acta Appl. Math.}} \textbf{39} (1995), 133--158.

\bibitem{nijhoff1992lattice}
Nijhoff F.W., Papageorgiou V.G., Capel H.W., Quispel G.R.W., The lattice
 {G}el'fand--{D}iki\u{\i} hierarchy, \href{https://doi.org/10.1088/0266-5611/8/4/010}{\textit{Inverse Problems}} \textbf{8}
 (1992), 597--621.

\bibitem{quispel1984linear}
Quispel G.R.W., Nijhoff F.W., Capel H.W., van~der Linden J., Linear integral
 equations and nonlinear difference-difference equations, \href{https://doi.org/10.1016/0378-4371(84)90059-1}{\textit{Phys.~A}}
 \textbf{125} (1984), 344--380.

\bibitem{sagemath}
{S}age{M}ath, the {S}age {M}athematics {S}oftware {S}ystem ({V}ersion 7.5.1),
 2017, \url{http://www.sagemath.org}.

\bibitem{suris2016lagrangian}
Suris Yu.B., Vermeeren M., On the {L}agrangian structure of integrable
 hierarchies, in Advances in Discrete Differential Geometry, \href{https://doi.org/10.1007/978-3-662-50447-5_11}{Springer},
 Berlin, 2016, 347--378, \href{https://arxiv.org/abs/1510.03724}{arXiv:1510.03724}.

\bibitem{tongas2005boussinesq}
Tongas A., Nijhoff F., The {B}oussinesq integrable system: compatible lattice
 and continuum structures, \href{https://doi.org/10.1017/S0017089505002417}{\textit{Glasg. Math.~J.}} \textbf{47} (2005),
 205--219, \href{https://arxiv.org/abs/nlin.SI/0402053}{arXiv:nlin.SI/0402053}.

\bibitem{vermeeren2019continuum}
Vermeeren M., Continuum limits of pluri-{L}agrangian systems,
 \href{https://doi.org/10.1093/integr/xyy020}{\textit{J.~Integrable Syst.}} \textbf{4} (2019), xyy020, 34~pages,
 \href{https://arxiv.org/abs/1706.06830}{arXiv:1706.06830}.

\bibitem{walker2001similarity}
Walker A.J., Similarity reductions and integrable lattice equations, Ph.D.~Thesis, University of Leeds, 2001, available at
 \url{http://etheses.whiterose.ac.uk/7190/}.

\bibitem{wiersma1987lattice}
Wiersma G.L., Capel H.W., Lattice equations, hierarchies and {H}amiltonian
 structures, \href{https://doi.org/10.1016/0378-4371(87)90024-0}{\textit{Phys.~A}} \textbf{142} (1987), 199--244.

\bibitem{yoo2011discrete}
Yoo-Kong S., Lobb S., Nijhoff F., Discrete-time {C}alogero--{M}oser system and
 {L}agrangian 1-form structure, \href{https://doi.org/10.1088/1751-8113/44/36/365203}{\textit{J.~Phys.~A: Math. Theor.}} \textbf{44}
 (2011), 365203, 39~pages, \href{https://arxiv.org/abs/1102.0663}{arXiv:1102.0663}.

\bibitem{zagier2007dilogarithm}
Zagier D., The dilogarithm function, in Frontiers in Number Theory, Physics,
 and Geometry.~{II}, \href{https://doi.org/10.1007/978-3-540-30308-4_1}{Springer}, Berlin, 2007, 3--65.

\bibitem{zhang2017rational}
Zhang D., Zhang D.-J., Rational solutions to the {ABS} list: transformation
 approach, \href{https://doi.org/10.3842/SIGMA.2017.078}{\textit{SIGMA}} \textbf{13} (2017), 078, 24~pages,
 \href{https://arxiv.org/abs/1702.01266}{arXiv:1702.01266}.

\bibitem{zhao2019rational}
Zhao S.-L., Zhang D.-J., Rational solutions to {${\rm Q}3_{\delta}$} in the
 {A}dler--{B}obenko--{S}uris list and degenerations, \href{https://doi.org/10.1080/14029251.2019.1544793}{\textit{J.~Nonlinear
 Math. Phys.}} \textbf{26} (2019), 107--132, \href{https://arxiv.org/abs/1703.05669}{arXiv:1703.05669}.

\end{thebibliography}
\end{document}